  \theoremstyle{definition}
  \newtheorem{defn}{\protect\definitionname}
\theoremstyle{plain}
\newtheorem{thm}{\protect\theoremname}
  \theoremstyle{plain}
  \newtheorem{prop}{\protect\propositionname}
  \theoremstyle{plain}
  \theoremstyle{remark}
  \newtheorem{claim}{\protect\claimname}
  \theoremstyle{plain}
  \newtheorem{lem}{\protect\lemmaname}
  \providecommand{\claimname}{Claim}
  \providecommand{\definitionname}{Definition}
  \providecommand{\lemmaname}{Lemma}
  \providecommand{\notename}{Note}
\providecommand{\theoremname}{Theorem}
\newtheorem*{note*}{\protect\notename}
\newcommand{\model}{uniform push model\xspace}
\newcommand{\nfrac}[2]{#1/#2}
\newcommand{\totballs}{h}
\newcommand{\colballs}[1]{h_{#1}}
\newcommand{\longconfig}{opinion distribution\xspace}
\newcommand{\config}{opinion distribution\xspace}
\newcommand{\bc}{\mathbf{c}}
\newcommand{\st}{such that\xspace}
\newcommand{\wrt}{with respect to\xspace}
\newcommand{\ppapx}{process $\mathbf{P}$\xspace}
\newcommand{\porig}{process $\mathbf{O}$\xspace}
\newcommandx{\rec}[2][1=u,2=j]{R_{#2}(#1)}
\newcommand{\M}[1][j]{\ensuremath{M_{#1}}\xspace}
\newcommand{\N}[1][j]{\ensuremath{N_{#1}}\xspace}
\newcommand{\pbib}{process $\mathbf{B}$\xspace}
\newcommand{\realizsetN}{\mathcal{N}}
\newcommand{\bigo}{{\mathcal O}}
\newcommand{\maj}[1]{\mathrm{maj}(#1)}
\newcommand{\mode}[1]{\mathrm{mode}(#1)}
\newcommand{\Sam}[1]{\mathcal{S}(#1)}
\newcommand{\eventhighprob}{ \mathcal{E} }
\newcommand{\ceventhighprob}{ \bar\eventhighprob}
\newcommand{\constpoisapx}{b}
\newcommand{\constfour}{c}
\newcommand{\pc}{j}
  \providecommand{\claimname}{Claim}
  \providecommand{\definitionname}{Definition}
  \providecommand{\lemmaname}{Lemma}
  \providecommand{\propositionname}{Proposition}
\providecommand{\corollaryname}{Corollary}
\providecommand{\theoremname}{Theorem}
\begin{document}

\title{Noisy Rumor Spreading and Plurality Consensus}

\author{Pierre Fraigniaud\thanks{Additional support from the ANR project
    DISPLEXITY and from the INRIA project GANG.}\\
 {\small{} Laboratoire d'Informatique Algorithmique}\\
 {\small{}CNRS and University Paris Diderot}\\
 {\small{}France} \and Emanuele Natale\thanks{This work was done in part while
 the author was visiting the Simons Institute for the Theory of Computing.}\\
 {\small{}Dipartimento di Informatica}\\
 {\small{}Sapienza Universit\`a di Roma}\\
 {\small{}Italy} }
 
\maketitle

\begin{abstract}
Error-correcting codes are efficient methods for handling \emph{noisy}
communication channels in the context of technological networks. However,
such elaborate methods differ a lot  from the unsophisticated
way biological entities are supposed to communicate. Yet, it has been recently shown
by Feinerman, Haeupler, and Korman {[}PODC 2014{]} that complex coordination
tasks such as \emph{rumor spreading} and \emph{majority consensus}
can plausibly be achieved in biological systems subject to noisy communication
channels, where every message transferred through a channel remains
intact with small probability $\frac{1}{2}+\epsilon$, without using
coding techniques. This result is a considerable step towards a better
understanding of the way biological entities may cooperate. It has nevertheless 
been established only in the case of 2-valued \emph{opinions}: rumor
spreading aims at broadcasting a single-bit opinion to all nodes,
and majority consensus aims at leading all nodes to adopt the single-bit
opinion that was initially present in the system with (relative) majority.
In this paper, we extend this previous work to $k$-valued opinions,
for any constant $k\geq2$.

Our extension requires to address a series of important issues, some
conceptual, others technical. We had to  entirely revisit the notion of noise,
for handling channels carrying $k$-\emph{valued} messages. In fact, we
precisely characterize the type of noise patterns for which plurality consensus
is solvable. Also, a key result employed in the bivalued case by Feinerman et
al. is an estimate of the probability of observing the most frequent opinion
from observing  the mode of a small sample. We generalize this result to the
multivalued case by providing a new analytical proof for the bivalued case that
is amenable to be extended, by induction, and that is of independent interest.
\end{abstract}
\bigskip


\section{Introduction}


\subsection{Context and Objective}
\label{sub:context}


To guarantee reliable communication over a network in the presence
of noise is the main goal of Network Information Theory~\cite{el2011network}.
Thanks to the achievements of this theory, the impact of noise can
often be drastically reduced to almost zero by employing \emph{error-correcting
codes}, which are practical methods whenever dealing with artificial
entities. However, as observed in~\cite{FHK14},
the situation is radically different for scenarios
in which the computational entities are biological. Indeed, from a
biological perspective, a computational process can be considered
``simple'' only if it consists of very basic primitive operations, and
is extremely lightweight. As a consequence, it is unlikely that biological
entities are employing techniques like error-correcting codes to reduce
the impact of noise in communications between them. Yet, biological
signals are subject to noise, when generated, transmitted, and received.
This rises the intriguing question of how entities in biological ensembles
can cooperate in presence of noisy communications, but in absence
of mechanisms such as error-correcting codes.

An important step toward understanding communications in biological
ensembles has been achieved recently in~\cite{FHK14},
which showed how it is possible to cope with noisy communications
in absence of coding mechanisms for solving complex tasks such as
\emph{rumor-spreading} and \emph{majority consensus}. Such a result
provides highly valuable hints on how complex tasks can be achieved
in frameworks such as the immune system, bacteria populations, or
super-organisms of social insects, despite the presence of noisy communications.

In the case of rumor-spreading, \cite{FHK14}
assumes that a source-node initially handles a bit, set to some binary
value, called the \emph{correct opinion}. This opinion has to be transmitted
to all nodes, in a noisy environment, modeled as a complete network
with unreliable links. More precisely, messages are transmitted in
the network according to the classical \emph{\model}
~\cite{Demers:1987:EAR:41840.41841,karp2000randomized,Pittel:1987:SR:37387.37400}
where, at each round, every node can send one binary opinion to a
neighbor chosen uniformly and independently at random and, before
reaching the receiver, that opinion is flipped with probability at
most $\frac{1}{2}-\epsilon$ with $\epsilon>0$. It is proved that,
even in this very noisy setting, the rumor-spreading problem can be
solved quite efficiently. Specifically, \cite{FHK14}
provides an algorithm that solves the noisy rumor-spreading problem
in $O(\frac{1}{\epsilon^{2}}\log n)$ communication rounds,
with high probability\footnote{A series of events $\mathcal{E}_{n}$, $n\geq 1$, hold w.h.p. if 
$\Pr(\mathcal{E}_{n})\ge1-O(1/n^{c})$ for some $c>0$.} (w.h.p.) in $n$-node networks, using $O(\log\log n+\log(1/\epsilon))$
bits of memory per node. Again, this algorithm exchanges solely opinions
between nodes. 

In the case of majority consensus, \cite{FHK14}
assumes that some nodes are supporting opinion~0, some nodes are
supporting opinion~1, and some other nodes are supporting no opinion.
The objective is that all nodes eventually support the initially most
frequent opinion (0 or 1). More precisely, let $A$ be the set of
nodes with opinion, and let $b\in\{0,1\}$ be the majority opinion
in $A$. The \emph{majority bias} of $A$ is defined as $\frac{1}{2}(|A_{b}|-|A_{\bar{b}}|)/|A|$
where $A_{i}$ is the set of nodes with opinion~$i\in\{0,1\}$. In
the very same noisy communication model as above, \cite{FHK14}
provides an algorithm that solves the noisy majority consensus problem
for $|A|=\Omega(\frac{1}{\epsilon^{2}}\log n)$ with majority-bias
$\Omega(\sqrt{\log n/|A|})$. The algorithm runs in $O(\frac{1}{\epsilon^{2}}\log n)$
rounds, w.h.p., in $n$-node networks, using $O(\log\log n+\log(1/\epsilon))$
bits of memory per node. As for the case of rumor spreading, the algorithm
exchanges solely opinions between nodes. In fact, the latter algorithm
for majority consensus is used as a subroutine for solving the rumor-spreading problem. 
Note that the majority consensus algorithm of \cite{FHK14}
requires that the nodes are initially aware of the size of $A$.

According to \cite{FHK14}, both algorithms
are optimal, since both rumor-spreading and majority consensus require
$\Omega(\frac{1}{\epsilon^{2}}\log n)$ rounds w.h.p.\ in
$n$-node networks.

Our objective is to extend the work of ~\cite{FHK14}
to the natural case of an arbitrary number of opinions, to go beyond 
a proof of concept. The problem that results from this extension is an instance
of the \emph{plurality consensus} problem in the presence of noise, i.e.,
the problem of making the system converging to the initially most frequent
opinion (i.e., the \emph{plurality} opinion). Indeed, the plurality consensus 
problem  naturally arises in several  biological settings, 
typically for choosing between different directions for a
flock of birds \cite{Dolev}, different speeds for a school of fish
\cite{FishConsensus}, or different nesting sites for ants \cite{HouseHunt}.
The computation of the most frequent value has also been observed
in biological cells \cite{cardelli2012cell}.

\subsection{Our Contribution}


\subsubsection{Our Results}

We generalize the results in \cite{FHK14}
to the setting in which an arbitrary large number $k$ of opinions is
present in the system. In the context of rumor spreading, the correct
opinion is a value $i\in\{1,\dots,k\}$, for any constant $k\geq2$.
Initially, one node supports this opinion $i$, and the other nodes
have no opinions. The nodes must exchange opinions so that, eventually,
all nodes support the correct opinion~$i$. In the context of (relative)
majority consensus, also known as \emph{plurality consensus}, each
node $u$ initially supports one opinion $i_{u}\in\{1,\dots,k\}$,
or has no opinion. The objective is that all nodes eventually adopt
the \emph{plurality opinion} (i.e., the opinion initially held by
more nodes than any other, but not necessarily by an overall majority
of nodes). As in \cite{FHK14}, we restrict
ourselves to ``natural'' algorithms \cite{chazelle}, that is, algorithms
in which nodes only exchange opinions in a straightforward manner
(i.e., they do not use the opinions to encode, e.g., part of their
internal state). For both problems, the difficulty comes from the
fact that every opinion can be modified during its traversal of any
link, and switched at random to any other opinion. In short, we prove
that there are algorithms solving the noisy rumor spreading
problem and the noisy plurality consensus problem for multiple opinions,
with the same performances and probabilistic guarantees as the algorithms
for binary opinions in~\cite{FHK14}.


\subsubsection{The technical challenges}


Generalizing noisy rumor spreading and noisy majority consensus to
more than just two opinions requires to address a series of issues,
some conceptual, others technical. 

Conceptually, one needs first to redefine
the notion of noise. In the case of binary opinions, the noise can
just flip an opinion to its complement. In the case of multiple opinions,
an opinion $i$ subject to a modification is switched to another opinion
$i'$, but there are many ways of picking $i'$. For instance, $i'$
can be picked uniformly at random (u.a.r.) among all opinions. Or,
$i'$ could be picked as one of the ``close opinions'', say, either
$i+1$ or $i-1$ modulo~$k$. Or, $i'$ could be ``reset'' to,
say, $i=1$. In fact, there are very many alternatives, and
not all enable rumor spreading and plurality consensus to be solved.
One of our contributions is to characterize \emph{noise matrices}
$P=(p_{i,j})$, where $p_{i,j}$ is the probability that opinion $i$
is switched to opinion $j$, for which these two problems are efficiently
solvable. Similar issues arise for, e.g., redefining the majority
bias into a \emph{plurality bias}.

The technical difficulties are manifold. 
A key ingredient of the analysis in \cite{FHK14}
is a fine estimate of how nodes can mitigate the impact of noise by
observing the opinions of \emph{many }other nodes, and then considering
the mode of such sample.
Their proof relies on the fact that for the binary opinion case, given
a sample of size $\gamma$, the number of 1s and 0s in the sample
sum up to $\gamma$. Even for the ternary opinion case, the additional
\emph{degree of freedom }in the sample radically changes the nature
of the problem, and the impact of noise is statistically far more difficult to handle.

Also, to address the multivalued case, we had to cope with the fact that, in
the \model, the messages received by nodes at every round are
correlated. To see why, consider an instance of the system in which a certain
opinion $b$ is held by one node only, and there is no noise at all. In one
round, only one other node can receive $b$. It follows that if a certain node
$u$ has received $b$, no other nodes have received it. Thus, the messages each
node receives are not independent. In \cite{FHK14}, probability concentration
results are claimed for random variables (r.v.) that depend on such messages,
using Chernoff bounds. However, Chernoff bounds have been proved to hold only
for random variables whose stochastic dependence belongs to a very limited
class (see for example \cite{dubhashi1998concentration}). In \cite{FHK15}, it
is stated that the binary random variables on which the Chernoff bound is
applied satisfy the property of being negatively 1-correlated (see Section 1.7
in \cite{FHK15} for a formal definition).  
In our analysis, we show instead how to obtain concentration of probability in
this dependent setting by leveraging Poisson approximation techniques. Our
approach has the following advantage: instead of showing that the Chernoff
bound can be directly applied to the specific involved random variables, we
show that the execution of the given protocol, on the \model, can be
tightly approximated with the execution of the same protocol on a another
suitable communication model, that is not affected by the stochastic
correlation that affects the \model.
\subsection{Other Related work\label{sub:related}}

By extending the work of \cite{FHK14},
we contribute to the theoretical understanding of how communications
and interactions function in biological distributed systems, from an
algorithmic perspective 
\cite{AAER07,AR07,beeping1,beeping2,AAE07,korman2014confidence}.
We refer the reader to \cite{FHK14}
for a discussion on the computational aspects of biological distributed
systems, an overview of the rumor spreading problem in distributed
computing, as well as its biological significance in the presence of noise.
In this section, we mainly discuss the previous technical contributions from the literature
 related to the  novelty of our work, that is the extension to the case of several
different opinions. 

 In the context of population
protocol, the problem of achieving majority consensus in the binary
case has been solved by employing a simple protocol called \emph{undecided
state dynamic} \cite{AAE07}. 
In the \model, the binary
majority consensus problem can be solved very efficiently as a consequence
of a more general result about computing the median of the initial
opinions \cite{DGMSS11}. Still in the \model, the undecided
state dynamic has been analyzed in the case of an arbitrarily
large number of opinions, which may even be a function of the number
of agents in the system \cite{Becchetti:2015:PCG:2722129.2722156}. A similar result
has been obtained for another elementary protocol, so-called \emph{3-majority
dynamics}, in which, at each round, each node samples the opinion of
three random nodes, and adopts the most frequent opinion among these
three \cite{BCNPST13}. The 3-majority dynamics has also been shown to be fault-tolerant
against an adversary that can change up to $O(\sqrt{n})$
agents at each round \cite{DGMSS11, BCNPST13}. Other works have analyzed the
undecided state dynamics in asynchronous models with a constant number
of opinions \cite{DV12,jung2012distributed,PVV09}, and the $h$-majority
dynamics (or slight variations of it) on different graph classes
in the \model \cite{CER14,AD12}.
Note that the above protocols converge to the correct value with high 
probability. In \cite{AGV15}, the tradeoff between deterministic success and
convergence speed for protocols solving the binary majority consensus problem
in population protocols is investigated.

A general result by Kempe et al.~\cite{KDG03} shows how to compute a large class
of functions in the \model. However, their
protocol requires the nodes to send slightly more complex messages
than their sole current opinion, and its effectiveness heavily relies
on a potential function argument that does not hold in the presence
of noise. 

To the best of our knowledge, we are the first considering the plurality
consensus problem in the presence of noise. 


\section{Model and formal statement of our results}
\label{sec:The-Model} 

In this section we formally define the communication model, the main definitions, the 
investigated problems and our contribution to them. 

We do not provide a definition of what is a biologically feasible protocol,
since the computational investigation with this respect is still too premature
for such an attempt. As discussed in Section \ref{sub:context}, intuitively we
look for protocols that are simple enough to be plausible communication
strategies for primitive biological system. As the reader can see in section
\ref{sub:formal_results} and \ref{sub:protocol}, we consider a natural
generalization of the protocol given in \cite{FHK15}, which is plainly an
elementary combination of sampling and majority operations.

\subsection{Communication model and definition of the problems}

The communication model we consider is essentially the \emph{\model}~\cite{Demers:1987:EAR:41840.41841,karp2000randomized,Pittel:1987:SR:37387.37400},
where in each (synchronous) round each agent can send (\emph{push)} a message
to another agent chosen uniformly at random. This occurs without having the
sender or the receiver learning about each other's identity. Note that it may
happen that several agents push a message to the same node $u$ at the same
round. In the latter case we assume that the nodes receive them in a random
order; for a detailed discussion regarding this assumption, we refer the reader
to Appendix \ref{apx:simultaneous_messages}.

We study the problems of rumor-spreading and plurality consensus. In both
cases, we assume that nodes can support opinions represented by an integer in
$[k]=\{1,\dots,k\}$. Additionally, there may be \emph{undecided} nodes that do
not support any opinion, which represents nodes that are not actively aware
that the system has started to solve the problem; thus, undecided nodes are not
allowed to send any message before receiving any of them.
\begin{itemize}
\item In rumor spreading, initially, one node, called the source, has an
opinion $m\in\{1,\dots,k\}$, called the \emph{correct opinion}. All
the other nodes have no opinion. The objective is to design a protocol
insuring that, after a certain number of communication rounds, every
node has the correct opinion~$m$. 
\item In plurality consensus, initially, for every $i\in\{1,\dots,k\}$, a set
    $A_{i}$ of nodes have opinion~$i$. The sets $A_{i}$, $i=1,\dots,k$, are
    pairwise disjoint, and their union does not need to cover all nodes, i.e.,
    there may be some \emph{undecided} nodes with no opinion initially. The
    objective is to design a protocol insuring that, after a certain number of
    communication rounds, every node has the plurality opinion, that is, the
    opinion $m$ with relative majority in the initial setting (i.e.,
    $|A_{m}|>|A_{j}|$ for any $j\neq m$). 
\end{itemize}

Observe that the rumor-spreading problem is a special case of the plurality
consensus problem with $|A_{m}|=1$ and $|A_{j}|=0$ for any $j\neq m$.

Following the guidelines of \cite{FHK14},
we work under two constraints: 
\begin{enumerate}
\item \label{enu:msgs_in_1..k} We restrict ourselves to protocols in which
each node can only transmit opinions, i.e., every message is 
an integer in $\{1,\dots,k\}$. 
\item \label{enu:noise_in_msgs} Transmissions are subject to noise, that
is, for every round, and for every node $u$, if an opinion $i\in\{1,\dots,k\}$
is transmitted to node $u$ during that round, then node $u$ will
receive message $j\in\{1,\dots,k\}$ with probability $p_{i,j}\geq0$,
where $\sum_{j=1}^{k}p_{i,j}=1$. 
\end{enumerate}
The \emph{noisy push model } is the \model together with
the previous two constraints.
The probabilities $\{p_{i,j}\}_{i,j\in[k]}$ can be seen as a transition
matrix, called the \emph{noise matrix}, and denoted by $P=(p_{i,j})_{i,j\in[k]}$.
The noise matrix in \cite{FHK14} is
simply 
\begin{equation}
P=\left(\begin{array}{cc}
\frac{1}{2}+\epsilon & \frac{1}{2}-\epsilon\\
\frac{1}{2}-\epsilon & \frac{1}{2}+\epsilon
\end{array}\right).\label{eq:binary_n_m}
\end{equation}


\subsection{Plurality bias, and majority preservation}


When time proceeds, our protocols will result in the proportion of nodes with a
given opinion to evolve. Note that there might be nodes who do not support any
opinion at time $t$. As mentioned in the previous section, we call such nodes
\emph{undecided}. We denote by $a^{(t)}$ the fraction of nodes supporting any
opinion at time $t$ and we call the nodes contributing to $a^{(t)}$
\emph{opinionated}. Consequently, the fraction of undecided nodes at time $t$
is $1-a^{(t)}$. Let $c_{i}^{(t)}$ be the fraction of opinionated nodes in the
system that support opinion $i\in[k]$ at the beginning of round $t$, so that
$\sum_{i\in [k]} c_{i}^{(t)} = a^{(t)}$. Let $\hat{c}_{i}^{(t)}$ be the
fraction of opinionated nodes which receive at least one message at time $t-1$
and support opinion $i\in[k]$ at the beginning of round $t$. 
We write $ \mathbf{c}^{(t)}=(c_{1}^{(t)},...,c_{k}^{(t)})$ to denote the
\emph{\longconfig} of the opinions at time $t$. Similarly, let
$\hat{\mathbf{c}}^{(t)}=(\hat{c}_{1}^{(t)},...,\hat{c}_{k}^{(t)})$. In
particular, if every node would simply switch to the last opinion it received,
then 
\begin{equation*}
\mathbb{E}[\hat{c}_{i}^{(t+1)}\mid\mathbf{c}^{(t)}]  = 
\sum_{j\in[k]}\Pr[\mbox{received \ensuremath{i}}\mid\mbox{original message is \ensuremath{j}}]\cdot\Pr[\mbox{original message is \ensuremath{j}}] =\sum_{j\in[k]}c_{j}^{(t)}\cdot p_{j,i}.
\end{equation*}
That is,
\begin{equation}
	\mathbb{E}[\hat{\mathbf{c}}^{(t+1)}
	\mid\mathbf{c}^{(t)}]=\mathbf{c}^{(t)}\cdot P,
	\label{obs:average-intuition-P}
\end{equation}
where $P$ is the noise matrix. In particular, in the absence of noise, we have
$P=I$ (the identity matrix), and if every node would simply copy the opinion
that it just received, we had $
\mathbb{E}[\hat{\mathbf{c}}^{(t+1)}\mid\mathbf{c}^{(t)}]=\mathbf{c}^{(t)}$. So,
given the \config at round $t$, from the definition of the model it follows
that the messages each node receives at round $t+1$ can equivalently be seen as
being sent from a system without noise, but whose \config at round $t$ is
$\mathbf{c}^{(t)}\cdot P$. 

Recall that $m$ denotes the initially correct opinion, that is, the source's
opinion in the rumor-spreading problem, and the initial plurality
opinion in the plurality consensus problem. The following definition
naturally extends the concept of \emph{majority bias} in \cite{FHK14}
to \emph{plurality bias}. 
\begin{defn}
\textit{Let $\delta>0$. An \config $\mathbf{c}$ is said to be} {\rm $\delta$-biased
toward opinion $m$}\textit{ if 
$
c_{m}-c_{i}\geq\delta
$
for all $i\neq m$. }
\end{defn}
In \cite{FHK14}, each binary opinion
that is transmitted between two nodes is flipped with probability
at most $\frac{1}{2}-\epsilon$,
with\footnote{For a discussion on what happens for other values of $\epsilon$, see
\apxref{eps-smaller}.} 
$\epsilon=n^{-\frac{1}{4}+\eta}$ for an arbitrarily small
$\eta>0$. Thus, the noise is parametrized by $\epsilon$. The smaller
$\epsilon$, the more noisy are the communications. We generalize the role of
this parameter with the following definition. 

\begin{defn} 
    \label{def:noise}
	\textit{Let $\epsilon=\epsilon(n)$ and $\delta=\delta(n)$ be positive. 
    A noise matrix $P$ is said to be}
	{\rm $(\epsilon, \delta)$-majority-preserving ($(\epsilon, \delta)$-m.p.) \wrt
	opinion $m$} \textit{if, for  every \config
	$\mathbf{c}$ that is $\delta$-biased toward opinion~$m$, we have 
$
		\left(\mathbf{c}\cdot
		P\right)_{m}-\left(\mathbf{c}\cdot
		P\right){}_{i}>\epsilon\,\delta
$
	for all $i\neq m$. }
\end{defn}

In the rumor-spreading problem, as well as in the plurality consensus problem,
when we say that a noise matrix is $(\epsilon, \delta)$-m.p., we implicitly
mean that it is  $(\epsilon, \delta)$-m.p. \wrt the initially correct opinion.
Because of the space constraints, we defer the discussion on the class of
$(\epsilon, \delta)$-m.p. noise matrices in \secref{mp_matrices} (including its
tightness w.r.t. theorems \ref{thm:general_rumor_spreading} and
\ref{thm:general_majority_consensus}).


\subsection{Our formal results}
\label{sub:formal_results}


We show that a natural generalization of the protocol in \cite{FHK14} solves
the rumor spreading problem and the plurality consensus problem for an
arbitrary number of opinions $k$. More precisely, using the protocol which we
describe in \subref{protocol}, we can establish the following two results,
whose proof can be found in \secref{The-Analysis}.

\begin{thm}
\label{thm:general_rumor_spreading} Assume that the noise matrix
\textbf{$P$} is $(\epsilon, \delta)$-m.p. with $\epsilon=\Omega(n^{-\frac{1}{4}+\eta})$
for an arbitrarily small constant $\eta>0$ and $\delta=\Omega(\sqrt{\nfrac{\log n}n})$. The noisy rumor-spreading problem
with $k$ opinions can be solved in $O(\frac{\log n}{\epsilon^{2}})$
communication rounds, w.h.p., by a protocol using $O(\log\log n+\log\frac{1}{\epsilon})$
bits of memory at each node. 
\end{thm}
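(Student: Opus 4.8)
The plan is to reduce the $k$-opinion rumor-spreading problem to plurality consensus and then to the concentration machinery that the paper has just set up. Since rumor spreading is (as the paper explicitly notes) the special case of plurality consensus with $|A_m|=1$ and $|A_j|=0$ for $j\neq m$, the natural route is to first get the source's single opinion amplified to a configuration that is $\delta$-biased toward $m$ with $\delta=\Omega(\sqrt{\log n / n})$, and then invoke (the proof of) \thmref{general_majority_consensus} to drive all nodes to $m$. So I would split the analysis into a \emph{bootstrapping phase} that spreads the rumor from one node to a constant fraction of nodes carrying opinion $m$ with a detectable plurality bias, followed by the \emph{consensus phase} governed by the protocol of \subref{protocol}.

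For the bootstrapping phase, I would argue that in the \model the number of informed nodes grows geometrically: while the informed fraction is small, each round roughly doubles it (up to the noise distortion captured by the matrix $P$), so after $O(\log n)$ rounds a constant fraction of nodes hold opinion $m$, and after $O(1/\epsilon^2 \cdot \log n)$ rounds the $(\epsilon,\delta)$-m.p. property of $P$ guarantees that the expected configuration stays $\delta$-biased toward $m$ with $\delta=\Omega(\sqrt{\log n / n})$. Here the key conceptual input is equation \eqref{obs:average-intuition-P}, which lets me treat one round of noisy push as a noiseless push from the perturbed configuration $\mathbf{c}^{(t)}\cdot P$; iterating this and applying the majority-preservation hypothesis keeps the bias from being washed out by noise. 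For the consensus phase I would directly reuse the induction-plus-Poisson-approximation argument behind the plurality consensus theorem: each node samples $\gamma=\Theta(1/\epsilon^2)$ received opinions, takes the mode, and the generalized ``mode reveals the plurality'' estimate (the multivalued extension announced in the abstract) shows that a $\delta$-biased configuration is mapped to one with strictly larger bias, so the bias amplifies until the system reaches consensus on $m$.

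The main technical obstacle, and the step I expect to cost the most, is establishing concentration of the informed-fraction and sampled-mode random variables \emph{despite the correlations} inherent in the \model. As the paper stresses, the messages received in a single round are not independent (one rare opinion can be received by at most one node), so Chernoff bounds do not apply off the shelf. Following the strategy advertised in the technical-challenges section, I would not try to verify negative correlation directly; instead I would couple the execution of the protocol on the \model with its execution on an auxiliary model in which the received messages \emph{are} independent (a Poissonized push model), show that the two processes are statistically close over the $O(\log n / \epsilon^2)$ rounds of interest, and then apply standard Chernoff-type concentration in the independent model. Transferring the w.h.p. guarantee back across the coupling, and controlling the accumulated approximation error over all rounds, is where the delicate bookkeeping lives.

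Finally, the resource bounds follow once the two phases are in place: each node stores only a counter for the sample mode over $\gamma=\Theta(1/\epsilon^2)$ observations plus a round counter up to $O(\log n / \epsilon^2)$, giving $O(\log\log n + \log(1/\epsilon))$ bits of memory, and the total round count is $O(\log n / \epsilon^2)$ by summing the $O(\log n)$ bootstrapping rounds with the $O(\log n / \epsilon^2)$ consensus rounds. The hypotheses $\epsilon=\Omega(n^{-1/4+\eta})$ and $\delta=\Omega(\sqrt{\log n / n})$ enter precisely to make the Poisson coupling tight (small enough noise-induced deviation per round) and to guarantee the initial bias after bootstrapping exceeds the sampling-noise threshold needed for the mode estimate to be reliable w.h.p.
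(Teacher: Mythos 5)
Your overall skeleton --- a bootstrap that turns the single source opinion into an $\Omega(\sqrt{\log n/n})$-biased \config over essentially all nodes, followed by the sample-mode consensus stage, with a per-phase coupling to an independent (Poissonized) model to sidestep message correlations --- is the same as the paper's, and your Stage-2 and Poissonization plans match \claimref{prot_eq_bib}, \lemref{pois-apx-prob} and \propref{maj-ampl}. The genuine gap is in the bootstrap. You let the informed set double each round and assert that the $(\epsilon,\delta)$-m.p.\ property ``guarantees the expected configuration stays $\delta$-biased toward $m$.'' It does not: majority preservation only guarantees $(\mathbf{c}\cdot P)_m-(\mathbf{c}\cdot P)_i>\epsilon\,\delta$ for a $\delta$-biased $\mathbf{c}$, i.e., one noisy hop \emph{contracts} the bias by a factor $\epsilon$, which here may be as small as $n^{-1/4+\eta}$ (for the natural generalization of the binary noise matrix this contraction is tight, and once the bias falls below $\delta$ the property says nothing at all). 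In a doubling-per-round bootstrap, each newly informed cohort therefore has bias only about $\epsilon$ times that of the current population, and since that cohort is comparable in size to the whole previously informed population, the population bias contracts by a factor roughly $(1+\epsilon)/2$ per round; after the $\Theta(\log n)$ rounds needed to inform everyone, the bias is $n^{-\Theta(1)}$, far below the $\Omega(\sqrt{\log n/n})$ threshold your consensus phase needs. (For the same reason, ``a constant fraction of nodes hold opinion $m$ after $O(\log n)$ rounds'' is false under noise: informed nodes hold noisy opinions, and only a small bias toward $m$ can survive.)

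The paper's Stage 1 is engineered precisely to control this decay: each phase lasts $\beta/\epsilon^{2}$ rounds, so each newly opinionated generation is a factor $\approx\beta/\epsilon^{2}$ larger than the union of all previous ones (\claimref{bootstrap} and \claimref{opinion-growth}), and hence only $T+2=O(\log n/\log(\beta/\epsilon^{2}))$ generations are needed; the bias then decays as $(\epsilon/2)^{j}$ per phase (\lemref{correct-stage1}), and the total loss $\epsilon^{T+2}$ is still $\Omega(\sqrt{\log n/n})$ because $T\approx\log n/(2\log(1/\epsilon))$ gives $\epsilon^{T}\approx n^{-1/2}$. This bookkeeping is exactly where the hypothesis $\epsilon=\Omega(n^{-1/4+\eta})$ is spent --- not, as you suggest, in ``making the Poisson coupling tight'': the Poisson approximation costs only a factor $e^{k}(h/k)^{k/2}=\mathrm{poly}(n)$ for constant $k$, independently of $\epsilon$. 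Indeed, \apxref{eps-smaller} shows the protocol provably fails for $\epsilon=\Theta(n^{-1/4-\eta})$ because of this same bias-decay accounting. To repair your proof you must replace the round-by-round doubling with a generation structure having only $O(\log n/\log(1/\epsilon^{2}))$ generations, each amplifying the informed set by a $\mathrm{poly}(1/\epsilon)$ factor --- which is essentially the paper's Stage 1.
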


\begin{thm}
\label{thm:general_majority_consensus} 
Let $S$ with $|S|=\Omega(\frac{1}{\epsilon^{2}}\log n)$ be an
initial set of nodes with opinions in $[k]$, the rest of the nodes having no
opinions. Assume that the noise matrix \textbf{$P$} is
$(\epsilon, \delta)$-m.p. for some $\epsilon>0$, and that $S$ is
$\Omega(\sqrt{\log n/|S|})$-majority-biased.
The noisy plurality consensus problem with $k$ opinions can be solved
in $O(\frac{\log n}{\epsilon^{2}})$ communication rounds,
w.h.p., by a protocol using $O(\log\log n+\log\frac{1}{\epsilon})$
bits of memory at each node. 
\end{thm}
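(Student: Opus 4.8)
The plan is to analyze the phased protocol of \subref{protocol}, in which each node repeatedly collects a fresh batch of the opinions it receives over a short window of rounds and then switches to the most frequent opinion (the mode) in that batch. Throughout, I would track the bias $\delta^{(t)}=\min_{i\neq m}\bigl(c_m^{(t)}-c_i^{(t)}\bigr)$ of the \config $\bc^{(t)}$ toward the correct opinion $m$, and show that, w.h.p., it is non-decreasing and grows by a constant factor per phase until every node supports $m$.

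The engine of the argument is a one-phase amplification lemma. Given that $\bc^{(t)}$ is $\delta$-biased toward $m$, the $(\epsilon,\delta)$-majority-preserving hypothesis applied to the post-noise \config guarantees a bias $>\epsilon\delta$ in the distribution of the messages actually received---recall from \eqref{obs:average-intuition-P} that these messages are distributed as if sent, noise-free, from a system whose \config is $\bc^{(t)}\cdot P$. Feeding a batch of size $\gamma=\Theta(1/\epsilon^2)$ into the generalized mode-estimation result---the multivalued extension of the bivalued estimate of \cite{FHK14}---then shows that for every node the probability of newly adopting $m$ exceeds that of adopting any fixed $i\neq m$ by a margin $\Omega(\epsilon\delta\sqrt\gamma)=\Omega(\delta)$, with the hidden constant pushed above $1$ by taking the batch-size constant large enough. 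A separate, easier, ingredient handles the undecided nodes: within the first phases the opinionated fraction $a^{(t)}$ is driven to $1-o(1)$ while the count margin $|A_m|-|A_i|=\Omega(\sqrt{|S|\log n})$ guaranteed by the hypotheses on $|S|$ and the majority bias is preserved, so that from then on $\delta^{(t)}$ is bounded below by the $\Theta(\sqrt{\log n/n})$ scale of the unavoidable population-level sampling fluctuation.

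The step I expect to be the main obstacle is converting the per-node adoption probabilities into a high-probability statement about $\bc^{(t+1)}$, and two difficulties compound there. First, the multivalued mode estimate is itself delicate: unlike the binary case, where the sample counts sum to the sample size and the mode is governed by a single binomial tail, with $k>2$ opinions the multinomial counts carry extra degrees of freedom and the event ``mode $=m$'' is far more intricate; this is precisely why a new inductive proof of the bivalued estimate is needed, so that it lifts to general $k$. Second, in the \model the messages received in a round are correlated across nodes---a rare opinion reaches at most one receiver---so Chernoff bounds are not licensed directly. To get around this I would use a Poisson-approximation coupling: replace the true process by an idealized one in which each node independently receives a $\mathrm{Poisson}$-distributed number of messages drawn i.i.d.\ from the normalization of $\bc^{(t)}\cdot P$, bound the total-variation distance between the two processes by $o(1/\mathrm{poly}(n))$, carry out all concentration in the independent model, and transfer the conclusions back. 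There a standard Chernoff bound yields $\delta^{(t+1)}\ge\min\bigl(c\,\delta^{(t)},\delta_\ast\bigr)$ w.h.p.\ for a constant $c>1$ and a constant threshold $\delta_\ast$.

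It then remains to iterate and to account for the budgets. Geometric growth by the factor $c$ from the post-consolidation bias $\Omega(\sqrt{\log n/|S|})$ up to $\delta_\ast$ takes $\bigo(\log n)$ phases, after which a constant number of further phases pushes the bias to $1-o(1)$, i.e.\ consensus on $m$; a union bound over the $\bigo(\log n)$ phases keeps the total failure probability $\bigo(n^{-c'})$. Since each phase accumulates $\gamma=\Theta(1/\epsilon^2)$ messages and a node receives $\Theta(1)$ messages per round in expectation, a phase lasts $\bigo(1/\epsilon^2)$ rounds, giving the claimed $\bigo(\log n/\epsilon^2)$ total. The memory bound follows by inspection: each node stores its current opinion ($\bigo(1)$ bits for constant $k$), the current phase index (one of $\bigo(\log n)$, hence $\bigo(\log\log n)$ bits), and a batch counter up to $\Theta(1/\epsilon^2)$ (hence $\bigo(\log\frac1\epsilon)$ bits), matching $\bigo(\log\log n+\log\frac1\epsilon)$.
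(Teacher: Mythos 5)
Your overall architecture --- per-phase mode-of-sample amplification driven by a multivalued mode-estimation inequality, with the message dependence handled by passing to an independent Poisson model and transferring concentration back --- is the same as the paper's (its \propref{maj-ampl}, \claimref{prot_eq_bib}, and Lemmas \ref{lem:pois-apx} and \ref{lem:pois-apx-prob}), and your accounting of phases, rounds, and memory matches as well. However, the transfer step as you state it is false: the total-variation distance between the true one-phase process and its Poissonization is \emph{not} $o(1/\mathrm{poly}(n))$; it is close to $1$. In the true process the total number of messages delivered in a phase is exactly $h=|\M|$, while in your idealized process it is $\mathrm{Poisson}(h)$, so the event ``exactly $h$ messages are received in total'' has probability $1$ in one process and $e^{-h}h^{h}/h!=\Theta(1/\sqrt{h})$ in the other, whence $d_{\mathrm{TV}}\geq 1-\Theta(1/\sqrt{h})$. (Per-node marginals are indeed TV-close, Binomial$(h,1/n)$ versus Poisson$(h/n)$, but your plan explicitly needs the \emph{joint} law across all $n$ nodes, since the Chernoff bounds are applied to fractions of nodes; for the joint law no small TV bound holds.) As written, no w.h.p.\ conclusion can be transported across such a coupling, so the key step of your argument fails.

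The transfer that actually works --- and the one the paper uses --- is multiplicative rather than additive: for any event $\mathcal{E}$ determined by the received counts, $\Pr_{\mathrm{true}}(\mathcal{E})\leq e^{k}\sqrt{\prod_{i}\colballs{i}}\,\Pr_{\mathrm{Poisson}}(\mathcal{E})$, which is the $k$-color generalization (\lemref{pois-apx}) of the standard Poissonization inequality of \cite{Mitzenmacher:2005:PCR:1076315}. The prefactor is polynomial in $n$ for constant $k$, so a failure probability of $n^{-\constpoisapx}$ in \ppapx becomes at most $\mathrm{poly}(n)\cdot n^{-\constpoisapx}$ in the true process; this is still negligible provided $\constpoisapx$ is chosen large enough, which is exactly the content of \lemref{pois-apx-prob} and is why the paper must run its Chernoff bounds in \ppapx down to a sufficiently large polynomial exponent rather than merely ``w.h.p.'' Your plan is repairable by substituting this multiplicative transfer for the TV coupling; you would also need the observation of \claimref{prot_eq_bib} --- that a node's action depends only on the multiset of messages received in a phase, not on their order --- to justify modelling an entire multi-round phase as a single balls-into-bins experiment in the first place. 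With those two substitutions, your argument coincides with the paper's.
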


For $k=2$, we get the theorems in \cite{FHK14} from the above two theorems.
Indeed, the simple 2-dimensional noise matrix of Eq.~(\ref{eq:binary_n_m})
is $\epsilon$-majority-biased. Note that, as in~\cite{FHK14},
the plurality consensus algorithm requires the nodes to known the
size $|S|$ of the set $S$ of opinionated nodes. 


\section{The Analysis\label{sec:The-Analysis}}


In this section we prove \thmref{general_rumor_spreading} and \thmref{general_majority_consensus}
by generalizing the analysis of Stage 1 given in \cite{FHK14}
and by providing a new analysis of Stage 2. 
Note that the proof techniques required for the generalization to arbitrary $k$ significantly
depart from those in \cite{FHK14} for
the case $k=2$. In particular, our approach provides a general framework to
rigorously deal with many kind of stochastic dependences among messages in the
\model.

\subsection{Definition of the Protocol}
\label{sub:protocol}

We describe a rumor spreading protocol performing in two
\emph{stages}. Each stage is decomposed into a number of \emph{phases},
each one decomposed into a  number of \emph{rounds}. During each phase of
the two stages, the nodes apply the simple rules given below.

\subsubsection{The rule during each phase of Stage 1.} Nodes that already support some
opinion at the beginning of the phase push their opinion at each round of the
phase. Nodes that do not support any opinion at the beginning of the phase but
receive at least one opinion during the phase start supporting an opinion at
the end of the phase, chosen u.a.r. (counting multiplicities) from the received
opinions\footnote{\label{fn:uar}Note that, in the protocol considered in
\cite{FHK14}, the choice of each node's new opinion in both stages is based on
the first messages received. In \cite{FHK15}, in order to relax the
synchronicity assumption that nodes share a common clock, they adopt the same
sample-based variant of the rule that we adopt here.}. In other words, each
node tries to acquire an opinion during each phase of Stage~1, and, as it
eventually receives some opinions, it starts supporting one of them (chosen
u.a.r.) from the beginning of the next phase. In particular, opinionated nodes
never change their opinion during the entire stage.

More formally, let $\phi,\beta$, and $s$ be three constants satisfying $\phi>\beta> s$.
The rounds of Stage 1 are grouped in $T+2$ phases with 
$ T=\lfloor \log(n/(2s/\epsilon^{2} \log n))/\log(\beta/\epsilon^{2}+1)\rfloor $.
Phase 0 takes $s/\epsilon^{2} \log n$ rounds, phase $T+1$ takes $\phi/\epsilon^{2} \log n$
rounds, and each phase $\pc $ with $1\leq\pc \leq T$ takes
$\beta/\epsilon^{2}$ rounds. We denote with $\tau_{\pc}$ the end of the last
round of phase $\pc$.

Let $t_{u}$ be the first time in which $u$ receives any opinion since the
beginning of the protocol (with $t_u=0$ for the source). Let $\pc_{u}$ be
the phase of $t_{u}$, and let $\mbox{val}(u)$ be an opinion chosen u.a.r. by
$u$ among those that it receives during phase $\pc_u$\footnote{Note that,
in order to sample u.a.r. one of them, $u$ does not need to collect all the
opinions it receives. A natural sampling strategy such as reservoir sampling
can be used.}. During the first stage of the protocol each node applies the
following rule.

\smallskip
\fbox{\begin{minipage}[t]{\columnwidth}%
\textbf{Rule of Stage 1. }Each opinionated node $u$ pushes opinion $\mbox{val}(u)$
during each round of every phase $\pc=\pc_{u}+1,...,T+1$. %
\end{minipage}}

\subsubsection{The rule during each phase of Stage 2.}
During each phase of Stage~2, every node pushes its opinion at each
round of the phase. At the end of the phase, each node that received
``enough'' opinions takes a random sample\textsuperscript{\ref{fn:uar}}
of them, and starts supporting the most frequent opinion in that sample
(breaking ties u.a.r.). 

More formally, the rounds of stage 2 are divided in $T^{\prime}+1$ phases with $T^{\prime}=\lceil \log(\sqrt{n/\log n}) \rceil$. 
Each phase $j$, $0\leq j\leq T^{\prime}-1$, has length $2\ell$ with
$\ell= \lceil \constfour/\epsilon^{2} \rceil $ for
some large-enough constant $\constfour>0$, and phase $T^{\prime}$
has length $2\ell'$ with $\ell'=O(\epsilon^{-2}\log n)$.
For any finite multiset $A$ of elements in $\{1,\dots,k\}$, and any $i\in\{1,\dots,k\}$,  let $\mbox{occ}(i,A)$ be the number of occurrences of $i$ in $A$, and let 
$
	\mode A= \{ i \in \{1,\dots,k\} \mid \mbox{occ}(i,A) \geq  \mbox{occ}(j,A) \; \mbox{for every $j \in \{1,\dots,k\}$}\}. 
$
We then define $\maj A$ as the most frequent value in $A$
(breaking ties u.a.r.), i.e., $\maj A$ is the r.v. on $\{1,\dots,k\}$ \st
$ \Pr( \maj A = i ) ={\mathbbm1_{\left\{ i\in\mode A\right\} }}/{\left|\mode A\right|}. $
Let $\rec$ be the multiset of messages received by node $u$ during phase $j$.
During the second stage of the protocol each node applies the following rule.

\smallskip
\fbox{\begin{minipage}[t]{\columnwidth}%
\textbf{Rule of Stage 2. }During each phase $j$ of length $2L$ of Stage 2
($L=\ell$ or $\ell'$), each node $u$ pushes its current opinion at each round
of the phase, and starts drawing a random uniform sample $\Sam{u}$ of size $L$
from $\rec$. Provided $|\rec|\geq L$, at the end of the phase $u$ changes its
opinion to $\maj{\Sam{u}}$. 
\end{minipage}}

\smallskip
Let us remark that the reason we require the use of sampling in the previous rule is that
at a given round a node may receive much more messages than $2L$. Thus, if the nodes were to 
collect all the messages they receive, some of them would need much more memory than
our protocol does. 
Finally, observe that overall both stages 1 and 2 take $O(\frac{1}{\epsilon^{2}}\log n)$
rounds. 

\subsection{Pushing Colored Balls into Bins\label{sub:push-balls}}

Before delving into the analysis of the protocol, we provide a framework to
rigorously deal with the stochastic dependence that arises between messages in
the \model. 
Let \porig be the process that results from the execution of the protocol of
\subref{protocol} in the \model. In order to apply concentration of
probability results that requires the involved random variables to be
independent, we view the messages as balls, and the nodes as bins, and employ
Poisson approximation techniques.  More specifically, during each phase $j$ of
the protocol, let $\M$ be the set of messages that are sent to random nodes,
and $\N$ be the set of  messages sent \emph{after} the noise has acted on them.
(In other words,  $\N= \bigcup_{u}\rec$).  We prove that, at the end of
phase~$j$, we can equivalently assume that all the messages \M have been sent
to the nodes according to the following process.

\begin{defn}\label{def:bib}
    The balls-into-bins \pbib associated to phase $j$ is the two-step process
    in which the nodes represent bins and all messages sent in the phase
    represent colored balls, with each color corresponding to some opinion.
    Initially, balls are colored according to $\M$. At the first step, each
    ball of color $i\in \{1,\dots,k\}$ is re-colored with color $j \in
    \{1,\dots,k\}$ with probability $ p_{i,j}$, independently of the other
    balls. At the second step all balls are thrown to the bins u.a.r. as in a
    balls-into-bins experiment.
\end{defn}

\begin{claim}
\label{claim:prot_eq_bib}
	Given the \config and the number of active nodes 
    at the beginning of phase $j$, the probability 
	distribution of the \config and the number of active nodes
    at the end of phase $j$ in \porig is the same as if the messages were sent
	according to \pbib.
\end{claim}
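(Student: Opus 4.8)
The plan is to exhibit a single product measure that governs both \porig and \pbib, and to argue that every quantity named in the statement is a measurable function of a sample from this measure together with protocol-internal randomness that is common to the two processes.

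First I would observe that, conditioned on the \config and on the number of active nodes at the beginning of phase $j$, the multiset \M of messages pushed during the phase (before noise acts) is \emph{deterministic}. Indeed, by the rules of both stages no node changes its opinion during a phase: opinionated nodes push a fixed opinion at every round, and undecided nodes update only at the phase boundary. Hence the set of senders and the color each of them contributes, at each of the deterministically many rounds of phase $j$, are fixed by the conditioning, so \M is the same fixed colored multiset in \porig and \pbib.

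The key step is to identify the law of the assignment that maps each ball $b\in\M$ to the pair $(\text{final color of }b,\ \text{bin receiving }b)$. In \porig each pushed message is sent to a bin chosen uniformly at random, independently across senders and rounds, and the noise relabels its color from $i$ to $i'$ with probability $p_{i,i'}$, independently of the destination and of all other messages; thus the assignment is the product measure in which every ball independently draws a uniform bin and an independent $P$-recoloring. In \pbib the very same two operations are performed — the first step recolors every ball independently according to $P$, the second throws every ball into a uniform bin independently — so the induced law on ball-to-$(\text{color},\text{bin})$ assignments is \emph{identical}. The reordering of ``recolor, then throw'' versus the simultaneous ``throw-with-noise'' of \porig is immaterial precisely because, for each individual ball, the recoloring and the destination are independent. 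In particular, the correlations among the messages received by different nodes that were flagged in \subref{push-balls} (e.g.\ that a color carried by a single ball can reach at most one node) are already present in this common product measure, and are therefore reproduced exactly by \pbib.

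From the coincidence of this law it follows that the collection of received multisets $\{\rec\}_{u}$ has the same distribution in the two processes, since each $\rec$ is obtained by reading off the balls whose assigned bin is $u$ together with their final colors. All remaining randomness — the uniform sample $\Sam{u}$, the tie-breaking inside $\maj{\cdot}$ in Stage~2, and the u.a.r.\ choice among the received opinions in Stage~1 — is internal to the protocol, is identical in \porig and \pbib, and is independent of the ball assignment. Since the end-of-phase \config and the number of active nodes (a previously undecided node becoming active exactly when its received multiset is nonempty) are deterministic functions of $\{\rec\}_{u}$ and of this internal randomness, they have the same law in the two processes, which is the claim. I expect the only delicate point to be the careful verification that in \porig the destination and the noise act independently, and independently across messages, so that the per-ball product structure — and hence the equality with \pbib despite the different ordering of the two operations — genuinely holds; the possible simultaneous arrival of several messages at one node is harmless, because the protocol uses each $\rec$ only through operations (uniform sampling and mode) that are invariant under the random arrival order.
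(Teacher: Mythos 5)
Your proposal is correct and follows essentially the same route as the paper's proof: the paper couples \porig and \pbib by letting each ball/message receive the same noise outcome and the same destination bin (which is exactly your common product measure on per-ball (color, bin) assignments), and then, like you, invokes the fact that the protocol's end-of-phase updates depend only on the received multisets $\rec$ and not on the order or round structure of arrivals. Your additional remarks (that \M is deterministic given the conditioning, and that the internal randomness of sampling and tie-breaking is independent of the ball assignment) are implicit in the paper's argument and make the write-up slightly more self-contained, but they do not change the substance of the proof.
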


It is not hard to see that \claimref{prot_eq_bib} holds in the case of a single
round. For more than one round, it is crucial to observe that the way each node
$u$ acts in the protocol depends only on the received messages $\rec$,
regardless of the order in which these messages are received. As an example,
consider the \config in which one node has opinion~1, one other node has
opinion~2, and all other nodes have opinion~3. Suppose that each node pushes
its opinion for two consecutive rounds. Since, at each round, exactly one
opinion~1 and exactly one opinion~2 are pushed, no node can receive two 1s
during the first round and then two 2s during the second round, i.e. no node
can possibly receive the sequence of messages ``1,1,2,2'' in this exact order.
Instead, in \pbib such a sequence is possible. 

\begin{proof}[Proof of \claimref{prot_eq_bib}]
    In both \pbib and \porig, at each round, the noise acts independently on
    each ball/message of a given color/opinion, according to the same
    probability distribution for that color/opinion. 
    Then, in both processes, each ball/message is sent to some bin/node chosen
    u.a.r. and independently of the other balls/messages. 
    Indeed, we can couple \pbib and \porig by requiring that: 
    \begin{enumerate}
        \item each ball/message is changed by the noise to the same color/value, and 
        \item each ball/message ends up in the same bin/node.
    \end{enumerate}
    Thus, the joint probability distribution of the sets 
    ${\left\{ \rec \right\}}_{u\in [n]}$
    in \porig is the same as the one given by \pbib.

    Observe also that, from the definition of the protocol (see the rule of
    Stage 1 and Stage 2 in \subref{protocol}), it follows that each node's
    action depends only on the set $\rec$ of received messages at the end of
    each phase~$j$, and does not depend on any further information such as the
    actual order in which the messages are received during the phase. 

	Summing up the two previous observations, we get that
	if, at the end of each phase $j$, we generate the $\rec$s according
	to \pbib, and we let the protocol execute according to them, then we 
	indeed get the same stochastic process as \porig.
\end{proof}

Now, one key ingredient in our proof is to approximate \pbib using the following \ppapx.

\begin{defn}\label{def:pois_prox}
    Given \N[j], \ppapx associated to phase $j$ is the one-shot process in
    which each node receives a number of opinions $i$ that is a random
    variable with distribution Poisson$(\colballs{i}/n)$, where $\colballs{i}$
    is the number of messages in \N[j] carrying opinion~$i$, and each
    Poisson random variable is independent of the others.
\end{defn}

Now we provide some results from the theory of Poisson approximation
for balls-in-bins experiments that are used in \subref{push-balls}.
For a nice introduction to the topic, we refer to \cite{Mitzenmacher:2005:PCR:1076315}.
\begin{lem}
\label{lem:pois-cond}Let $\left\{ X_{j}\right\} _{j\in\left[\tilde{n}\right]}$
be independent r.v. such that $X_{j}\sim$Poisson$(\lambda_{j})$.
The vector $(X_{1},...,X_{\tilde{n}})$ conditional on
$\sum_{j}X=\tilde{m}$ follows a multinomial distribution with $\tilde{m}$
trials and probabilities $(\frac{\lambda_{1}}{\sum_{j}\lambda_{j}},...,\frac{\lambda_{\tilde{n}}}{\sum_{j}\lambda_{j}})$.
\end{lem}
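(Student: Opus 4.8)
The plan is to prove this by a direct computation of the conditional probability mass function, exploiting two elementary facts about Poisson random variables: their joint factorization under independence, and the additivity that makes the sum $\sum_j X_j$ itself Poisson. First I would set $\Lambda := \sum_{j} \lambda_j$ and recall that, since the $X_j$ are independent Poissons, their sum $\sum_j X_j$ is distributed as Poisson$(\Lambda)$. This is the one external fact I would invoke; it follows immediately from multiplying probability generating functions, or equivalently by induction on convolution.

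Next, fix any tuple of nonnegative integers $(x_1,\ldots,x_{\tilde{n}})$ with $\sum_j x_j = \tilde{m}$. Because this event is contained in $\{\sum_j X_j = \tilde{m}\}$, the definition of conditional probability together with independence gives
$$\Pr\Big[(X_1,\ldots,X_{\tilde{n}}) = (x_1,\ldots,x_{\tilde{n}}) \;\Big|\; \textstyle\sum_j X_j = \tilde{m}\Big] = \frac{\prod_{j} \Pr[X_j = x_j]}{\Pr[\sum_j X_j = \tilde{m}]}.$$
Substituting the Poisson mass functions $\Pr[X_j = x_j] = e^{-\lambda_j}\lambda_j^{x_j}/x_j!$ into the numerator and $\Pr[\sum_j X_j = \tilde{m}] = e^{-\Lambda}\Lambda^{\tilde{m}}/\tilde{m}!$ into the denominator, the exponential factors cancel, since $\prod_j e^{-\lambda_j} = e^{-\Lambda}$.

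Using the conditioning constraint $\sum_j x_j = \tilde{m}$ to write $\Lambda^{\tilde{m}} = \prod_j \Lambda^{x_j}$, the whole expression then collapses to
$$\frac{\tilde{m}!}{\prod_j x_j!}\prod_{j}\Big(\frac{\lambda_j}{\Lambda}\Big)^{x_j},$$
which is precisely the multinomial mass function with $\tilde{m}$ trials and cell probabilities $\lambda_j/\Lambda$. Since this holds for every admissible tuple, the conditional law is exactly the claimed multinomial.

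There is no genuine obstacle here: the result is a standard textbook fact, and the argument is essentially a one-line cancellation once the additivity of independent Poissons is in hand. The only point to keep straight is that the conditioning event restricts attention to tuples summing to $\tilde{m}$, which is exactly what licenses splitting $\Lambda^{\tilde{m}}$ across the coordinates and thereby turns the product of Poisson weights into a multinomial. I would therefore present it concisely rather than dwelling on it.
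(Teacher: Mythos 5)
Your proof is correct. The paper does not actually prove this lemma --- it states it as a known fact from the theory of Poisson approximation for balls-into-bins experiments, referring the reader to Mitzenmacher and Upfal --- and your direct computation (factor the joint Poisson mass, divide by the Poisson$(\Lambda)$ mass of the sum, cancel exponentials, and split $\Lambda^{\tilde{m}}$ across coordinates using $\sum_j x_j = \tilde{m}$) is exactly the standard textbook argument that the citation points to, so it validly fills in the omitted proof.
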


\begin{lem}
\label{lem:pois-apx}
Consider a balls-in-bins experiment in which
$\totballs$ colored balls are thrown in $n$ bins, where $\colballs{i}$ balls
have color $i$ with $i\in\left\{ 1,...,k\right\} $ and
$\sum_{i}\colballs{i}=\totballs$. Let $\left\{ X_{u,i}\right\}
_{\substack{u\in\left\{ 1,...,n\right\}, i\in\left\{ 1,...,k\right\} } }$
be the number of $i$-colored balls that end up in bin $u$, let
$f(x_{1,1},...,x_{n,1},x_{n,2},...,x_{n,n},z_{1},...,z_{n})$ be a
non-negative function with positive integer arguments $x_{1,1},...,x_{n,1},\!$
$x_{n,2},...,x_{n,n},\!$ $z_{1},...,z_{n}$,
let $\left\{ Y_{u,i}\right\} _{\substack{u\in\left\{ 1,...,n\right\},
i\in\left\{ 1,...,k\right\} } }$ be independent r.v. such that
$Y_{u,i}\sim$Poisson$(\colballs{i}/n)$
and let $Z_{1},...,Z_{n}$ be integer valued r.v. independent from
the $X_{u,i}$s and $Y_{u,i}$s. Then 
\begin{multline*}
\mathbb{E}\left[f\left(X_{1,1}, ..., X_{n, 1}, X_{n, 2}, ..., X_{n, n}, Z_{1},
..., Z_{n}\right)\right]\\ 
\leq e^{k}\sqrt{\prod_{i}\colballs{i}}\
\mathbb{E}\left[f\left(Y_{1,1}, ..., Y_{n,1},Y_{n,2}, ..., Y_{n,n},Z_{1}, ...,
Z_{n}\right)\right].
\end{multline*}
\end{lem}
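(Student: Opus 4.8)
The plan is to follow the classical Poisson‑approximation argument for balls‑in‑bins (as in~\cite{Mitzenmacher:2005:PCR:1076315}), generalized to $k$ colors, the extra factor $e^{k}$ instead of $e$ arising precisely because we must condition one Poisson sum per color. The pivotal observation is that, since the balls are thrown independently, the joint law of the array $\{X_{u,i}\}_{u\in[n],\,i\in[k]}$ factorizes across colors: for each fixed color $i$, the vector $(X_{1,i},\dots,X_{n,i})$ is exactly the outcome of throwing $\colballs{i}$ indistinguishable balls into $n$ bins uniformly, and these $k$ vectors are mutually independent. By \lemref{pois-cond} applied to color $i$, this single‑color balls‑in‑bins vector is distributed as $(Y_{1,i},\dots,Y_{n,i})$ conditioned on $\sum_{u}Y_{u,i}=\colballs{i}$.

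First I would make this matching precise. Let $E=\bigl\{\sum_{u}Y_{u,i}=\colballs{i}\text{ for every }i\in[k]\bigr\}$ be the event that every color has the correct total. Because the families $\{Y_{u,i}\}_{u}$ for distinct $i$ are independent, and likewise for the $X$'s, combining the per‑color identities from \lemref{pois-cond} gives that the full array $\{Y_{u,i}\}$ conditioned on $E$ has the same joint distribution as $\{X_{u,i}\}$. The variables $Z_{1},\dots,Z_{n}$ are independent of both the $X_{u,i}$'s and the $Y_{u,i}$'s, and $E$ is an event involving the $Y$'s only; hence appending the $Z$'s preserves the distributional identity, yielding
\[
	\mathbb{E}\bigl[f(X_{1,1},\dots,X_{n,n},Z_{1},\dots,Z_{n})\bigr]
	=\mathbb{E}\bigl[f(Y_{1,1},\dots,Y_{n,n},Z_{1},\dots,Z_{n})\mid E\bigr].
\]

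Next I would drop the conditioning. Since $f$ is non‑negative,
$\mathbb{E}[f(Y,Z)]\ge\mathbb{E}[f(Y,Z)\mid E]\,\Pr[E]$, so the displayed identity gives
$\mathbb{E}[f(X,Z)]\le \mathbb{E}[f(Y,Z)]/\Pr[E]$. It then remains to lower‑bound $\Pr[E]$. By independence across colors, $\Pr[E]=\prod_{i}\Pr\bigl[\sum_{u}Y_{u,i}=\colballs{i}\bigr]$, and each $\sum_{u}Y_{u,i}$ is a sum of $n$ independent $\mathrm{Poisson}(\colballs{i}/n)$ variables, hence $\mathrm{Poisson}(\colballs{i})$. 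Thus each factor equals $e^{-\colballs{i}}\colballs{i}^{\,\colballs{i}}/\colballs{i}!$, which by Stirling ($\colballs{i}!\le e\sqrt{\colballs{i}}\,(\colballs{i}/e)^{\colballs{i}}$) is at least $1/(e\sqrt{\colballs{i}})$. Multiplying over the $k$ colors gives $\Pr[E]\ge e^{-k}\big/\sqrt{\prod_{i}\colballs{i}}$, and substituting this lower bound into $\mathbb{E}[f(X,Z)]\le \mathbb{E}[f(Y,Z)]/\Pr[E]$ yields exactly the claimed inequality.

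The steps are individually routine; the only points demanding care — and where the multicolored case genuinely departs from the single‑color statement in~\cite{Mitzenmacher:2005:PCR:1076315} — are the factorization of the balls‑in‑bins law across colors, the observation that the auxiliary variables $Z_{j}$ are independent of the conditioning event $E$ (so that $E$ may be freely introduced and removed), and the bookkeeping of the $k$ separate Stirling bounds that assemble into the factor $e^{k}$. I expect the independence‑and‑conditioning identity to be the main conceptual step; the Stirling estimate is standard.
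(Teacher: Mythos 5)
Your proposal is correct and follows essentially the same route as the paper's own proof: the per-color application of \lemref{pois-cond} combined with independence across colors, dropping the conditioning on the event that each color-sum equals $\colballs{i}$ via non-negativity of $f$, and the Stirling bound $\colballs{i}!\leq e\sqrt{\colballs{i}}(\colballs{i}/e)^{\colballs{i}}$ giving $\Pr[E]\geq e^{-k}\big/\sqrt{\prod_{i}\colballs{i}}$. Your phrasing via the explicit event $E$ and the inequality $\mathbb{E}[f(Y,Z)]\geq\mathbb{E}[f(Y,Z)\mid E]\Pr[E]$ is just a repackaging of the paper's expansion of $\mathbb{E}[f(\bar{Y},\bar{Z})]$ over all values of $\bar{Y}_{\sum}$ followed by discarding all terms but $\bar{Y}_{\sum}=\bar{\lambda}$.
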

\begin{proof}
To simplify notation, let $\bar{Z}=(Z_{1},...,Z_{n})$,
$\bar{X}=(X_{1,1},...,X_{n,1},X_{n,2},...,X_{n,n})$,
$\bar{Y}=(Y_{1,1},...,Y_{n,1},Y_{n,2},...,Y_{n,n})$,
$\bar{Y}_{\sum}=(\sum_{u=1}^{n}Y_{u,1},...,\sum_{u=1}^{n}Y_{u,k})$,
$\lambda_{i}=\colballs{i}/n$, $\bar{\lambda}=(\lambda_{1},...,\lambda_{k})$
and finally $\bar{x}=(x_{1},...,x_{k})$ for any $x_{1},...,x_{k}$. 
Observe that, while $X_{u,i}$ and $X_{v,i}$ are clearly dependent, $X_{u,i}$
and $X_{v,j}$ with $i\neq j$ are stochastically independent (even
if $u=v$). Indeed, the distribution of the r.v. 
$\left\{ X_{u,i}\right\} _{u\in\left\{ 1,...,n\right\} }$
for each fixed $i$ is multinomial with $\lambda_{i}$ trials and
uniform distribution on the bins. Thus, from \lemref{pois-cond} we have that
$\left\{ X_{u,i}\right\} _{u\in\left\{ 1,...,n\right\} }$are distributed
as $\left\{ Y_{u,i}\right\} _{u\in\left\{ 1,...,n\right\} }$ conditional
on $\sum_{u=1}^{n}Y_{u,i}=\lambda_{i}$, that is 
\begin{equation*}
\mathbb{E}\left[f\left(\bar{Y},\bar{Z}\right)\middle|\ 
\sum_{u=1}^{n}Y_{u,1}=\lambda_{1},...,\sum_{u=1}^{n}Y_{u,k}=\lambda_{k}\right] 
=\mathbb{E}\left[f\left(\bar{X},\bar{Z}\right)\right].
\end{equation*}
Therefore, we have 
\begin{align*}
\mathbb{E}\left[f\left(\bar{Y},\bar{Z}\right)\right]
&= \sum_{\bar{x}:x_{1},...,x_{k}\geq0}
\mathbb{E}\left[f\left(\bar{Y},\bar{Z}\right)\middle|\ \bar{Y}_{\sum}
=\bar{x}\right]\Pr\left(\bar{Y}_{\sum}=\bar{x}\right)\\
&\geq \mathbb{E}\left[f\left(\bar{Y},\bar{Z}\right)\middle|\ \bar{Y}_{\sum}=\bar{\lambda}\right]\Pr\left(\bar{Y}_{\sum}=\bar{\lambda}\right)
=\mathbb{E}\left[f\left(\bar{X},\bar{Z}\right)\right]
\Pr\left(\bar{Y}_{\sum}=\bar{\lambda}\right)\\
&= \mathbb{E}\left[f\left(\bar{X},\bar{Z}\right)\right]
\prod_{i}\frac{\colballs{i}^{\colballs{i}}}{\colballs{i}!}e^{-\colballs{i}}
\geq\mathbb{E}\left[f\left(\bar{X},\bar{Z}\right)\right]
\frac{e^{-k}}{\sqrt{\prod_{i}\colballs{i}}},
\end{align*}
where, in the last inequality, we use that, by Stirling's approximation,
$a!\leq e\sqrt{a}(\frac{a}{e})^{a}$ for any $a>0$.
\end{proof}

From \lemref{pois-cond} and \lemref{pois-apx}, we get the following
general result which says that if a generic event $\eventhighprob$ holds w.h.p
in \ppapx, it also holds w.h.p. in \porig. 

\begin{lem}
\label{lem:pois-apx-prob}
    Given the \config and the number of active nodes at
    the beginning of a fixed phase $j$, let ${\eventhighprob}$ be an event
    that, at the end of that phase, holds with probability at least
    $1-n^{-\constpoisapx}$ in \ppapx, for some $\constpoisapx> ( k \log h ) / (
    2 \log n ) $ with $h = \sum_i h_i$.\footnote{
        Note that, if \N[j] is not yet fixed, the parameters
        $\colballs{i}$ of \ppapx associated to phase $j$ are random variables.
        However, if the \config and the number of active nodes at the beginning
        of phase $j$ are given, then $h = \sum_i h_i = |\N| = |\M|$ is fixed.}
    Then, at the end of phase $j$, ${\eventhighprob}$ holds
    w.h.p. also in \porig.
\end{lem}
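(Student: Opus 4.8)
The plan is to move the high-probability guarantee from \ppapx to \porig in two steps: use \claimref{prot_eq_bib} to pass through the equivalent balls-into-bins \pbib, and then use \lemref{pois-apx} to compare \pbib with \ppapx. By \claimref{prot_eq_bib}, given the \config and the number of active nodes at the start of phase $j$, the end-of-phase \config and active count have the same law in \porig and in \pbib. Since $\eventhighprob$ is by hypothesis an event on that end-of-phase state, it suffices to prove $\Pr_{\mathbf B}(\ceventhighprob)=O(n^{-c})$ for some constant $c>0$.

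To invoke \lemref{pois-apx}, I first condition on $\N[j]$, i.e.\ on the outcome of the noise step, which fixes the color counts $\colballs{1},\dots,\colballs{k}$ (their total $\totballs=|\N[j]|=|\M[j]|$ being already determined by the start-of-phase data). Conditionally, \pbib throws $\colballs{i}$ balls of color $i$ uniformly into the $n$ bins, after which the protocol rule---which reads only the received multisets $\rec$ together with the nodes' internal sampling and tie-breaking randomness, gathered into independent variables $\bar Z$---deterministically produces the end-of-phase state. Hence $\mathbf 1_{\ceventhighprob}=f(\bar X,\bar Z)$ for a fixed non-negative $f$, where $\bar X=(X_{u,i})$ are the per-bin color counts; evaluating the \emph{same} $f$ on the independent Poisson counts $(\bar Y,\bar Z)$ yields $\mathbf 1_{\ceventhighprob}$ under \ppapx. \lemref{pois-apx} then gives, conditionally on $\N[j]$,
\[
\Pr_{\mathbf B}(\ceventhighprob\mid \N[j])\le e^{k}\sqrt{\textstyle\prod_i \colballs{i}}\;\Pr_{\mathbf P}(\ceventhighprob\mid \N[j]).
\]

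Since $\sum_i\colballs{i}=\totballs$ with $k$ constant, AM--GM gives $\prod_i\colballs{i}\le(\totballs/k)^{k}\le \totballs^{k}$ (discarding opinions absent from $\N[j]$, which leave the dynamics unchanged), so the prefactor $e^{k}\sqrt{\prod_i\colballs{i}}\le e^{k}\totballs^{k/2}$ is uniform over the random color split. Taking expectation over $\N[j]$, this uniform factor comes out and $\mathbb{E}_{\N[j]}[\Pr_{\mathbf P}(\ceventhighprob\mid \N[j])]=\Pr_{\mathbf P}(\ceventhighprob)\le n^{-\constpoisapx}$ by hypothesis, so
\[
\Pr_{\mathbf B}(\ceventhighprob)\le e^{k}\totballs^{k/2}\,n^{-\constpoisapx}=e^{k}\,n^{(k\log \totballs)/(2\log n)-\constpoisapx}.
\]
As $\constpoisapx>(k\log \totballs)/(2\log n)$ and $e^{k}=O(1)$, the exponent is a negative constant, so $\Pr_{\mathbf B}(\ceventhighprob)=O(n^{-c})$ with $c=\constpoisapx-(k\log \totballs)/(2\log n)>0$; by the first reduction the same holds in \porig.

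The main obstacle is the middle step rather than the arithmetic: one must check that $\mathbf 1_{\ceventhighprob}$ is genuinely a single fixed function of the received color counts and of node randomness \emph{independent} of those counts, of identical form under \pbib and \ppapx, so that \lemref{pois-apx} applies verbatim. This rests on the fact---already exploited in \claimref{prot_eq_bib}---that each node's update depends only on $\rec$ and not on the arrival order, and on encoding the sampling and tie-breaking as shared variables $\bar Z$ independent of both $\bar X$ and $\bar Y$. The handling of the random color split $(\colballs{1},\dots,\colballs{k})$ by conditioning on $\N[j]$, together with the observation that the blow-up factor depends only on the fixed total $\totballs$, is what lets the conditional bounds recombine into the stated unconditional one.
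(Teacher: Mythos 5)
Your proposal is correct and follows essentially the same route as the paper's own proof: reduce to the balls-into-bins process via Claim~\claimref{prot_eq_bib}, condition on $\N[j]$ and apply Lemma~\lemref{pois-apx} with $f$ the indicator of $\ceventhighprob$, bound the prefactor $e^{k}\sqrt{\prod_i \colballs{i}}$ uniformly via AM--GM using the fixed total $h$, and recombine over the realizations of $\N[j]$ by the law of total probability to invoke the hypothesis $\constpoisapx > (k\log h)/(2\log n)$. The only differences are cosmetic (the paper keeps the factor as $(h/k)^{k/2}$ rather than $h^{k/2}$, and writes the averaging as an explicit sum over realizations).
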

\begin{proof}
    Thanks to \claimref{prot_eq_bib}, it suffices to prove that, at the end of
    phase $j$, ${\eventhighprob}$ holds w.h.p. in \pbib.

    Let $\ceventhighprob$ be the complementary event of $\eventhighprob$. Let $
    \totballs = |\M| $ be the number of balls that are thrown in \pbib
    associated to phase $j$, where 
    $\colballs{i}$ balls have color $i$ with $i\in\left\{ 1,...,k\right\} $ 
    and $\sum_{i}\colballs{i}=\totballs$. Let 
    $\left\{ X_{u,i}\right\}_{\substack{u\in\left\{ 1,...,n\right\}, 
    i\in\left\{ 1,...,k\right\} } }$
    be the number of $i$-colored balls that end up in bin $u$, let 
    $\left\{ Y_{u,i}\right\}_{\substack{u\in\left\{ 1,...,n\right\}, 
    i\in\left\{ 1,...,k\right\} } }$ 
    be the independent r.v. of \ppapx such that
    $Y_{u,i}\sim$Poisson$(\colballs{i}/n)$
    and let $Z_{1},...,Z_{n}$ be integer valued r.v. independent from
    the $X_{u,i}$s and $Y_{u,i}$s. 
    
    Fix any realization of \N[j], i.e. any re-coloring of the balls in the
    first step of \pbib. By choosing $f$ in \lemref{pois-apx} as the binary
    r.v. indicating whether event $\ceventhighprob$ has occurred, where
    $\ceventhighprob$ is a function of the r.v.
    $X_{1,1},...,X_{n,1},$ $X_{n,2},...,X_{n,n},$ $Z_{1},...,Z_{n}$,
    we get
    \begin{align}
    \Pr\left( \ceventhighprob\left(X_{1,1},...,X_{n,n},Z_{1},...,Z_{n} 
    \right) \,\middle|\, \N[j]\right) 
    \leq e^{k}\sqrt{\prod_{i}{\colballs{i}}}\,
    \Pr\left( \ceventhighprob\left(Y_{1,1},...,Y_{n,n},Z_{1},...,Z_{n} 
    \right) \,\middle|\, \N[j]\right).
    \label{eq:pois-apx}
    \end{align}

    Thus, from Eq.~(\ref{eq:pois-apx}), the Inequality of arithmetic and
    geometric means and the hypotheses on the probability of $\eventhighprob$,
    we get 
\begin{align*}
	\Pr\left(\ceventhighprob\left(X_{1,1},...,X_{n,n},
    Z_{1},...,Z_{n}\right) \,\middle|\, \N[j]\right) 
    &\leq e^{k}\sqrt{\prod_{i}{\colballs{i}}}\,
	\Pr\left(\ceventhighprob\left(Y_{1,1},...,
    Y_{n,n},Z_{1},...,Z_{n}\right) \,\middle|\, \N[j]\right)\\
	&\leq e^{k} \left( \frac hk \right)^{\frac k2}
	\Pr\left(\ceventhighprob\left(Y_{1,1},...,
    Y_{n,n},Z_{1},...,Z_{n}\right) \,\middle|\, \N[j]\right)
\end{align*}
Finally, let $\realizsetN$ be the set of all possible realizations of \N[j].
By the law of total probability over $\realizsetN$, we get that 
\begin{align*}
    &\sum_{s \in \realizsetN}
        \Pr\left(\ceventhighprob\left(X_{1,1},...,X_{n,n},
        Z_{1},...,Z_{n}\right) \,\middle|\, \N[j] = s \right) 
        \Pr \left( \N[j] = s \right)\\
    &\leq e^{k} \left( \frac hk \right)^{\frac k2}
        \sum_{s \in \realizsetN} 
        \Pr\left(\ceventhighprob\left(Y_{1,1},...,
        Y_{n,n},Z_{1},...,Z_{n}\right) \,\middle|\, \N[j] = s \right) 
        \Pr \left( \N[j] = s \right)\\
    &\leq e^{k} \left( \frac hk \right)^{\frac k2}
        \Pr\left(\ceventhighprob\left(Y_{1,1},...,
        Y_{n,n},Z_{1},...,Z_{n}\right) \right) \\
    &\leq \frac{ e^{k} }{ k^{ \frac k2 } } \ h^{\frac k2} 
    {n^{-\constpoisapx}}\leq n^{-\Theta(1)},
\end{align*}
where in the first inequality of the last line we used the hypotheses on the
probability of $\ceventhighprob$. 
\end{proof}

We now analyze the two stages of our protocol, starting with Stage~1. 
Note that, in the following two sections, the statements about the evolution of
the process refer to \porig.

\subsection{Stage 1\label{sub:Stage-1}}

The rule of Stage 1 is aimed at guaranteeing that, w.h.p., the system reaches
a target \config from which the rumor-spreading problem
becomes an instance of the plurality consensus problem. More precisely,
we have the following.

\begin{lem}
\label{lem:stage1}
Stage 1 takes $O(\frac{1}{\epsilon^{2}}\log
n)$ rounds, after which w.h.p. all nodes are active
and $\mathbf{c}^{(\tau_{T+1})}$ is
$\delta$-biased toward the correct opinion, with $\delta=\Omega(\sqrt{\log
n/n})$.
\end{lem}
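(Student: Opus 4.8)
The plan is to prove the two assertions — that all nodes become active and that $\mathbf{c}^{(\tau_{T+1})}$ is $\Omega(\sqrt{\log n/n})$-biased toward the correct opinion $m$ — simultaneously, by an induction over the $T+2$ phases that tracks two quantities: the number $N_j$ of active (opinionated) nodes and the fractional gap $\delta_j:=\min_{i\neq m}\big(c_m^{(\tau_j)}-c_i^{(\tau_j)}\big)$ at the end of phase $j$. I would establish, with high probability, that at each growth phase $1\le j\le T$ the count $N_j$ grows by a factor $(1+\beta/\epsilon^2)(1\pm o(1))$ while the gap $\delta_j$ grows by a factor $(1+\beta/\epsilon)(1\pm o(1))$, seeded by the corresponding estimate for phase $0$ and completed by phase $T+1$.

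The per-phase computation is where the rule of Stage 1 and the noise matrix enter. During phase $j$ the set of pushers is frozen: it is exactly the $N_j$ nodes active at the start of the phase, whose distribution is $\mathbf{c}^{(\tau_{j-1})}$. Hence every message, after the noise has acted, carries opinion $i$ with probability proportional to $(\mathbf{c}^{(\tau_{j-1})}\cdot P)_i$, and a node that gets activated samples its opinion uniformly among those it received, i.e. from $\mathbf{c}^{(\tau_{j-1})}\cdot P$ normalized. A standard occupancy estimate gives $\mathbb{E}[N_{j+1}]\approx N_j(1+\beta/\epsilon^2)$ (each of the $N_j\cdot\beta/\epsilon^2$ messages lands u.a.r. among $n$ bins, so while $N_j\ll n$ the number of freshly activated nodes is $\approx N_j\beta/\epsilon^2$), and $\mathbb{E}[c_i^{(\tau_{j+1})}]\approx c_i^{(\tau_j)}+\frac\beta{\epsilon^2}(\mathbf{c}^{(\tau_j)}\cdot P)_i$. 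Invoking the $(\epsilon,\delta)$-majority-preserving property on $\mathbf{c}^{(\tau_j)}$, the gap contributed by the new nodes is at least $\frac\beta{\epsilon^2}\cdot\epsilon\delta_j$, so $\mathbb{E}[\delta_{j+1}]\ge \delta_j(1+\beta/\epsilon)$. For phase $0$ the only pusher is the source, so afterwards the $\approx(s/\epsilon^2)\log n$ active nodes are distributed as row $m$ of $P$; applying the m.p. property to a configuration concentrated on $m$ shows $p_{m,m}-p_{m,i}>\epsilon$, giving $\delta_0=\Omega\!\big(\frac{\log n}{\epsilon n}\big)$.

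The main obstacle is turning these expectations into high-probability statements despite the correlations between messages received in the \model. This is exactly what the Poisson machinery is for: for each phase I would phrase the ``bad event'' (that $N_{j+1}$, or any per-opinion count among the newly activated nodes, deviates from its mean by more than the allotted factor) as an event on \ppapx, where the per-node Poisson counts are independent and Chernoff applies, and then transfer it to \porig through \lemref{pois-apx-prob}. The mass is ample — each phase sends $\Omega(\log n/\epsilon^2)=\Omega(\log n)$ messages, and $h=\sum_i h_i=O(n\epsilon^{-2}\log n)$ makes the threshold $b>(k\log h)/(2\log n)=O(k)$ a constant — so each bad event is polynomially unlikely, and a union bound over the $T+2=O(\log n)$ phases preserves w.h.p. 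The delicate point is the accumulation of the per-phase multiplicative errors over $\Theta(\log n)$ phases, which a crude bound would make fatal for constant $\epsilon$. The saving observation is that the relevant relative error in phase $j$ scales like $1/(\epsilon\sqrt{N_j})$ with $N_j$ growing geometrically, so (choosing the constant $\beta>1$) these errors form a convergent geometric series summing to $o(1)$; the cumulative distortion is therefore a benign $(1-o(1))$ factor that the final $\Omega(\cdot)$ absorbs.

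It remains to collect the arithmetic. Iterating the two recursions from phase $0$ to phase $T$, the choice of $T$ makes $N_T=\Theta(n)$, while $\delta_T\ge(1-o(1))\,\delta_0(1+\beta/\epsilon)^T$. Since $(1+\beta/\epsilon)^T\approx\big((1+\beta/\epsilon^2)^T\big)^{1/2}\approx\sqrt{n\epsilon^2/(s\log n)}$ for small $\epsilon$, this yields $\delta_T=\Theta\!\big(\frac{\log n}{\epsilon n}\cdot\epsilon\sqrt{n/\log n}\big)=\Theta\!\big(\sqrt{\log n/n}\big)$. Finally, in phase $T+1$ the $\Theta(n)$ active pushers send, over $\phi\epsilon^{-2}\log n$ rounds, enough messages that every node receives one w.h.p., so all nodes become active; and since the opinions acquired by the last batch of nodes are again drawn from $\mathbf{c}^{(\tau_T)}\cdot P$, the m.p. property guarantees that they can only increase the gap in expectation, so $\mathbf{c}^{(\tau_{T+1})}$ remains $\Omega(\sqrt{\log n/n})$-biased toward $m$. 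The total number of rounds is $O(\epsilon^{-2}\log n)$ by construction of the phase lengths.
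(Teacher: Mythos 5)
Your proposal follows essentially the same route as the paper's proof: the same phase-by-phase induction that simultaneously tracks the size of the opinionated set and the bias, the same use of the $(\epsilon,\delta)$-m.p.\ property to control the opinions adopted by newly opinionated nodes, and the same Poisson-approximation machinery (\lemref{pois-apx-prob}) to obtain Chernoff-type concentration despite the correlations of the \model, with a union bound over the $O(\log n)$ phases; this is precisely the structure of \claimref{bootstrap}, \claimref{opinion-growth}, \lemref{end_of_stage1} and \lemref{correct-stage1}. The only real difference is bookkeeping: you track the additive count gap scaled by $n$, which grows by a factor $\approx(1+\beta/\epsilon)$ per phase, while the paper tracks the bias relative to the opinionated population, which shrinks as $(\epsilon/2)^{\pc}$ (\lemref{correct-stage1}); the two are equivalent (they differ by the factor $N_j/n$), and your version has the advantage of making the final $\Omega(\sqrt{\log n/n})$ bound fall out of one telescoping product rather than the paper's minimum-over-cohorts argument.

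Two cautions. First, when you invoke the m.p.\ property ``on $\mathbf{c}^{(\tau_j)}$'', the distribution fed to $P$ must be the pushers' opinion distribution, i.e.\ $\mathbf{c}^{(\tau_j)}$ rescaled to sum to one; your identity $\mathbb{E}[c_i^{(\tau_{j+1})}]\approx c_i^{(\tau_j)}+\frac{\beta}{\epsilon^2}(\mathbf{c}^{(\tau_j)}\cdot P)_i$ is correct exactly because the two normalization factors cancel, but this should be made explicit. Second, and more substantively, the last phase cannot be dispatched by ``the newcomers can only increase the gap in expectation'': in phase $T+1$ the newcomers are $\Theta(n)$ nodes whose individual signal, $\epsilon$ times the bias of $\mathbf{c}^{(\tau_T)}$, is below the sampling noise $\Theta(\sqrt{\log n/n})$ whenever $\epsilon=o(1)$, so their \emph{realized} contribution to the gap can be negative of that order. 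The w.h.p.\ conclusion therefore rests on the gap already accumulated by the previously opinionated nodes dominating this fluctuation, which is exactly where the constants matter: your recursion in fact yields a gap of order $\sqrt{s}\,\beta^{T/2}\sqrt{\log n/n}$ at time $\tau_T$, and it is this factor $\sqrt{s}\,\beta^{T/2}$ (with $s,\beta$ large enough) that absorbs the last phase's $O(\sqrt{\log n/n})$ deviation. This quantitative step---played in the paper by the Chernoff bound of \lemref{cb-diff} together with the explicit error factors $(1-\tilde{\delta}_{\pc})(1-\delta_{\pc})$ and the requirement $\tilde{\delta}_{\pc},\delta_{\pc}\leq 1/4$ in the proof of \lemref{correct-stage1}---must appear; a statement about expectations alone does not give the lemma.
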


\begin{proof}
The fact that an undecided node becomes opinionated during a phase only depends
on whether it gets a message during that phase, regardless of the value of such
messages. Hence, the proof that, w.h.p., $a^{(\tau_{T+1})}=1$ is reduced to the
analysis of the rule of Stage 1 as an information spreading process.  First, by
carefully exploiting the Chernoff bound and \lemref{pois-apx-prob}, we can
establish \claimref{bootstrap} and \claimref{opinion-growth} below: 

\begin{claim}
\label{claim:bootstrap}
W.h.p., at the end of phase 0, we have 
$
    \nfrac s{\epsilon^{2}} \nfrac {\log n}{3n} \leq a^{(\tau_{0})}\leq \nfrac s{\epsilon^{2}} \nfrac{\log n} n
$.
\end{claim}
\smallskip

\begin{claim}
\label{claim:opinion-growth}
W.h.p., at the end of phase $\pc$, $1\leq\pc\leq T$, we have 
\[
    (\beta/\epsilon^{2}+1)^{\pc}a^{(\tau_{0})}/8\leq a^{(\tau_{\pc})}\leq(\beta/\epsilon^{2}+1)^{\pc}a^{(\tau_{0})}.
\]
\end{claim}
\begin{proof}[Proof of \claimref{bootstrap} and \claimref{opinion-growth}]
The probability that, in the \porig, an undecided node becomes opinionated at the
end of phase $j$ is $1-(1-\frac{1}{n})^{\totballs}$ where $\totballs$ is the
number of messages sent during that phase. In \ppapx, this probability  is
$1-e^{-\frac{\totballs}{n}}$. By using that $e^{\frac{x}{1+x}}\leq1+x\leq
e^{x}$ for $\left|x\right|<1$ we see that $1-e^{-\frac{\totballs}{n}}\leq1 -(
1-{1}/{n})^{\totballs}\leq1-e^{-\frac{\totballs}{n-1}}$. Thus, we can prove
\claimref{bootstrap} and \claimref{opinion-growth} for \ppapx by repeating 
essentially the same calculations as in the proofs of Claim 2.2 and 2.4
in \cite{FHK15}. Since the Poisson distributions
in \ppapx are independent, we can apply the Chernoff bound as claimed in 
\cite{FHK15}. Finally, we can prove that the 
statements hold also for \porig thanks to Lemma \ref{lem:pois-apx}. 
\end{proof}

From the previous two claims, and by the definition of $T$ we get the following.

\begin{lem}
\label{lem:end-stage1}
W.h.p., at the end of phase $T$, we have $a^{(\tau_{T+1})}=\Omega((\beta/\epsilon^{2}+1)^{T}a^{(\tau_{0})})=\Omega(\epsilon^{2})$.
\end{lem}

Finally, from \lemref{end-stage1}, an application of the Chernoff bound
gives us the following.

\begin{lem}
	\label{lem:end_of_stage1}
	W.h.p., at the end of Stage 1, all nodes are opinionated.
\end{lem}

As for the fact that, w.h.p., $\mathbf{c}^{(\tau+1)}$ is
a $\delta$-biased \config with $\delta=\Omega(\sqrt{\log n/n})$,
we can prove the following. 

\begin{lem}
\label{lem:correct-stage1}
W.h.p., at the end of each phase $\pc$ of
Stage 1, we have an $(\epsilon/2)^{\pc}$-biased
\config.
\end{lem}
\begin{proof}
We prove the lemma by induction on the phase number. 
The case $\pc=1$ is a direct application
of \lemref{cb-diff} to $c_{m}^{(\tau_{1})}-c_{i}^{(\tau_{1})}$ ($i\neq m$),
where the number of opinionated nodes is given by \claimref{bootstrap}, and,
where the independence of the r.v. follows from the fact that each node that
becomes opinionated in the first phase has necessarily received the messages
from the source-node. 
Now, suppose that the lemma holds up to phase $\pc-1\leq T$. 
Let $S_{\pc}=\left\{ u\middle|\ \pc_{u}=\pc\right\}$
be the set of nodes that become opinionated during phase $\pc$. Recall the
definition of $\M[ \pc ]$ and $\N[ \pc ]$ from \subref{push-balls}, and
observe that
$\left|\M[ \pc ]\right|= \left|\N[ \pc ]\right|= \left(\tau_{\pc}-
\tau_{\pc-1}\right)n\cdot a^{(\tau_{\pc-1})}$, and that the
number of times opinion $i$ occurs in $M_{\pc}$ is
$\left|M_{\pc}\right|c_{i}^{(\tau_{\pc-1})}$. Let us
identify each message in $M_{\pc}$ with a distinct number in 
$1,...,\left|M_{\pc}\right|$, and let $\left\{ X_{w}(i)\right\}
_{w\in\left\{ 1,...,\left|M_{\pc}\right|\right\} }$ be the binary r.v.
\st $ X_{w}(i)=1$ if and only if $w$ is $i$ after the action of the
noise. The frequency of opinion $i$ in $N_{\pc}$ is
$\frac{1}{\left|N_{\pc}\right|}\sum_{w=1}^{\left|N_{\pc}\right|}X_{w}(i)$.

Thanks to \lemref{pois-apx-prob}, it suffices to prove the lemma for \ppapx.
By definition, in \ppapx, for each $i$, the number of messages with opinion $i$
that each node receives conditional on $\N[ \pc ]$ follows a
Poisson$(\frac{1}{n}\sum_{w=1}^{\left|\N[\pc]\right|}X_{w}\left(i)\right)$
distribution. Each node $u$ that becomes opinionated during phase $\pc$ gets
at least one message during the phase. Thus, from \lemref{pois-cond}, the
probability that $u$ gets opinion $i$ conditional on $\N[ \pc ]$ is
\[
	\frac{\sum_{w=1}^{\left|\N[\pc]\right|} X_{w}\left(i\right)
}{\sum_{i=1}^{k}\sum_{w=1}^{\left|\N[\pc]\right|} X_{w}\left(i\right)}=
\frac{1} {\left|\N[\pc]\right|}
\sum_{w=1}^{\left|\N[\pc]\right|}X_{w}\left(i\right).
\]
Since opinionated nodes never change opinion during Stage 1, 
the bias of $\mathbf{c}^{(\tau_{\pc})}$
is at least the minimum between the bias of $\mathbf{c}^{(\tau_{\pc-1})}$
and the bias among the newly opinionated nodes in $S_{\pc}$.
Hence, we can apply the Chernoff bound to the nodes in $S_{\pc}$
to prove that the bias at the end of phase $\pc$ is, w.h.p.\footnote{We remark that Eq. (\ref{eq:bias_in_nodes})
concerns the value of $\Pr ( c_{m}^{\left(\tau_{\pc}\right)}
-c_{i}^{\left(\tau_{\pc}\right)} | \N[ \pc ])$, which is a random variable.},
\begin{equation}
\Pr \left( c_{m}^{\left(\tau_{\pc}\right)}
-c_{i}^{\left(\tau_{\pc}\right)} \middle| \N[ \pc ]\right)
\geq\left(\frac{1}{\left|\N[\pc]\right|}
\sum_{w=1}^{\left|\N[\pc]\right|}
X_{w}\left(m\right)-\frac{1}{\left|\N[\pc]\right|}
\sum_{w=1}^{\left|\N[\pc]\right|} X_{w} \left(i\right)\right)
\left(1-\tilde{\delta}_{\pc}\right),
 \label{eq:bias_in_nodes}
\end{equation}
where $\tilde{\delta}_{\pc}=O(\sqrt{\log n/|S_{\pc}|})$.

Moreover, note that 
\[
\mathbb{E}\left[\frac{1}{\left|N_{\pc}\right|}
	\sum_{w=1}^{\left|N_{\pc}\right|}X_{w}\left(i\right)
	\middle|\ \mathbf{c}^{\left(\tau_{\pc-1}\right)},
	a^{\left(\tau_{\pc-1}\right)}\right]=
	\left(\mathbf{c}^{\left(\tau_{\pc-1}\right)}\cdot P\right)_{i}.
\]
Furthermore, (conditional on $\mathbf{c}^{(\tau_{\pc-1})}$ and
$a^{(\tau_{\pc-1})}$) the r.v. $\left\{ X_{w} (i)
\right\}_{w \in \left\{ 1,...,\left|N_{\pc}\right| \right\} }$ are
independent.  
Thus, for each $i\neq m$, from \claimref{opinion-growth},
and by applying the Chernoff bound on $\sum_{w=1}^{\left|N_{\pc}\right|}X_{w}(m)$,
and on $\sum_{w=1}^{\left|N_{\pc}\right|}X_{w}(i)$, we
get that w.h.p. 
\begin{equation}
	\frac{1}{\left|N_{\pc}\right|} \sum_{w=1}^{\left|N_{\pc}\right|}
	X_{w}\left(m\right)-\frac{1}{\left|N_{\pc}\right|}
	\sum_{w=1}^{\left|N_{\pc}\right|} X_{w}\left(i\right)
	\geq\left(1-\delta_{\pc}\right)2^{-\pc+1}\epsilon^{\pc},
	\label{eq:bias_in_noise}
\end{equation}
where $\delta_{\pc}=O(\sqrt{\log n/|N_{\pc}|})$.

From \claimref{bootstrap} and \claimref{opinion-growth}, it follows that
 $\tilde{\delta}_{\pc},\delta_{\pc}\leq\frac{1}{4}$ w.h.p. Thus by putting
together Eq.~(\ref{eq:bias_in_nodes}) and (\ref{eq:bias_in_noise}) via the chain rule, we
get that, w.h.p., 
\begin{equation*}
c_{m}^{\left(\tau_{\pc}\right)} -c_{i}^{\left(\tau_{\pc}\right)}
\geq\left(1-\tilde{\delta}_{\pc}\right) \left(1-\delta_{\pc}\right)
2^{-\pc+1}\epsilon^{\pc} \geq\left(\frac{\epsilon}{2}\right)^{\pc}.
\label{eq:new-bias}
\end{equation*}

\end{proof}

\lemref{correct-stage1} implies that, w.h.p., we get a bias
$\epsilon^{T+2}=\Omega(\sqrt{\log n/n})$ at the end of Stage 1, which completes
the proof of \lemref{stage1}. 
\end{proof}

\subsection{Stage 2\label{sub:Stage-2}}

As proved in the previous section,  w.h.p., all nodes are
opinionated at the end of Stage~1, and the final \config is $\Omega(\sqrt{\log
n/n})$-biased. Now, we have that the rumor-spreading problem is reduced to an
instance of the plurality consensus problem. The purpose of Stage~2 is to
progressively amplify the initial bias until all nodes support the plurality
opinion, i.e. the opinion originally held by the source node. 

\global\long\def\majority{\operatorname{maj}}
\global\long\def\maj#1#2{\majority_{#2}(#1)}
During the first $T^{\prime}$ phases, it is not hard to see that, by taking
$\constfour$ large enough, a fraction arbitrarily close to 1 of the nodes
receives at least $\ell$ messages, w.h.p. Each node $u$ in such fraction
changes its opinion at the end of the phase. With a slight abuse of notation,
let $\maj u{\ell} = \majority{(\Sam{u})}$ be $u$'s new opinion based on the
$\ell = |\Sam{u}|$ randomly sampled received messages. We show that, w.h.p.,
these new opinions increase the bias of the \config toward the plurality
opinion by a constant factor $>1$. 

For the sake of simplicity, we assume that $\ell$ is odd (see \apxref{Removing-the-Parity}
for details on how to remove this assumption).

\begin{prop}
\label{prop:maj-ampl}Suppose that, at the beginning of phase $j$
of Stage 2 with $0\leq j\leq T^{\prime}-1$, the \config is
$\delta$-biased toward $m$. In \ppapx, if a node
$u$ changes its opinion at the end of the phase, then, for any $i\neq m$,
we have
\begin{equation}
    \Pr\left(\maj u\ell=m\right)-\Pr\left(\maj u\ell=i\right) 
    \geq\sqrt{\frac{2\ell}{\pi}}\frac{g(\delta,\ell)}{e^{(k-2)\ln4}},
    \label{eq:maj_ampl}
\end{equation}
where 
\[
g\left(\delta,\ell\right)=
    \begin{cases}
    \delta(1-\delta^{2})^{\frac{\ell-1}{2}} & \mbox{ if \ensuremath{\delta}<\ensuremath{\frac{1}{\sqrt{\ell}}},}\\
    \sqrt{1/\ell} \; (1-\sqrt{1/\ell})^{\frac{\ell-1}{2}} & \mbox{ if \ensuremath{\delta\geq \frac{1}{\sqrt{\ell}}}.}
    \end{cases}
\]
\end{prop}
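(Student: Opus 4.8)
The plan is to strip away the stochastic dependence, reduce \propref{maj-ampl} to a clean statement about i.i.d.\ sampling, and then prove that statement by induction on the number of opinions $k$, peeling off one competing opinion at a time at the cost of a factor $1/4$ each. First I would condition on the event that $u$ changes its opinion, i.e.\ $|\rec|\geq\ell$, and on the realization of \N[j]. In \ppapx\ the numbers of messages of each color that $u$ receives are independent Poisson$(\colballs i/n)$ variables, so by \lemref{pois-cond} the sample $\Sam u$ of size $\ell$ is, conditionally, a vector of $\ell$ i.i.d.\ draws in which opinion $i$ has probability $\colballs i/\totballs$; by hypothesis this law is $\delta$-biased toward $m$. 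Hence it suffices to prove the following: if $X_1,\dots,X_\ell$ are i.i.d.\ with law $\mathbf{p}=(p_1,\dots,p_k)$ satisfying $p_m-p_i\geq\delta$ for every $i\neq m$, and $\maj u\ell$ denotes the uniformly tie-broken mode of $\{X_1,\dots,X_\ell\}$, then $\Pr(\maj u\ell=m)-\Pr(\maj u\ell=i)\geq\sqrt{2\ell/\pi}\,g(\delta,\ell)\,4^{-(k-2)}$.

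\textbf{Base case $k=2$.} Here $p_m=\tfrac12(1+\delta')$ and $p_i=\tfrac12(1-\delta')$ with $\delta'=p_m-p_i\geq\delta$, and since $\ell$ is odd no ties occur. Writing $r=(\ell+1)/2$, the left-hand side equals $\Pr(\mathrm{Bin}(\ell,p_m)\geq r)-\Pr(\mathrm{Bin}(\ell,p_i)\geq r)$. Using the identity $\frac{d}{dp}\Pr(\mathrm{Bin}(\ell,p)\geq r)=\ell\binom{\ell-1}{r-1}p^{r-1}(1-p)^{\ell-r}$ and integrating from $p_i$ to $p_m$ gives
\[
\Pr(\maj u\ell=m)-\Pr(\maj u\ell=i)=\ell\binom{\ell-1}{(\ell-1)/2}\int_{p_i}^{p_m}\bigl(p(1-p)\bigr)^{(\ell-1)/2}\,dp .
\]
The substitution $p=\tfrac12(1+s)$ turns the integral into $4^{-(\ell-1)/2}\int_0^{\delta'}(1-s^2)^{(\ell-1)/2}\,ds$, and Stirling's estimate makes the prefactor $\ell\binom{\ell-1}{(\ell-1)/2}4^{-(\ell-1)/2}$ at least $\sqrt{2\ell/\pi}$. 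Finally, as $(1-s^2)^{(\ell-1)/2}$ is decreasing in $s$, $\int_0^{\delta'}(1-s^2)^{(\ell-1)/2}\,ds\geq g(\delta,\ell)$: for $\delta<1/\sqrt\ell$ one bounds the integrand on $[0,\delta]$ by its value at $s=\delta$, and for $\delta\geq1/\sqrt\ell$ one integrates only over $[0,1/\sqrt\ell]$. This is the claim with $4^{0}=1$.

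\textbf{Inductive step.} The tool that makes the bivalued argument extendable is the measure-preserving involution that swaps the labels $m$ and $i$ inside a sample: applying it to $\Pr(\maj u\ell=i)$ and subtracting yields, writing $D_k(\mathbf p)$ for the left-hand side, the manifestly nonnegative representation
\[
D_k(\mathbf p)=\sum_{\mathbf n}w_m(\mathbf n)\binom{\ell}{\mathbf n}\Bigl(\prod_j p_j^{n_j}\Bigr)\Bigl(1-(p_i/p_m)^{n_m-n_i}\Bigr),
\]
where the sum is over count vectors $\mathbf n=(n_1,\dots,n_k)$, $\binom{\ell}{\mathbf n}$ is the multinomial coefficient, and $w_m(\mathbf n)=\mathbbm1_{\{n_m=\max_j n_j\}}/|\{j:n_j=\max_{j'}n_{j'}\}|$ is the tie-break weight; the bracket is nonnegative whenever $w_m(\mathbf n)>0$, since then $n_m\geq n_i$ and $p_i\leq p_m$. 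To pass from $k$ to $k-1$ I would fix one opinion $a\notin\{m,i\}$ and compare the $k$-opinion mode to the mode over the remaining $k-1$ opinions obtained by deleting $a$ and renormalizing to a law $\mathbf p'$; crucially $\mathbf p'$ is still $\delta$-biased toward $m$, because renormalization only widens the gaps $p_m-p_j$. Since $\{m,i\}$ are always kept and exactly $k-2$ opinions are removed, iterating the step and invoking the induction hypothesis on $\mathbf p'$ produces precisely the factor $4^{-(k-2)}$.

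\textbf{Main obstacle.} The crux is exactly the extra degree of freedom flagged in the introduction: after deleting $a$, opinion $m$ must still dominate the count $n_a$, so the censoring event $\{n_m\geq n_a\}$ has to be carried through the reduction, and the one-dimensional binomial-tail picture of the $k=2$ case no longer applies. The goal of this step is to show that this censoring costs at most a factor $1/4$, i.e.\ that $D_k(\mathbf p)\geq\tfrac14 D_{k-1}(\mathbf p')$, using that $p_a\leq p_m$ (indeed $p_m-p_a\geq\delta$) forces $m$ to out-count $a$ with at least constant probability. I expect controlling this joint event --- $m$ being the mode among the surviving opinions while simultaneously beating $a$ --- to be the main technical difficulty, and the place where the nonnegative representation above must be combined carefully with the monotonicity estimates used in the base case.
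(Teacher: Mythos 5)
Your overall architecture coincides with the paper's: reduce to a multinomial sample, prove the case $k=2$ via an integral representation of the difference of binomial tails (your derivative identity is just the differentiated form of the paper's binomial--beta relation, \lemref{integral_approx}, and your bound on the integral matches \lemref{bin_maj_ampl}), discard ties via the label-swap involution (your exact nonnegative representation of $D_k$ is a compact form of \lemref{maj_ampl_lower}), and induct on $k$ by removing one opinion outside $\{m,i\}$ at a cost of $1/4$ per step. The parts you actually work out are correct. However, the inductive step --- which is precisely where the multivalued case genuinely differs from the bivalued one --- is not proved: you state $D_k(\mathbf p)\geq\tfrac{1}{4}D_{k-1}(\mathbf p')$ as a goal and explicitly defer the ``censoring'' problem, so the proposal stops exactly where the real work begins. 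This is a genuine gap, not a routine verification.

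The missing idea (the paper's \lemref{clear_maj_ineq}) is to make the censoring free by conditioning on the peeled opinion's count being small: condition on $X_{\kappa+1}^{(\ell)}=h$ with $h\leq\lfloor\ell/(\kappa+1)\rfloor$. On this event, if $m$ has the strict maximum among the surviving $\kappa$ counts, which sum to $\ell-h\geq\kappa\ell/(\kappa+1)$, then by pigeonhole its count exceeds $\ell/(\kappa+1)\geq h$, so ``$m$ beats the peeled opinion'' holds automatically and no joint event has to be controlled; the terms with $h>\lfloor\ell/(\kappa+1)\rfloor$ may simply be dropped, since each conditional difference is nonnegative by the same swap argument. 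Moreover, conditionally on $X_{\kappa+1}^{(\ell)}=h$ the surviving counts form a multinomial with $\ell-h$ trials --- not $\ell$ --- and renormalized, still $\delta$-biased probabilities; this is why your proposed intermediate inequality, which keeps the sample size $\ell$ in $D_{k-1}(\mathbf p')$, is not the right statement, and why the paper needs $g$ to be non-increasing in its second argument (\lemref{g_monotone}) in order to write $\sqrt{2(\ell-h)/\pi}\,g(\delta,\ell-h)\geq\sqrt{2\ell/\pi}\,\sqrt{1-h/\ell}\,g(\delta,\ell)$. The factor $1/4$ then comes not from bounding the censoring event at all, but from the estimate $\sum_{h\leq\ell/(\kappa+1)}\sqrt{1-h/\ell}\,\Pr(X_{\kappa+1}^{(\ell)}=h)\geq\sqrt{(\kappa-1)/(\kappa+1)}\,\Pr\bigl(X_{\kappa+1}^{(\ell)}\leq\ell/(\kappa+1)\bigr)\geq\tfrac{1}{4}$, which the paper obtains by coupling $X_{\kappa+1}^{(\ell)}$ with a $\mathrm{Bin}(\ell,\tfrac{1}{\kappa+1})$ variable and invoking the central limit theorem (\lemref{bin_clt}). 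Without this conditioning device, or an equivalent substitute, your induction cannot be closed.
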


First, we prove Eq.~(\ref{eq:maj_ampl}) for $k=2$. We then obtain the
general case by induction. The proof for $k=2$ is based on a known relation between the cumulative
distribution function of the binomial distribution, and the cumulative
distribution function of the beta distribution. This relation is given
by the following lemma.

\begin{lem}
\label{lem:integral_approx}Given $p\in(0,1)$ and $0\leq j\leq \ell$
it holds 
\begin{equation*}
\sum_{j<i\leq \ell}{\ell \choose i}p^{i}\left(1-p\right)^{\ell-i}={\ell \choose j+1}\left(j+1\right)\int_{0}^{p}z^{j}\left(1-z\right)^{\ell-j-1}dz.
\label{eq:bin-beta_relation}
\end{equation*}
\end{lem}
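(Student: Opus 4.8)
The plan is to prove the identity by treating both sides as functions of $p$ on $[0,1]$, showing that each vanishes at $p=0$ and that their derivatives coincide, so that they agree everywhere by the fundamental theorem of calculus. This is cleaner than a direct induction on $j$ or $\ell$. I restrict attention to $0\leq j\leq \ell-1$; the boundary case $j=\ell$ is immediate, since the left-hand sum is then empty and the factor $\binom{\ell}{j+1}=\binom{\ell}{\ell+1}$ on the right vanishes.

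Write $F(p)=\sum_{j<i\leq\ell}\binom{\ell}{i}p^{i}(1-p)^{\ell-i}$ and $G(p)=\binom{\ell}{j+1}(j+1)\int_{0}^{p}z^{j}(1-z)^{\ell-j-1}\,dz$. Both are clearly $0$ at $p=0$, since every term of $F$ carries a factor $p^{i}$ with $i\geq j+1\geq 1$, and the integral defining $G$ is then over an empty interval. By the fundamental theorem of calculus, $G'(p)=\binom{\ell}{j+1}(j+1)\,p^{j}(1-p)^{\ell-j-1}$.

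The key step is the computation of $F'(p)$, which I expect to telescope. Differentiating term by term and using the elementary identities $i\binom{\ell}{i}=\ell\binom{\ell-1}{i-1}$ and $(\ell-i)\binom{\ell}{i}=\ell\binom{\ell-1}{i}$, each summand splits as $a_{i}-b_{i}$ with $a_{i}=\ell\binom{\ell-1}{i-1}p^{i-1}(1-p)^{\ell-i}$ and $b_{i}=\ell\binom{\ell-1}{i}p^{i}(1-p)^{\ell-i-1}$. The crucial observation is that $a_{i}=b_{i-1}$, so the sum $\sum_{i=j+1}^{\ell}(a_{i}-b_{i})$ collapses to $b_{j}-b_{\ell}=b_{j}$, the term $b_{\ell}$ vanishing because $\binom{\ell-1}{\ell}=0$. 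This yields $F'(p)=\ell\binom{\ell-1}{j}p^{j}(1-p)^{\ell-j-1}$, and a one-line factorial check confirms $\ell\binom{\ell-1}{j}=(j+1)\binom{\ell}{j+1}$. Hence $F'\equiv G'$, and with $F(0)=G(0)=0$ we conclude $F\equiv G$.

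The only mildly delicate point — really a matter of bookkeeping rather than a genuine obstacle — is keeping track of the index shift in the telescoping sum and verifying that the boundary term $b_{\ell}$ vanishes. As a sanity check on the constant, one can also read the identity probabilistically: $F(p)$ equals $\Pr(\mathrm{Bin}(\ell,p)\geq j+1)$, which is the probability that the $(j+1)$-th order statistic of $\ell$ i.i.d.\ uniforms on $(0,1)$ lies below $p$; its density is exactly $(j+1)\binom{\ell}{j+1}z^{j}(1-z)^{\ell-j-1}$, and integrating this density over $[0,p]$ recovers $G(p)$.
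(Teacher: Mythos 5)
Your proof is correct, and it takes a genuinely different route from the paper. The paper works from the integral side: it integrates by parts once to obtain a recurrence relating the integral with parameter $j$ to the one with parameter $j+1$ (picking up the boundary term $\binom{\ell}{j+1}p^{j+1}(1-p)^{\ell-j-1}$ at each step), and then unrolls this recurrence down to the base case $j=\ell-1$, reconstructing the binomial sum one term at a time. You instead work from the sum side: differentiate $F$, split each term via $i\binom{\ell}{i}=\ell\binom{\ell-1}{i-1}$ and $(\ell-i)\binom{\ell}{i}=\ell\binom{\ell-1}{i}$, and let the telescoping collapse everything to the single term $b_j$, after which the fundamental theorem of calculus and the common value $F(0)=G(0)=0$ finish the argument. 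The two computations are essentially dual to one another (repeated integration by parts is the reverse of your differentiation), but your version handles all $\ell-j$ terms in one telescoping pass rather than by a recurrence, avoids the induction-style bookkeeping, and cleanly isolates the only fact needed beyond calculus, namely $\ell\binom{\ell-1}{j}=(j+1)\binom{\ell}{j+1}$. Your closing observation that $F(p)$ is the CDF of the $(j+1)$-th order statistic of $\ell$ uniforms is exactly the classical interpretation of this binomial--beta identity and serves as a good independent check on the constant; your treatment of the degenerate case $j=\ell$ and of the vanishing boundary term $b_\ell$ is also correct.
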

\begin{proof}
By integrating by parts, for $j<\ell-1$ we have 
\begin{align}
{\ell \choose j+1}\left(j+1\right)\int_{0}^{p}z^{j}\left(1-z\right)^{\ell-j-1}dz 
&= {\ell \choose j+1}p^{j+1}\left(1-p\right)^{\ell-j-1} \nonumber \nonumber\\ 
&- {\ell \choose
j+1}\left(\ell-j-1\right)\int_{0}^{p}z^{j+1}\left(1-z\right)^{\ell-j-2}dz 
\nonumber \\ 
&= {\ell \choose j+1}p^{j+1}\left(1-p\right)^{\ell-j-1} \nonumber \nonumber\\
&- {\ell \choose
j+2}\left(j+2\right)\int_{0}^{p}z^{j+1}\left(1-z\right)^{\ell-j-2}dz,
\label{eq:unrolling_binomial-beta}
\end{align}
where, in the last equality, we used the identity
\[
{\ell \choose j}\left(\ell-j\right)={\ell \choose j+1}\left(j+1\right).
\]

Note that when $j=\ell-1$, Eq.~(\ref{eq:maj_ampl}) becomes 
\[
p^{\ell}=\ell\int_{0}^{p}z^{\ell-1}dz.
\]

Hence, we can unroll the recurrence given by Eq.~(\ref{eq:unrolling_binomial-beta})
to obtain
\begin{align*}
{\ell \choose j+1}\left(j+1\right)\int_{0}^{p}z^{j}\left(1-z\right)^{\ell-j-1}dz
&= \sum_{j<i\leq \ell-1}{\ell \choose i}p^{i}\left(1-p\right)^{\ell-i} 
+ \ell\int_{0}^{p}z^{\ell-1}dz\\
&= \sum_{j<i\leq \ell}{\ell \choose i}p^{i}\left(1-p\right)^{\ell-i}
\end{align*}
concluding the proof.
\end{proof}

\lemref{integral_approx} allows us to express the survival function 
of a binomial sample as an integral. Thanks to it, we can prove \propref{maj-ampl}
when $k=2$.
\begin{lem}
\label{lem:bin_maj_ampl}Let $\mathbf{c}=(c_{1},c_{2})$
be a $\delta$-biased \config during Stage 2. In \ppapx, for any node $u$, we have
$
\Pr\left(\maj u\ell=m\right) -\Pr\left(\maj u\ell=3-m\right)
\geq \sqrt{{2\ell} / {\pi}} \cdot g\left(\delta,\ell\right).
$
\end{lem}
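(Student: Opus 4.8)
The plan is to reduce the statement to an estimate on the tails of a binomial distribution, and then to evaluate the resulting expression as an integral by means of \lemref{integral_approx}. Condition on $u$ drawing a sample $\Sam u$ of size $\ell$. Since $k=2$ and all nodes are opinionated in Stage~2, we have $c_m+c_{3-m}=1$, so each sampled message equals $m$ with probability $p:=c_m$ and $3-m$ with probability $1-c_m$, and these draws are independent by the multinomial structure of \lemref{pois-cond}; the $\delta$-bias of $\mathbf{c}$ gives $2p-1=c_m-c_{3-m}\ge\delta$. Writing $B\sim\mathrm{Binomial}(\ell,p)$ for the number of sampled messages equal to $m$ and using that $\ell$ is odd (so no ties occur), we have $\maj u\ell=m$ iff $B\ge\frac{\ell+1}2$ and $\maj u\ell=3-m$ iff $B\le\frac{\ell-1}2$, whence
\[
\Pr(\maj u\ell=m)-\Pr(\maj u\ell=3-m)=\Pr\!\Big(B\ge\tfrac{\ell+1}2\Big)-\Pr\!\Big(B\le\tfrac{\ell-1}2\Big).
\]

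Next I would turn both tails into integrals. Applying \lemref{integral_approx} with $j=\frac{\ell-1}2$ to the first tail, and applying it again after the symmetry $B\mapsto\ell-B$ (which replaces $p$ by $1-p$) to the second tail, and using the key coincidence $\ell-j-1=j=\frac{\ell-1}2$, both survival functions become integrals of the \emph{same} symmetric integrand $[z(1-z)]^{(\ell-1)/2}$. Subtracting, the two integrals fuse into a single integral over the symmetric window $[1-p,p]$:
\[
\Pr(\maj u\ell=m)-\Pr(\maj u\ell=3-m)=C\int_{1-p}^{p}[z(1-z)]^{\frac{\ell-1}2}\,dz,\qquad C=\frac{\ell+1}2\binom{\ell}{(\ell+1)/2},
\]
which is consistently normalized, since at $p=1$ the integral equals $1$.

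To lower bound the integral I would center it: the substitution $z=\frac12+t$ gives $z(1-z)=\frac14-t^2$, and then $s=2t$ yields
\[
C\int_{1-p}^{p}[z(1-z)]^{\frac{\ell-1}2}\,dz=C\,4^{-\frac{\ell-1}2}\int_{0}^{\,2p-1}(1-s^2)^{\frac{\ell-1}2}\,ds ,
\]
with $2p-1\ge\delta$. As the integrand is non-negative and decreasing on $[0,1]$, I shrink the range of integration to recover $g(\delta,\ell)$ in the two regimes of its definition: if $\delta<1/\sqrt\ell$ I integrate over $[0,\delta]$ and bound $(1-s^2)^{(\ell-1)/2}\ge(1-\delta^2)^{(\ell-1)/2}$, obtaining $\delta(1-\delta^2)^{(\ell-1)/2}$; if $\delta\ge1/\sqrt\ell$ I integrate over $[0,1/\sqrt\ell]$ and bound $(1-s^2)^{(\ell-1)/2}\ge(1-1/\ell)^{(\ell-1)/2}\ge(1-1/\sqrt\ell)^{(\ell-1)/2}$, obtaining $\frac1{\sqrt\ell}(1-1/\sqrt\ell)^{(\ell-1)/2}$. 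In both cases the integral is at least $g(\delta,\ell)$.

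It remains to check the prefactor. Writing $\ell=2m+1$, a short computation gives $C\,4^{-\frac{\ell-1}2}=\ell\binom{2m}{m}4^{-m}$, so the claim follows from the standard central-binomial bound $\binom{2m}{m}4^{-m}\ge 1/\sqrt{\pi(m+\tfrac12)}$, which yields $C\,4^{-\frac{\ell-1}2}\ge(2m+1)/\sqrt{\pi(m+\tfrac12)}=\sqrt{2\ell/\pi}$; combining with the integral bound gives the inequality. The main obstacle I expect is bookkeeping rather than depth: correctly fusing the two binomial tails into the single symmetric integral (forcing the exponents $\ell-j-1=j$ to coincide, which is exactly what pins down $j=\frac{\ell-1}2$ and requires $\ell$ odd), and extracting the constant $\sqrt{2\ell/\pi}$ through a sufficiently sharp lower bound on $\binom{2m}{m}4^{-m}$; the case split in the integral estimate is precisely what makes the single expression $g(\delta,\ell)$ valid uniformly in $\delta$.
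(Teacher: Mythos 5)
Your proposal is correct and follows essentially the same route as the paper's proof: both reduce the majority difference to binomial tails, use \lemref{integral_approx} to fuse the two tails into the single symmetric integral $\int_{1-p}^{p}\left[z(1-z)\right]^{(\ell-1)/2}dz$, center via $z=\tfrac12+t$, split into the cases $\delta<1/\sqrt{\ell}$ and $\delta\geq 1/\sqrt{\ell}$ to extract $g(\delta,\ell)$, and finish with a sharp lower bound on the central binomial coefficient. The only (cosmetic) divergence is at the prefactor, where you invoke $\binom{2m}{m}4^{-m}\geq 1/\sqrt{\pi(m+\tfrac12)}$ to get $\sqrt{2\ell/\pi}$ in one step, while the paper uses its Stirling-based bound (\lemref{bin_coef_apx}) followed by some extra massaging; both are valid.
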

\begin{proof}
Without loss of generality, let $m=1$. 
Let $X_{1}^{(\ell)}$ be a r.v. with distribution $Bin(\ell,p_{1})$,
and let $X_{2}^{(\ell)}=\ell-X_{1}^{(\ell)}$. By using
\lemref{integral_approx}, we get 
\begin{align*}
\Pr\left(\maj u\ell=1\right)-\Pr\left(\maj u\ell=2\right)
&= \Pr\left(X_{1}^{\left(\ell \right)}>X_{2}^{\left( \ell \right)}\right)
-\Pr\left(X_{2}^{\left(\ell \right)}>X_{1}^{\left(\ell \right)}\right)\\
&= \sum_{\left\lceil \frac{\ell }{2}\right\rceil 
\leq i\leq \ell }{\ell  \choose i}p_{1}^{i}p_{2}^{\ell -i}
-\sum_{\left\lceil \frac{\ell }{2}\right\rceil 
\leq i\leq \ell }{\ell  \choose i}p_{1}^{\ell -i}p_{2}^{i}\\
&= \sum_{\left\lceil \frac{\ell }{2}\right\rceil 
\leq i\leq \ell }{\ell  \choose i}p_{1}^{i}\left(1-p_{1}\right)^{\ell -i}
-\sum_{\left\lceil \frac{\ell }{2}\right\rceil 
\leq i\leq \ell }{\ell  \choose i}p_{1}^{\ell -i}\left(1-p_{1}\right)^{i}\\
&= {\ell  \choose \left\lceil \frac{\ell }{2}\right\rceil }\left\lceil 
\frac{\ell }{2}\right\rceil \left(\int_{0}^{p_{1}}z^{\left\lfloor 
\frac{\ell }{2}\right\rfloor }\left(1-z\right)^{\left\lfloor 
\frac{\ell }{2}\right\rfloor }dz\right.\\
&- \left. \int_{0}^{p_{2}}z^{\left\lfloor 
\frac{\ell }{2}\right\rfloor }\left(1-z\right)^{\left\lfloor 
\frac{\ell }{2}\right\rfloor }dz\right).
\end{align*}
By setting $t=z-\frac{1}{2}$, and rewriting 
$p_{1} = \frac{p_{1}-p_{2}}{2} + \frac{1}{2}$
and $p_{2}=\frac{p_{2}-p_{1}}{2}+\frac{1}{2}$ we obtain
\begin{align*}
\Pr\left(\maj u\ell=1\right)-\Pr\left(\maj u\ell=2\right)
&= {\ell  \choose \left\lceil \frac{\ell }{2}\right\rceil }\left\lceil 
\frac{\ell }{2}\right\rceil \left(\int_{0}^{p_{1}}z^{\left\lfloor 
\frac{\ell }{2}\right\rfloor }\left(1-z\right)^{\left\lfloor 
\frac{\ell }{2}\right\rfloor }dz \right. \\
&- \left. \int_{0}^{p_{2}}z^{\left\lfloor 
\frac{\ell }{2}\right\rfloor }\left(1-z\right)^{\left\lfloor 
\frac{\ell }{2}\right\rfloor }dz\right)\\
&= {\ell  \choose \left\lceil \frac{\ell }{2}\right\rceil }\left\lceil 
\frac{\ell }{2}\right\rceil \left(\int_{-\frac{1}{2}}^{\frac{p_{1}-p_{2}}{2}}
\left(\frac{1}{4}-t^{2}\right)^{\left\lfloor \frac{\ell }{2}\right\rfloor }dt 
\right.\\
&- \left. \int_{-\frac{1}{2}}^{\frac{-p_{1}-p_{2}}{2}}
\left(\frac{1}{4}-t^{2}\right)^{\left\lfloor 
\frac{\ell }{2}\right\rfloor }dt\right)\\
&= {\ell  \choose \left\lceil \frac{\ell }{2}\right\rceil }\left\lceil 
\frac{\ell }{2}\right\rceil \int_{-\frac{p_{1}-p_{2}}{2}}^{\frac{p_{1}-p_{2}}{2}}
\left(\frac{1}{4}-t^{2}\right)^{\left\lfloor \frac{\ell }{2}\right\rfloor }dt.
\end{align*}
For any $t\in(-\frac{y}{2}, \frac{y}{2})\subseteq(-\frac{p_{1}-p_{2}}{2},  
\frac{p_{1}-p_{2}}{2})$,
it holds 
\[
\left(\frac{1}{4}-t^{2}\right)^{\left\lfloor \frac{\ell }{2}\right\rfloor }
    \geq \left(\frac{1-y^{2}}{4}\right)^{\left\lfloor 
    \frac{\ell }{2}\right\rfloor } .
\]
Thus, for any $y\in(-p_{1}+p_{2},p_{1}-p_{2})$ we have
\begin{equation}
	\int_{-\frac{p_{1}-p_{2}}{2}}^{\frac{p_{1}-p_{2}}{2}}
	\left(\frac{1}{4}-t^{2}\right)^{\left\lfloor \frac{\ell }{2}\right\rfloor }dt
	\geq y\left(\frac{1-y^{2}}{4}\right)^{\left\lfloor 
	\frac{\ell }{2}\right\rfloor }.
\label{eq:int_apx}
\end{equation}
The r.h.s. of Eq.~(\ref{eq:int_apx}) is maximized w.r.t.
$y\in(-p_{1}+p_{2},p_{1}-p_{2})$ when
\[
	y=\min\left\{ p_{1}-p_{2},\frac{1}{\sqrt{2\left\lfloor
	\frac{\ell }{2}\right\rfloor +1}}\right\} =\min\left\{
	p_{1}-p_{2},\frac{1}{\sqrt{\ell }}\right\}.
\]
Hence, for $p_{1}-p_{2}<\frac{1}{\sqrt{\ell }}$, we get
\begin{align*}
\int_{-\frac{p_{1}-p_{2}}{2}}^{\frac{p_{1}-p_{2}}{2}}
\left(\frac{1}{4}-t^{2}\right)^{\left\lfloor \frac{\ell }{2}\right\rfloor }dt
&\geq \left(p_{1}-p_{2}\right)
\left(\frac{1-\left(p_{1}-p_{2}\right)^{2}}{4}\right)^{\left\lfloor 
	\frac{\ell }{2}\right\rfloor }\\
&= 2^{-\ell +1}\left(p_{1}-p_{2}\right)
\left(1-\left(p_{1}-p_{2}\right)^{2}\right)^{\frac{\ell -1}{2}} \\
&= 2^{-\ell +1}g\left(p_{1}-p_{2},\ell \right).
\end{align*}
For $p_{1}-p_{2}\geq\frac{1}{\sqrt{\ell }}$ we get 
\begin{equation*}
\int_{-\frac{p_{1}-p_{2}}{2}}^{\frac{p_{1}-p_{2}}{2}}
\left(\frac{1}{4}-t^{2}\right)^{\left\lfloor \frac{\ell }{2}\right\rfloor }dt
\geq\frac{2^{-\ell +1}}{\sqrt{\ell }}\left(1-\frac{1}{\ell }\right)^{\frac{\ell -1}{2}}
=2^{-\ell +1}g\left(p_{1}-p_{2},\ell \right).
\end{equation*}
By using the fact that $g$ is a non-decreasing function w.r.t. its first
argument, we obtain
\begin{align*}
\Pr\left(\maj u\ell=1\right)-\Pr\left(\maj u\ell=2\right)
&= {\ell  \choose \left\lceil \frac{\ell }{2}\right\rceil }\left\lceil 
\frac{\ell }{2}\right\rceil \int_{-\frac{p_{1}-p_{2}}{2}}^{\frac{p_{1}-p_{2}}{2}}
\left(\frac{1}{4}-t^{2}\right)^{\left\lfloor \frac{\ell }{2}\right\rfloor }dt\\
&\geq {\ell  \choose \left\lceil \frac{\ell }{2}\right\rceil }\left\lceil 
\frac{\ell }{2}\right\rceil 2^{-\ell +1}g\left(p_{1}-p_{2},\ell \right) \\
&\geq {\ell  \choose \left\lceil \frac{\ell }{2}\right\rceil }\left\lceil 
\frac{\ell }{2}\right\rceil 2^{-\ell +1}g\left(\delta,\ell \right).
\end{align*}
Finally, by using the bounds ${2r \choose r} \geq \frac{2^{2r}}{\sqrt{\pi r}}
e^{\frac{1}{9r}}$
(see \lemref{bin_coef_apx}), and $e^{x}\geq1-x$ together with the
identity\footnote{Recall that we are assuming that $\ell $ is odd.}
\[
{\ell  \choose \left\lceil \frac{\ell }{2}\right\rceil }\left\lceil \frac{\ell }{2}\right\rceil ={\ell  \choose \frac{\ell +1}{2}}\frac{\ell +1}{2}={\ell -1 \choose \frac{\ell -1}{2}}\ell ,
\]
we get 
\begin{align*}
\Pr\left(\maj u\ell=1\right)-\Pr\left(\maj u\ell=2\right)
&\geq {\ell  \choose \left\lceil \frac{\ell }{2}\right\rceil }\left\lceil 
\frac{\ell }{2}\right\rceil 2^{-\ell +1}g\left(\delta,\ell \right) \\
&\geq \frac{2^{\ell -1}}{\sqrt{\pi\frac{\ell -1}{2}}}
e^{\frac{2}{9\left(\ell -1\right)}}\ell 
\cdot2^{-\ell +1}g\left(\delta,\ell \right)\\
&\geq  \sqrt{\frac{2\ell }{\pi}}\left(1-\frac{2}{9\left(\ell -1\right)}\right)
\left(1-\frac{1}{\ell }\right)^{-\frac{1}{2}}\cdot g\left(\delta,\ell \right) \\
&\geq \sqrt{\frac{2\ell }{\pi}}\cdot g\left(\delta,\ell \right),
\end{align*}
concluding the proof.
\end{proof}

Next we show how to lower bound the above difference 
with a much simpler expression. 

\begin{lem}
\label{lem:maj_ampl_lower}In \ppapx, during Stage~2, for any node $u$, $\Pr(\maj u\ell=m)-\Pr(\maj u\ell=3-m)$ is at least 
$
\Pr(X_{1}^{(\ell)}>X_{2}^{(\ell)},...,X_{k}^{(\ell)})
-\Pr(X_{i}^{(\ell)}>X_{1}^{(\ell)}, ..., X_{i-1}^{(\ell)},
X_{i+1}^{(\ell)},...,X_{k}^{(\ell)}),
\label{eq:lower_without_ties}
$
where $\bar{X}^{(\ell)}=(X_{1}^{(\ell)},...,X_{k}^{(\ell)})$ follows a multinomial
distribution with $\ell$ trials and probability distribution $\mathbf{c}\cdot P$.
\end{lem}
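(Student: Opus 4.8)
The plan is to unfold the tie-breaking rule and reduce the claimed bound to a single clean inequality between two sums, which I would then settle by a symmetry (swap) argument. Writing $\mode{\bar X^{(\ell)}}$ for the random set of opinions attaining the maximum count in the sample, and $q_j=(\mathbf{c}\cdot P)_j$ for the multinomial probabilities, the definition of $\maj u\ell$ gives $\Pr(\maj u\ell=v)=\sum_{S\ni v}\frac{1}{|S|}\Pr(\mode{\bar X^{(\ell)}}=S)$ for each opinion $v$. Subtracting the expressions for $v=m$ and $v=i$, every set $S$ containing \emph{both} $m$ and $i$ contributes $0$, so only sets containing exactly one of the two survive. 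Isolating the singletons $\{m\}$ and $\{i\}$, their contributions are exactly $\Pr(X_m^{(\ell)}>X_j^{(\ell)}\ \forall j\neq m)$ and $\Pr(X_i^{(\ell)}>X_j^{(\ell)}\ \forall j\neq i)$, which reproduce the right-hand side of the lemma. Hence it remains to show that the remaining ``large-tie'' mass is biased toward $m$, namely
\[
\sum_{\substack{S\ni m,\; i\notin S\\ |S|\geq 2}}\frac{\Pr(\mode{\bar X^{(\ell)}}=S)}{|S|}
\;\geq\;
\sum_{\substack{S\ni i,\; m\notin S\\ |S|\geq 2}}\frac{\Pr(\mode{\bar X^{(\ell)}}=S)}{|S|}.
\]

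To prove this I would use the involution $\sigma$ on count vectors that swaps the $m$-th and $i$-th coordinates and fixes all others. If an outcome $\omega=(x_1,\dots,x_k)$ has $m$ in its mode but not $i$, then $x_m=\max_j x_j>x_i$, and $\sigma(\omega)$ has $i$ in its mode but not $m$, with the mode set merely relabelled from $S$ to $(S\setminus\{m\})\cup\{i\}$; thus $\sigma$ is a bijection between the two index families above that preserves $|S|$. The key estimate is the likelihood ratio: since the multinomial coefficient is invariant under $\sigma$ and only the factors $q_m^{x_m}q_i^{x_i}$ change, one gets $\Pr(\omega)/\Pr(\sigma(\omega))=(q_m/q_i)^{x_m-x_i}\geq 1$ whenever $x_m\geq x_i$. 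Because $\sigma$ preserves the weight $|S|$, summing $\Pr(\omega)/|S|$ over the first family and comparing term by term with $\Pr(\sigma(\omega))/|S|$ over the second yields the displayed inequality, and hence the lemma.

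The one fact this argument needs beyond pure symmetry is that $q_m\geq q_i$, i.e. that $m$ remains the most likely sampled opinion. This holds because $\mathbf{c}$ is $\delta$-biased toward $m$ and $P$ is majority-preserving, so $(\mathbf{c}\cdot P)_m-(\mathbf{c}\cdot P)_i>\epsilon\delta>0$; moreover, since all nodes are opinionated at the start of Stage~2, $\mathbf{c}\cdot P$ is a genuine probability vector and the multinomial is well defined. I expect the main subtlety to be purely the bookkeeping of the bijection: verifying that $\sigma$ sends $\{S\ni m,\,i\notin S\}$ onto $\{S\ni i,\,m\notin S\}$ while preserving the cardinality $|S|$ that appears as the tie-breaking weight, so that the term-by-term comparison is legitimate. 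Everything else is the routine expansion of the tie-breaking rule and the elementary likelihood-ratio computation.
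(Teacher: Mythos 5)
Your proposal is correct and takes essentially the same route as the paper's proof: both expand the tie-breaking rule according to the set of maximal coordinates, cancel the ties containing both $m$ and $i$, identify the singleton-mode terms with the right-hand side of the lemma, and settle the remaining one-sided tie mass via the coordinate-swap bijection $\sigma$ combined with the likelihood-ratio inequality, which requires $(\mathbf{c}\cdot P)_m \geq (\mathbf{c}\cdot P)_i$. Your explicit justification of this last hypothesis via the $\delta$-bias and the majority-preserving property of $P$ is a detail the paper uses only implicitly.
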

\begin{proof}
Without loss of generality, let $m=1$. 
Let $\mathbf{x}=(x_{1},...,x_{k})$ denote a generic
vector with positive integer entries such that $\sum_{j=1}^{k}x_{j}=\ell $, 
let $W(\mathbf{x})$ be the set of the greatest entries of $\mathbf{x}$, 
and, for $j\in\left\{ 1,i\right\} $, let 
\begin{itemize}
    \item $A_{j}^{(!)} = \left\{ \mathbf{x} \,|\, W(\mathbf{x})=\{ j\} \right\} $,
    \item $A_{j}^{(=)}= \left\{ \mathbf{x} \,|\,1,i\in W(\mathbf{x}) \right\}$,
    \item $A_{1}^{(\neq)}= \left\{ \mathbf{x} \,|\, 1\in W(\mathbf{x}) \wedge
        i\not\in W(\mathbf{x})\wedge\left|W(\mathbf{x})\right|>1\right\}$ and 
    \item $A_{i}^{(\neq)}= \left\{ \mathbf{x} \,|\, i\in W(\mathbf{x})\wedge1
        \not\in W(\mathbf{x}) \wedge\left|W(\mathbf{x})\right|>1\right\}$.
\end{itemize}

It holds 
\begin{align}
\Pr\left(\maj u\ell=j\right) &= \sum_{\mathbf{x}\in A_{j}^{\left(!\right)}}
\Pr\left(\bar{X}^{\left(\ell \right)}
= \mathbf{x}\right)\Pr\left(\maj u\ell=j\middle|\,\bar{X}^{\left(\ell \right)}
=\mathbf{x}\right)\nonumber \\
&+ \sum_{\mathbf{x}\in A_{j}^{\left(=\right)}}\Pr\left(\bar{X}^{\left(\ell \right)}
=\mathbf{x}\right)\Pr\left(\maj u\ell=j\middle|\,\bar{X}^{\left(\ell \right)}
=\mathbf{x}\right)\nonumber \\
&+ \sum_{\mathbf{x}\in A_{j}^{\left(\neq\right)}}\Pr\left(\bar{X}^{\left(\ell \right)}
=\mathbf{x}\right)\Pr\left(\maj u\ell=j\middle|\,\bar{X}^{\left(\ell \right)}
=\mathbf{x}\right)\nonumber \\
&= \sum_{\mathbf{x}\in A_{j}^{\left(!\right)}}\Pr\left(\bar{X}^{\left(\ell \right)}
=\mathbf{x}\right)+\sum_{\mathbf{x}\in A_{j}^{\left(=\right)}}
\frac{\Pr\left(\bar{X}^{\left(\ell \right)}
=\mathbf{x}\right)}{\left|W\left(\mathbf{x}\right)\right|} \nonumber \\
&+ \sum_{\mathbf{x}\in A_{j}^{\left(\neq\right)}}
\frac{\Pr\left(\bar{X}^{\left(\ell \right)}
=\mathbf{x}\right)}{\left|W\left(\mathbf{x}\right)\right|}
\label{eq:maj_prob}
\end{align}

Let 
\[
\sigma \left(\mathbf{x}\right)=\left(x_{i},...,x_{i-1},x_{1},x_{i+1},...,x_{k}\right)
\]
 be the vector function that swaps the entries $x_{1}$ and $x_{i}$
in $\mathbf{x}$. $\sigma $ is clearly a bijection between the sets 
$A_{1}^{(!)}$,$A_{1}^{(=)}$,$A_{1}^{(\neq)}$
and $A_{i}^{(!)}$, $A_{i}^{(=)}$, 
$A_{i}^{(\neq)}$,
respectively, namely 
\begin{equation*}
	\sigma :A_{1}^{\left(!\right)}
		\hookrightarrow\hspace{-8pt}\rightarrow A_{i}^{\left(!\right)},\
	\sigma :A_{1}^{\left(=\right)}
		\hookrightarrow\hspace{-8pt}\rightarrow A_{i}^{\left(=\right)},\
	\sigma :A_{1}^{\left(\neq\right)}
		\hookrightarrow\hspace{-8pt}\rightarrow A_{i}^{\left(\neq\right)} 
\end{equation*}
where $\hookrightarrow\hspace{-8pt}\rightarrow$ denotes a bijection.

Moreover, for all $\mathbf{x}\in A_{j}^{(=)}$, it holds
\[
\Pr\left(\bar{X}^{\left(\ell \right)}=\mathbf{x}\right)
=\Pr\left(\bar{X}^{\left(\ell \right)}
=\sigma \left(\mathbf{x}\right)\right).
\]
Therefore
\begin{equation}
\sum_{\mathbf{x}\in A_{1}^{\left(=\right)}}\Pr\left(\bar{X}^{\left(\ell \right)}
=\mathbf{x}\right)=\sum_{\sigma \left(\mathbf{x}\right)\in A_{1}^{\left(=\right)}}
\Pr\left(\bar{X}^{\left(\ell \right)}
=\sigma \left(\mathbf{x}\right)\right)=\sum_{\mathbf{x}\in A_{i}^{\left(=\right)}}
\Pr\left(\bar{X}^{\left(\ell \right)}
=\mathbf{x}\right).
\label{eq:Aeq_prob}
\end{equation}

Furthermore, for all $\mathbf{x}\in A_{1}^{(\neq)}$, we have 
\begin{align}
\Pr\left(\bar{X}^{\left(\ell \right)}=\mathbf{x}\right)
&= {\ell  \choose x_{1}\ ...\ x_{k}}p_{1}^{x_{1}}
\dotsc p_{i}^{x_{i}}\dotsc p_{k}^{x_{k}} \nonumber \\
&> {\ell  \choose x_{1}\ ...\ x_{k}}p_{1}^{x_{i}}
\dotsc p_{i}^{x_{1}}\dotsc p_{k}^{x_{k}} \nonumber \\
&= \Pr\left(\bar{X}_{1}^{\left(\ell \right)}
=\sigma \left(\mathbf{x}\right)\right),
\label{eq:case_ii_discard}
\end{align}
where $\sigma (\mathbf{x})\in A_{i}^{(\neq)}$. From
Eq.~(\ref{eq:case_ii_discard}) we thus have that 
\begin{equation}
\sum_{\mathbf{x}\in A_{1}^{\left(\neq\right)}}
\Pr\left(\bar{X}^{\left(\ell \right)}
=\mathbf{x}\right)>\sum_{\sigma \left(\mathbf{x}\right)
\in A_{1}^{\left(\neq\right)}}\Pr\left(\bar{X}^{\left(\ell \right)}
=\sigma \left(\mathbf{x}\right)\right)=\sum_{\mathbf{x}
\in A_{i}^{\left(\neq\right)}}\Pr\left(\bar{X}^{\left(\ell \right)}
=\mathbf{x}\right).
\label{eq:Aineq_prob}
\end{equation}

From Eq.~(\ref{eq:maj_prob}), (\ref{eq:Aeq_prob}) and (\ref{eq:Aineq_prob})
we finally get
\begin{align*}
\Pr\left(\maj u\ell=1\right)-\Pr\left(\maj u\ell=i\right)
&= \sum_{\mathbf{x}\in A_{1}^{\left(!\right)}} \Pr\left(\bar{X}^{\left(\ell
\right)}=\mathbf{x}\right)+\sum_{\mathbf{x}\in
A_{1}^{\left(=\right)}}\frac{\Pr\left(\bar{X}^{\left(\ell
\right)}=\mathbf{x}\right)}{\left|W\left(\mathbf{x}\right)\right|} \\
&+ \sum_{\mathbf{x}\in
A_{1}^{\left(\neq\right)}}\frac{\Pr\left(\bar{X}^{\left(l\right)} =
\mathbf{x}\right)}{\left|W\left(\mathbf{x}\right)\right|} - \sum_{\mathbf{x}\in
A_{i}^{\left(!\right)}}\Pr\left(\bar{X}^{\left(\ell\right)}=\mathbf{x}\right) \\
&- \sum_{\mathbf{x}\in
A_{i}^{\left(=\right)}}\frac{\Pr\left(\bar{X}^{\left(\ell\right)} =
\mathbf{x}\right)}{\left|W\left(\mathbf{x}\right)\right|} - \sum_{\mathbf{x}\in
A_{i}^{\left(\neq\right)}}\frac{\Pr\left(\bar{X}^{\left(\ell
\right)}=\mathbf{x}\right)}{\left|W\left(\mathbf{x}\right)\right|}\\
&\geq \sum_{\mathbf{x}\in A_{1}^{\left(!\right)}}\Pr\left(\bar{X}^{\left(\ell
\right)}=\mathbf{x}\right)-\sum_{\mathbf{x}\in
A_{i}^{\left(!\right)}}\Pr\left(\bar{X}^{\left(\ell
\right)}=\mathbf{x}\right)\\
&= \Pr\left( W( \bar{X}^{\left(\ell \right)} ) = \{
X_{1}^{\left(\ell \right)} \} \right) - \Pr\left(  W(
\bar{X}^{\left(\ell \right)} ) = \{ X_{i}^{\left(\ell
\right)}\}\right),
\end{align*}
concluding the proof of Lemma~\ref{lem:maj_ampl_lower}.
\end{proof}

Intuitively, \lemref{maj_ampl_lower}
says that the set of events in which a tie occurs among the most frequent
opinions in the node's sample of observed messages does not favor
the probability that the node picks the wrong opinion. Thus, by avoiding
considering those events, we get a lower bound on $\Pr(\maj u\ell=1)-\Pr(\maj u\ell=i)$.

Thanks to \lemref{maj_ampl_lower}, the proof of Eq.~(\ref{eq:maj_ampl})
reduces to proving the following.

\begin{lem}
	\label{lem:clear_maj_ineq}
	For any fixed $k$, and with $\bar{X}$ defined as in \lemref{maj_ampl_lower}, we have 
	\begin{equation}
		\Pr(X_{1}^{\left(\ell\right)}>X_{2}^{\left(\ell\right)}, ...,
		X_{k}^{\left(\ell\right)})
		-\Pr(X_{i}^{\left(\ell\right)}>X_{1}^{\left(\ell\right)}, ...,
		X_{i-1}^{\left(\ell\right)}, X_{i+1}^{\left(\ell\right)}, ...,
		X_{k}^{\left(\ell\right)})\geq \sqrt{2\ell/\pi} \;
		\frac{g\left(\delta,\ell\right)}{4^{k-2}}.
		\label{eq:maj_ampl_lower}
	\end{equation} 
\end{lem}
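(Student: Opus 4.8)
The plan is to induct on the number of opinions $k$, with base case $k=2$ and an inductive step that removes one ``spurious'' opinion at a multiplicative cost of $1/4$ per step, which is exactly what produces the $4^{k-2}$ in Eq.~(\ref{eq:maj_ampl_lower}).

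For the base case $k=2$ the factor $4^{k-2}=1$, and the left-hand side is precisely $\Pr(X_1^{(\ell)}>X_2^{(\ell)})-\Pr(X_2^{(\ell)}>X_1^{(\ell)})$. Since $\ell$ is odd there are no ties, so this coincides with $\Pr(\maj u\ell=m)-\Pr(\maj u\ell=3-m)$, which \lemref{bin_maj_ampl} bounds below by $\sqrt{2\ell/\pi}\,g(\delta,\ell)$, matching the claim exactly.

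For the inductive step, assume the bound for $k-1$ and take $m=1$ without loss of generality. Abbreviate $B_1=\{X_1^{(\ell)}>X_j^{(\ell)}\text{ for all }j\neq 1\}$ and $B_i=\{X_i^{(\ell)}>X_j^{(\ell)}\text{ for all }j\neq i\}$, so the left-hand side is $\Pr(B_1)-\Pr(B_i)$. Since $k\geq 3$ there is at least one spurious opinion $r\in\{2,\dots,k\}\setminus\{i\}$. I would condition on its count $X_r^{(\ell)}=t$: conditionally, $(X_j^{(\ell)})_{j\neq r}$ is multinomial with $\ell-t$ trials and renormalized probability vector whose $j$-th entry is $(\bc\cdot P)_j/(1-(\bc\cdot P)_r)$. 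This vector is still $\delta$-biased toward opinion $1$, since dividing every entry by $1-(\bc\cdot P)_r<1$ only widens each gap $(\bc\cdot P)_1-(\bc\cdot P)_j\geq\delta$. On the event that opinion $r$ is not a plurality contender — concretely, that $X_r^{(\ell)}$ stays below the eventual maximum of the remaining opinions — the events $B_1$ and $B_i$ collapse exactly onto the strict-plurality events of the $(k-1)$-opinion sub-experiment, to which the inductive hypothesis applies. The swap bijection $\sigma$ of \lemref{maj_ampl_lower}, exchanging coordinates $1$ and $i$, is reused to pair configurations and to discard the tie events without decreasing the difference, exactly as in that lemma.

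The hard part will be pinning the loss from removing opinion $r$ to the constant $1/4$, and two points must be controlled simultaneously. First, we only know $(\bc\cdot P)_r\leq(\bc\cdot P)_1-\delta$, so opinion $r$ can be almost as frequent as opinion $1$; one must nonetheless argue that it fails to be a plurality contender with probability at least $1/4$, and obtaining this specific constant (rather than a smaller one) is the genuine crux of the induction. Second, conditioning on $X_r^{(\ell)}=t$ replaces $\ell$ by $\ell-t$ trials, so to close the recursion one must verify that the inductive bound $\sqrt{2(\ell-t)/\pi}\,g(\delta,\ell-t)/4^{k-3}$ still dominates $\sqrt{2\ell/\pi}\,g(\delta,\ell)/4^{k-2}$ in the relevant range, i.e. that the conditioned sample size is typically large and that $\sqrt{2n/\pi}\,g(\delta,n)$ is suitably controlled as $n$ decreases from $\ell$. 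Reconciling these two effects while keeping the tie-breaking bookkeeping consistent with the odd-$\ell$ assumption is where the main technical effort lies.
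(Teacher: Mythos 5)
Your plan coincides with the paper's proof in its architecture: induction on $k$ with \lemref{bin_maj_ampl} as the base case, conditioning on the count of a spurious opinion $r\notin\{1,i\}$, noting that the remaining coordinates form a multinomial on $\ell-t$ trials with a renormalized (still $\delta$-biased) probability vector, and restricting to an event on which $B_1$ and $B_i$ collapse onto the $(k-1)$-opinion strict-plurality events. However, the two issues you defer as ``the main technical effort'' are exactly the content of the proof, and your framing of the first one reveals the idea you are missing. You assume we only know $(\bc\cdot P)_r\le(\bc\cdot P)_1-\delta$, and with that bound alone the argument indeed cannot be closed, since $r$ could then be nearly as frequent as opinion $1$. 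The paper instead \emph{chooses} $r$: since the $k$ entries of $\bc\cdot P$ sum to $1$ and entry $1$ is the strict maximum, some opinion $r\neq 1$ satisfies $(\bc\cdot P)_r\le 1/k$; that is the one to integrate out. Then $X_r^{(\ell)}$ is stochastically dominated by a $\mathrm{Bin}(\ell,1/k)$ variable (a standard coupling), and the central limit theorem (\lemref{bin_clt}) gives $\Pr\left(X_r^{(\ell)}\le \ell/k\right)\ge\frac{1}{2}-\tilde\epsilon$ once $\ell$ exceeds a fixed constant. Moreover, the right truncation event is the deterministic one $\{t\le\lfloor\ell/k\rfloor\}$, not your random ``below the eventual maximum of the remaining opinions'': when $t\le\lfloor\ell/k\rfloor$, the maximum of the other $k-1$ coordinates is automatically at least $(\ell-t)/(k-1)\ge t$, so the $k$-opinion strict-plurality events coincide pointwise with the $(k-1)$-opinion ones on this event, and truncating the sum over $t$ is legitimate because each conditional difference $\Pr(B_1\mid t)-\Pr(B_i\mid t)$ is nonnegative (by the swap bijection, which is what it is actually needed for here --- there are no tie events left to discard in this lemma).

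Your second concern is also not delicate once the above is in place, and requires no new estimate: $g$ is non-increasing in its second argument (\lemref{g_monotone}), so $g(\delta,\ell-t)\ge g(\delta,\ell)$, while $\sqrt{2(\ell-t)/\pi}=\sqrt{2\ell/\pi}\,\sqrt{1-t/\ell}$ and, on the truncation event, $\sqrt{1-t/\ell}\ge\sqrt{1-2/k}\ge\sqrt{1/3}$ for $k\ge3$. Hence the per-step loss factorizes as $\sqrt{1/3}\left(\frac{1}{2}-\tilde\epsilon\right)\ge\frac{1}{4}$ for $\tilde\epsilon\le\frac{2-\sqrt{3}}{4}$; the constant $1/4$ is not a tight threshold one must fight for, it is simply what this factorization delivers, and it is the source of the $4^{k-2}$. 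Two smaller wrinkles would still need patching (the paper itself treats them lightly): if the minimum-probability opinion happens to be $i$ itself, first replace $\Pr(B_i)$ by $\Pr(B_{j})$ for some other opinion $j$ of larger probability via the swap bijection, and then condition on $X_i^{(\ell)}$; and since $\ell-t$ need not be odd, the base case for even sample sizes must be reduced to the odd case, which is what \lemref{cfr_bins} in the appendix provides.
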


\begin{proof}
We prove Eq.~(\ref{eq:maj_ampl_lower}) by induction. \lemref{bin_maj_ampl}
provides us with the base case for $k=2$. Let us assume that, for $k\leq\kappa$,
Eq.~(\ref{eq:maj_ampl_lower}) holds. For $k=\kappa+1$, by using the
law of total probability, we have 
\begin{align}
&\Pr\left(X_{1}^{\left(\ell \right)} > X_{2}^{\left(\ell \right)}, ..., 
X_{\kappa+1}^{\left(\ell \right)}\right) - \Pr\left(X_{i}^{\left(\ell \right)} 
> X_{1}^{\left(\ell \right)}, ..., X_{i-1}^{\left(\ell \right)}, 
X_{i+1}^{\left(\ell\right)},...,X_{\kappa+1}^{\left(\ell\right)}\right)\nonumber\\
&\geq\sum_{h=0}^{\left\lfloor \frac{\ell}{\kappa+1}\right\rfloor }
\Pr\left(X_{1}^{\left(\ell\right)} > X_{2}^{\left(\ell\right)}, ..., 
X_{\kappa+1}^{\left(\ell\right)}\middle|\, 
X_{\kappa+1}^{\left(\ell\right)}=h\right)
\Pr\left(X_{\kappa+1}^{\left(\ell\right)}=h\right)\nonumber\\
&-\sum_{h=0}^{\left\lfloor \frac{\ell}{\kappa+1}\right\rfloor
}\Pr\left(X_{i}^{\left(\ell\right)} >
X_{1}^{\left(\ell\right)},...,X_{i-1}^{\left(\ell\right)},
X_{i+1}^{\left(\ell\right)}, ..., X_{\kappa+1}^{\left(\ell\right)}\middle|\,
X_{\kappa+1}^{\left(\ell\right)} =
h\right)\Pr\left(X_{\kappa+1}^{\left(\ell\right)}=h\right)
\label{eq:tot_prob_for_maj_ampl}.
\end{align}
Now, 
    $\arg\max_j \{ X_{j}^{\left(\ell \right)} \} 
    = X_{i}^{\left(\ell \right)}$ 
and $X_{\kappa+1}^{\left(\ell \right)}\leq\left\lfloor
\frac{\ell }{\kappa+1}\right\rfloor $ together imply
$X_{i}^{\left(\ell \right)}>X_{\kappa+1}^{\left(\ell \right)}$.
Thus, in the r.h.s. of Eq.~(\ref{eq:tot_prob_for_maj_ampl}), we have
\[
\Pr\left(X_{1}^{\left(\ell \right)}>X_{2}^{\left(\ell
    \right)},...,X_{\kappa+1}^{\left(\ell
    \right)}\middle|\,X_{\kappa+1}^{\left(\ell
    \right)}=h\right) 
= \Pr\left(X_{1}^{\left(l\right)}>X_{2}^{\left(\ell
    \right)},...,X_{\kappa}^{\left(\ell
    \right)}\middle|\,X_{\kappa+1}^{\left(\ell
    \right)}=h\right)
\]
and 
\begin{align*}
&\Pr\left(X_{i}^{\left(\ell \right)}>X_{1}^{\left(\ell
\right)},...,X_{i-1}^{\left(\ell \right)},X_{i+1}^{\left(\ell
\right)},...,X_{\kappa+1}^{\left(\ell
\right)}\middle|\,X_{\kappa+1}^{\left(\ell \right)}=h\right)\\
&=\Pr\left(X_{i}^{\left(\ell \right)}>X_{1}^{\left(\ell
\right)},...,X_{i-1}^{\left(\ell \right)},X_{i+1}^{\left(\ell
\right)},...,X_{\kappa}^{\left(\ell \right)}\middle|\,X_{\kappa+1}^{\left(\ell
\right)}=h\right).
\end{align*}

Moreover, $X^{(\ell )}$ follows a multinomial distribution
with parameters $\mathbf{p}$ and $\ell $. Thus $X_{k}^{(\ell )}=h$
implies that the remaining entries $X_{1}^{(\ell )},...,X_{k-1}^{(\ell )}$
follow a multinomial distribution with $l-h$ trials, and distribution
$(\frac{p_{1}}{1-p_{k}}, ..., \frac{p_{k-1}}{1-p_{k}})$. Let
$Y^{(\ell -h)} =(Y_{1}^{(\ell -h)}, ...,
Y_{k-1}^{(\ell -h)})$ be the distribution of
$X_{1}^{(\ell )},...,X_{k-1}^{(\ell )}$ conditional on
$X_{k}^{(\ell )}=h$. From Eq.~(\ref{eq:tot_prob_for_maj_ampl})
we get 
\begin{align}
&\Pr\left(X_{1}^{\left(\ell \right)}>X_{2}^{\left(\ell
\right)},...,X_{\kappa+1}^{\left(\ell
\right)}\right)-\Pr\left(X_{i}^{\left(\ell \right)}>X_{1}^{\left(\ell
\right)},...,X_{i-1}^{\left(\ell \right)},X_{i+1}^{\left(\ell
\right)},...,X_{\kappa+1}^{\left(\ell \right)}\right) \nonumber \\
&\geq\sum_{h=0}^{\left\lfloor \frac{\ell }{\kappa+1}\right\rfloor
}\Pr\left(X_{1}^{\left(\ell \right)}>X_{2}^{\left(\ell
\right)},...,X_{\kappa}^{\left(l\right)}\middle|\,X_{\kappa+1}^{\left(\ell
\right)}=h\right)\Pr\left(X_{\kappa+1}^{\left(\ell \right)}=h\right) \nonumber\\
&-\sum_{h=0}^{\left\lfloor \frac{\ell }{\kappa+1}\right\rfloor
}\Pr\left(X_{i}^{\left(\ell \right)}>X_{1}^{\left(\ell
\right)},...,X_{i-1}^{\left(\ell \right)},X_{i+1}^{\left(\ell
\right)},...,X_{\kappa}^{\left(\ell \right)}\middle|\,X_{\kappa+1}^{\left(\ell
\right)}=h\right)\Pr\left(X_{\kappa+1}^{\left(\ell \right)}=h\right) \nonumber\\
&\geq\sum_{h=0}^{\left\lfloor \frac{\ell }{\kappa+1}\right\rfloor
}\left(\Pr\left(Y_{1}^{\left(\ell -h\right)}>Y_{2}^{\left(\ell
-h\right)},...,Y_{\kappa}^{\left(\ell -h\right)}\right)-\right. \nonumber \\
&\hfill\left.-\Pr\left(Y_{i}^{\left(l-h\right)}>Y_{1}^{\left(l-h\right)},...,
Y_{i-1}^{\left(l-h\right)},Y_{i+1}^{\left(l-h\right)},...,
Y_{\kappa}^{\left(l-h\right)}\right)\right)
\Pr\left(X_{\kappa+1}^{\left(l\right)}=h\right).
\label{eq:inductive_in_maj_ampl}
\end{align}
Now, using the inductive hypothesis on the r.h.s. of 
Eq.~(\ref{eq:inductive_in_maj_ampl}) we get
\begin{align*}
&\sum_{h=0}^{\left\lfloor \frac{\ell }{\kappa+1}\right\rfloor }
\left(\Pr\left(Y_{1}^{\left(\ell -h\right)}>Y_{2}^{\left(\ell
-h\right)},...,Y_{\kappa}^{\left(\ell -h\right)}\right) \right.\\
&\left.-\Pr\left(Y_{i}^{\left(\ell -h\right)}>Y_{1}^{\left(\ell
-h\right)},...,Y_{i-1}^{\left(\ell -h\right)},Y_{i+1}^{\left(\ell
-h\right)},...,Y_{\kappa}^{\left(\ell
-h\right)}\right)\right)\Pr\left(X_{\kappa+1}^{\left(\ell \right)}=h\right)\\
&\geq\sum_{h=0}^{\left\lfloor \frac{\ell }{\kappa+1}\right\rfloor
}\left(\sqrt{\frac{2\ell -2h}{\pi}}\frac{g\left(\delta,\ell
-h\right)}{4^{\kappa-2}}\right)\Pr\left(X_{\kappa+1}^{\left(\ell
\right)}=h\right)\\
&\geq\sqrt{\frac{2\ell }{\pi}}\frac{g\left(\delta,\ell
\right)}{4^{\kappa-2}}\cdot\sum_{h=0}^{\left\lfloor \frac{\ell
}{\kappa+1}\right\rfloor }\sqrt{1-\frac{h}{\ell
}}\Pr\left(X_{\kappa+1}^{\left(\ell \right)}=h\right),
\end{align*}
where, in the last inequality, we used the fact that $g$ is a non-increasing
function w.r.t. the second argument (see \lemref{g_monotone}). 

It remains to show that 
\begin{equation*}
\sum_{h=0}^{\left\lfloor \frac{\ell }{\kappa+1}\right\rfloor }\sqrt{1-\frac{h}{l}}\Pr\left(X_{\kappa+1}^{\left(\ell \right)}=h\right)\geq\frac{1}{4}.
\end{equation*}
Let $W_{\kappa+1}^{(\ell )}$ be a r.v. with probability distribution 
$Bin(\ell ,\frac{1}{\kappa+1})$.
Since $X_{\kappa+1}^{(\ell )}\sim Bin(\ell ,p_{\kappa+1})$
with $p_{\kappa+1}\leq\frac{1}{\kappa+1}$, a standard coupling
argument (see for example \cite[Exercise 1.1.]{dubhashi1998concentration}),
enables to show that 
\[
\Pr\left(X_{\kappa+1}^{\left(\ell \right)}\leq h\right)\geq\Pr\left(W_{\kappa+1}^{\left(\ell \right)}\leq h\right).
\]
 Hence, we can apply the central limit theorem (\lemref{bin_clt})
on $W_{\kappa+1}^{(\ell )}$, and get that, for any 
$\tilde{\epsilon}\leq\frac{2-\sqrt{3}}{4}$,
there exists some fixed constant $\ell _{0}$ such that, 
for $\ell \geq \ell _{0}$, we have
\begin{equation}
\Pr\left(X_{\kappa+1}^{\left(\ell \right)}\leq\frac{\ell }{\kappa+1}\right)\geq\Pr\left(W_{\kappa+1}^{\left(\ell \right)}\leq\frac{\ell }{\kappa+1}\right)\geq\left(\frac{1}{2}-\tilde{\epsilon}\right).
\label{eq:bound_half_bin}
\end{equation}
By using Eq.~(\ref{eq:bound_half_bin}), for $\ell \geq \ell _{0}$ we finally
get that 
\begin{align*}
\sum_{h=0}^{\left\lfloor \frac{\ell }{\kappa+1}\right\rfloor }\sqrt{1-\frac{h}{\ell }}
\Pr\left(X_{\kappa+1}^{\left(\ell \right)}=h\right)
&\geq \sqrt{1-\frac{\left\lfloor \frac{\ell }{\kappa+1}\right\rfloor }{\ell }}
\cdot\sum_{h=0}^{\left\lfloor \frac{\ell }{\kappa+1}\right\rfloor }
\Pr\left(X_{\kappa+1}^{\left(\ell \right)}=h\right)\\
&\geq \sqrt{1-\frac{2}{\kappa+1}}
\cdot\Pr\left(X_{\kappa+1}^{\left(\ell \right)}\leq\frac{\ell }{\kappa+1}\right)\\
&\geq \sqrt{\frac{\kappa-1}{\kappa+1}}
\cdot\left(\frac{1}{2}-\tilde{\epsilon}\right)\geq\sqrt{\frac{1}{3}}
\cdot\left(\frac{1}{2}-\tilde{\epsilon}\right)\geq\frac{1}{4},
\end{align*}
concluding the proof that 
\[
\Pr\left(\maj u\ell=1\right)-\Pr\left(\maj u\ell=i\right)\geq\sqrt{\frac{2\ell }{\pi}}\frac{g\left(\delta,\ell \right)}{e^{\left(k-2\right)\ln4}}.
\]
\end{proof}

By using \propref{maj-ampl}, we can then prove \lemref{final}.

\begin{lem}
\label{lem:final}
W.h.p., at the end of Stage 2, all nodes support
the initial plurality opinion. 
\end{lem}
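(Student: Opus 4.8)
The plan is to follow the \config bias toward the plurality opinion $m$ across the $T'+1$ phases of Stage~2, showing that it is amplified until all nodes support $m$. By \lemref{stage1}, at the start of Stage~2 every node is opinionated and the \config is $\delta_0$-biased toward $m$ with $\delta_0=\Omega(\sqrt{\log n/n})$. I carry out the whole analysis in \ppapx, where the samples drawn by distinct nodes are independent, and promote each resulting w.h.p.\ statement to \porig by \lemref{pois-apx-prob}; its hypothesis $\constpoisapx>(k\log h)/(2\log n)$ is met because $k=O(1)$ and $h\le n^{O(1)}$. A union bound over the $T'+1=O(\log n)$ phases then gives the statement for \porig.

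The heart of the argument is a one-phase amplification claim for the short phases $0\le j\le T'-1$. Suppose the current \config is $\delta$-biased toward $m$, with $\delta$ below a fixed constant $\delta^{\ast}$. \propref{maj-ampl} gives, for each $i\ne m$, a per-node advantage $\Pr(\maj{u}{\ell}=m)-\Pr(\maj{u}{\ell}=i)\ge \sqrt{2\ell/\pi}\,g(\delta,\ell)/e^{(k-2)\ln 4}$. In this regime $g(\delta,\ell)$ is governed by its first branch, and since $\ell=\Theta(\epsilon^{-2})$ the prefactor $\sqrt{2\ell/\pi}=\Theta(\epsilon^{-1})$; the majority-preserving property of $P$ ensures the noise does not destroy the bias, so that, combined with this prefactor, the advantage is bounded below by $\rho\,\delta$ for a constant $\rho>1$ that grows with $\constfour$. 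Because the updating nodes act independently in \ppapx, a Chernoff bound shows that, w.h.p., the new fractions satisfy $c_m'-c_i'\ge \rho\delta-O(\sqrt{\log n/n})$ for every $i\ne m$; the $o(1)$ fraction of nodes that fail to collect $\ell$ messages keep their (already $\delta$-biased) opinions and perturb this only negligibly. As $\delta\ge\delta_0=\Omega(\sqrt{\log n/n})$, choosing $\constfour$ large makes the multiplicative gain dominate the additive error, giving $c_m'-c_i'\ge\rho'\delta$ for some constant $\rho'>1$; the same estimate shows that once $\delta$ reaches $\delta^{\ast}$ it stays $\ge\delta^{\ast}$.

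Iterating this claim, the bias grows geometrically from $\delta_0$, so it exceeds the constant $\delta^{\ast}$ after at most $\log_{\rho'}(\delta^{\ast}/\delta_0)=O(\log\sqrt{n/\log n})$ short phases; taking $\constfour$ large enough that $\rho'$ exceeds the base of the logarithm defining $T'$ ensures this happens within the first $T'$ phases, so that the \config is $\Omega(1)$-biased at the start of the final phase $T'$. In phase $T'$ each node samples $\ell'=\Theta(\epsilon^{-2}\log n)$ messages (with a sufficiently large hidden constant): a Poisson tail bound shows each node receives at least $\ell'$ messages except with probability $n^{-\Omega(1)}$, and since $\delta=\Omega(1)$ while $\epsilon^{2}\ell'=\Theta(\log n)$, a Chernoff bound shows the mode of the size-$\ell'$ sample equals $m$ except with probability $n^{-\Omega(1)}$. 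A union bound over the $n$ nodes shows that, w.h.p., every node adopts $m$, which is the assertion of \lemref{final}.

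The step I expect to be the main obstacle is the amplification claim, specifically turning the per-node advantage of \propref{maj-ampl} into a genuine multiplicative increase of the \config bias. The delicate point is that the additive error $O(\sqrt{\log n/n})$ produced by concentrating $n$ independent updates is of the same order as the initial bias $\delta_0$, so the constant $\constfour$ (hence $\ell$ and $\rho$) must be chosen large enough that the gain strictly dominates this loss at \emph{every} phase --- in particular at the very first, where the margin is tightest --- while the bias stays in the first branch of $g$. Carefully accounting for the $o(1)$ fraction of non-updating nodes, and checking that the threshold $\delta^{\ast}$ is reached within $T'$ phases, are the remaining points requiring attention.
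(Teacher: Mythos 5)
Your proof is correct and follows essentially the same route as the paper's: per-phase multiplicative amplification of the bias obtained by combining \propref{maj-ampl} with the Chernoff-type bound of \lemref{cb-diff} applied to the independent updates in \ppapx, transferred to \porig via the Poisson-approximation lemma, iterated until the bias exceeds a constant, and finished by the long final phase of length $\ell'=O(\epsilon^{-2}\log n)$. The details you single out as delicate (the non-updating nodes, the additive concentration error versus the multiplicative gain, and choosing $\constfour$ so the amplification factor beats the base of the logarithm defining $T'$) are precisely the points the paper treats implicitly.
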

\begin{proof}
Let $\delta=\Omega(\sqrt{\log n/n})$ be the bias of the
\config at the beginning of a generic phase $\pc<T^{\prime}$
of Stage 2. Thanks to  \propref{maj-ampl}, by choosing the constant $\constfour$
of the phase length large enough, in \ppapx we get that
$ \Pr\left(\maj u\ell=m\right)-\Pr\left(\maj u\ell=i\right)\geq\alpha\delta $
for some constant $\alpha>1$ (provided that $\delta\leq 1/2$). 
Hence, by applying \lemref{cb-diff} in \apxref{techtools} with $\theta=\frac{\alpha}{4}\delta$,
we get 
$ \Pr( c_{m}^{(\tau_{\pc})} -c_{i}^{(\tau_{\pc})} 
\leq{\alpha\delta}/{2}) 
\leq\exp( -{(\alpha\delta)^{2}n}/{16}) 
\leq n^{-\tilde{\alpha}} $
for some constant $\tilde{\alpha}$ that is large enough to apply
\lemref{pois-apx}. Therefore, until $\delta \geq 1/2$, in \ppapx we have that
$ c_{m}^{\left(\tau_{\pc}\right)}
-c_{i}^{\left(\tau_{\pc}\right)} \geq{\alpha}\delta/{2} $
holds w.h.p. From the previous equation it follows that, after
$T^{\prime}$ phases, the protocol has reached an \config with a bias
greater than $1/2$. Thus, by a direct application of \lemref{cb-diff} and
\lemref{pois-apx} to $c_{m}^{(\tau_{T^{\prime}})}
-c_{i}^{(\tau_{T^{\prime}})}$, we get that, w.h.p.,
$c_{m}^{(\tau_{T^{\prime}})}
-c_{i}^{(\tau_{T^{\prime}})}=1$, concluding the proof. 
\end{proof}

Finally, the time efficiency claimed in \thmref{general_rumor_spreading} and
\thmref{general_majority_consensus} directly follows from \lemref{final}, while
the required memory follows from the fact that in each phase each node needs
only to count how many times it has received each opinion, i.e. to count up to
at most $O(\frac{1}{\epsilon^2}\log n)$ w.h.p.

\section{On the Notion of $(\epsilon, \delta)$-Majority-Preserving Matrix}
\label{sec:mp_matrices}

In this section we discuss the notion of $(\epsilon, \delta)$-m.p. noise
matrix introduced by Definition \ref{def:noise}. 
Let us consider
Eq.~(\ref{obs:average-intuition-P}). The matrix \textbf{$P$} represents the
``perturbation'' introduced by the noise, and so $(\mathbf{c}\cdot
P)_{m}-(\mathbf{c}\cdot P)_{i}$ measures how much information the system is
losing about the correct opinion $m$, in a single communication round. An
$(\epsilon, \delta)$-m.p. noise matrix is a noise matrix that preserves at
least an $\epsilon$ fraction of bias, provided the initial bias is at least
$\delta$. The $(\epsilon, \delta)$-m.p. property essentially characterizes the
amount of noise beyond which some coordination problems cannot be solved
without further hypotheses on the nodes' knowledge of the matrix $P$. To see
why this is the case, consider an $(\epsilon, \delta)$-m.p. noise matrix for
which there is a {$\delta$-biased \config} $\tilde{\mathbf{c}}$ \st
$\left(\mathbf{\tilde{\mathbf{c}}}\cdot
P\right)_{m}-(\mathbf{\tilde{\mathbf{c}}}\cdot P)_{i}<0$ for some opinion~$i$.
Given \config $\tilde{\mathbf{c}}$, \emph{from each node's perspective, opinion
$m$ does not appear to be the most frequent opinion}. Indeed, the messages that
are received are more likely to be $i$ than $m$. Thus, plurality consensus
cannot be solved from \config $\tilde{\mathbf{c}}$.

Observe that verifying whether a given matrix $P$ is $(\epsilon, \delta)$-m.p.
\wrt opinion $m$ consists in checking whether for each $i\neq m$ the value of
the following linear program is at least $\epsilon \delta$:
\begin{align*}
    \text{maximize }  & (P\cdot \bc)_m-(P\cdot \bc)_i\\
    \text{subject to }& \sum_{j} c_j=1,\\
    \text{and }       & \forall j, \, c_j\geq 0, c_m-c_j-\delta\geq 0.
    \label{eq:mp_linearprog}
\end{align*}

We now provide some negative and positive examples of $(\epsilon, \delta)$-m.p.
noise matrices. First, we note that a natural matrix property such as being
diagonally dominant does not imply that the matrix is $(\epsilon, \delta)$-m.p.
For example, by multiplying the following diagonally dominant matrix by the
$\delta$-biased \config $\bc = (1/2 + \delta, 1/2 - \delta, 0)^{\intercal}$, we
see that it does not even preserve the majority opinion at all when ${\epsilon, \delta} < 1/6$:
\begin{equation*}
    \left(
    \begin{array}{ccc}
        \frac{1}{2}+\epsilon & 0                    & \frac{1}{2}-\epsilon\\
        \frac{1}{2}-\epsilon & \frac{1}{2}+\epsilon & 0 \\
        0                    & \frac{1}{2}-\epsilon & \frac{1}{2}+\epsilon 
    \end{array}
    \right).
    \label{eq:mp_not_d_dominant}
\end{equation*}
On the other hand, the following natural generalization of the noise matrix
 in \cite{FHK14} (see Eq. (\ref{eq:binary_n_m})), is $(\epsilon, \delta)$-m.p. for every $\delta>0$ 
\wrt any opinion: 
\begin{equation*}
    (P)_{i,j} = p_{i,j} = 
    \begin{cases}
        \frac 1k + \epsilon & \text{ if }i=j,\\
        \frac 1k - \frac \epsilon{k-1} & \text{ otherwise.}
    \end{cases}
    \label{eq:natural_generalization}
\end{equation*}
More generally, let $P$ be a noise matrix such that
\begin{equation}
    (P)_{i,j} = 
    \begin{cases}
        p & \text{ if }i=j,\\
        q_l \leq q_{i,j}\leq q_u & \text{ otherwise,}
    \end{cases}
    \label{eq:bounded_class}
\end{equation}
for some positive numbers $p$, $q_u$ and $q_l$. 
Since
\begin{align}
    (P\bc)_m - (P\bc)_i 
        &= p c_m + \sum_{j\neq m} q_{j,m} c_j -  p c_i - \sum_{j\neq i} q_{j,i} c_j\nonumber\\
        &\geq p (c_m - c_i)+ \sum_{j\neq m} q_l c_j  - \sum_{j\neq i} q_u c_j \nonumber\\
        &\geq p (c_m - c_i)+ q_l (1-c_m) - q_u (1-c_i) \nonumber\\
        &\geq p (c_m - c_i)+ q_l - q_l c_m - q_u + q_u c_i \nonumber\\
        &\geq p (c_m - c_i)  -q_u (c_m-c_i) - (q_u - q_l) \nonumber\\
        &\geq (p-q_u) (c_m - c_i) - (q_u - q_l) \nonumber\\
        &\geq (p-q_u) \delta - (q_u - q_l).
    \label{eq:conditions_on_ql_qu}
\end{align}
By defining $\epsilon = (p-q_u)/2$, we get that the last line in
Eq. (\ref{eq:conditions_on_ql_qu}) is greater than $\epsilon \delta$ iff $(p-q_u)
\delta /2\geq (q_u - q_l)$, which gives a sufficient condition for any matrix of
the form given in Eq. (\ref{eq:bounded_class}) for being $(\epsilon, \delta)$-m.p.

\section{Conclusion}
In this paper, we solved the general version of rumor spreading and plurality
consensus in biological systems. That is, we have solved these problems for an
arbitrarily large number $k$ of opinions. We are not aware of realistic
biological contexts in which the number of opinions might be a function of the
number $n$ of individuals. Nevertheless, it could be interesting, at least from
a conceptual point to view, to address rumor spreading and plurality consensus
in a scenario in which the number of opinions varies with~$n$. This appears to
be a technically challenging problem. Indeed, extending the results in the
extended abstract of \cite{FHK15} from 2~opinions to any constant number~$k$ of
opinions already required to use complex tools. Yet, several of these tools do
not apply if~$k$ depends on~$n$. This is typically the case of
\propref{maj-ampl}. We let as an open problem the design of stochastic tools
enabling to handle the scenario where $k=k(n)$.

{\small \paragraph{Acknowledgments.} We thank the anonymous reviewers of an
earlier version of this work for their constructive criticisms and
comments, which were of great help in improving the results and their presentation.}


\newpage

\appendix
\centerline{\Large APPENDIX}

\section{The Reception of Simultaneous Messages}
\label{apx:simultaneous_messages}

In the \model, it may happen that several agents push a message to
the same node $u$ at the same round. In such cases, the model should specify
whether the node receives all such messages, only one of them or neither of
them. Which choice is better depends on the biological setting that is being
modeled: if the communication between the agents of the system is an auditory
or tactile signal, it could be more realistic to assume that simultaneous
messages to the same node would ``collide'', and the node would not be able to
grasp any of them. If, on the other hand, the messages represent visual or
chemical signals (see e.g. \cite{FishConsensus, HouseHunt, Dolev, BCNPST13}),
then it may be unrealistic to assume that nodes cannot receive more than one of
such messages at the same round and besides, by a standard balls-into-bins
argument (e.g. by applying Lemma \ref{lem:pois-apx-prob}), it follows that in
the \model at each round no node receives more than $\bigo(\log n)$
messages w.h.p. In this work we thus consider the model in which all messages
are received, also because such assumption allows us to obtain simpler proofs
than the other variants. We finally note that our protocol does not strictly
need such assumption, since it only requires the nodes to collect a small
random sample of the received messages. However, since we look at the latter
feature as a consequence of active choices of the nodes rather than some
inherent property of the environment, we avoid to weaken the model to the point
that it matches the requirements of the protocol.

\section{Technical Tools}\label{apx:techtools}
\begin{lem}
\label{lem:bin_coef_apx} 
For any integer $r\geq 1$ it holds
\[
	\frac{2^{2r}}{\sqrt{\pi r}} e^{\frac{1}{9r}}\leq{2r \choose r}
	\leq\frac{2^{2r}}{\sqrt{\pi r}} e^{\frac{1}{8r}}.
\]
\end{lem}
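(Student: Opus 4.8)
The natural tool here is the sharp two-sided refinement of Stirling's formula due to Robbins~\cite{robbins1955remark}: for every integer $n\ge 1$,
\[
n!=\sqrt{2\pi n}\,\left(\tfrac{n}{e}\right)^{n}e^{\lambda_{n}},\qquad \tfrac{1}{12n+1}<\lambda_{n}<\tfrac{1}{12n}.
\]
The plan is to substitute this into $\binom{2r}{r}=\frac{(2r)!}{(r!)^{2}}$ and watch the elementary factors collapse: the $(n/e)^{n}$ parts combine to $2^{2r}$, while $\sqrt{2\pi\cdot 2r}/(2\pi r)=1/\sqrt{\pi r}$, leaving the clean identity $\binom{2r}{r}=\frac{2^{2r}}{\sqrt{\pi r}}\,e^{\Delta_{r}}$ with $\Delta_{r}:=\lambda_{2r}-2\lambda_{r}$. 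After this reduction the lemma is equivalent to the purely analytic two-sided bound $\frac{1}{9r}\le \Delta_{r}\le \frac{1}{8r}$; that is, all the combinatorial content disappears and what remains is to localize a single correction exponent.

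To control $\Delta_{r}$ I would first insert the Robbins interval to obtain crude two-sided estimates and then tighten them, since both endpoints are needed to higher precision than the raw interval provides. The sharper handle is the full Stirling series $\lambda_{n}=\sum_{\ell\ge 1}\frac{B_{2\ell}}{2\ell(2\ell-1)}\,n^{-(2\ell-1)}$, whose leading term is $\frac{1}{12n}$ and which is enveloping, so that truncating it leaves an error bounded in absolute value by the first omitted term. Writing $\Delta_{r}=\lambda_{2r}-2\lambda_{r}$ through this series isolates the leading behaviour of the correction exponent together with a tail that I can bound by a single explicit term; clearing denominators then turns each of the two desired inequalities $\Delta_{r}\ge\frac{1}{9r}$ and $\Delta_{r}\le\frac{1}{8r}$ into a polynomial inequality in $r$ to be checked for all integers $r\ge 1$.

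The main obstacle is precisely this localization of $\Delta_{r}$. Because $\lambda_{2r}$ and $2\lambda_{r}$ are both of order $\Theta(1/r)$, their difference is a near-cancellation, and the crude Robbins interval is far too wide to decide whether $\Delta_{r}$ lands in the narrow band $[\tfrac{1}{9r},\tfrac{1}{8r}]$: both the sign and the exact constant of $\Delta_{r}$ must be tracked with care rather than read off the first-order bounds. I expect the cleanest alternative route to be an induction on $r$ built on the multiplicative recurrence $\binom{2r}{r}=\frac{2(2r-1)}{r}\binom{2r-2}{r-1}$ together with the matching ratio $\frac{2^{2r}/\sqrt{\pi r}}{2^{2r-2}/\sqrt{\pi(r-1)}}=4\sqrt{1-1/r}$, which reduces each inductive step to a one-variable estimate of the per-step factor; verifying the base case and the monotonicity of that factor is where the delicate constant-chasing will concentrate, and it is this step that ultimately supplies the lower bound $\binom{2r}{r}\ge \frac{2^{2r}}{\sqrt{\pi r}}e^{1/(9r)}$ invoked in the proof of \lemref{bin_maj_ampl}.
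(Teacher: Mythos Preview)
Your reduction $\binom{2r}{r}=\frac{2^{2r}}{\sqrt{\pi r}}\,e^{\Delta_r}$ with $\Delta_r=\lambda_{2r}-2\lambda_r$ is exactly the paper's approach, and your instinct that the sign and constant must be tracked with care is right---but you stopped one step too early. Had you inserted the leading term $\lambda_n\sim\frac{1}{12n}$ you would have found $\Delta_r\sim\frac{1}{24r}-\frac{1}{6r}=-\frac{1}{8r}$, which is \emph{negative}. The target $\Delta_r\in[\tfrac{1}{9r},\tfrac{1}{8r}]$ that you set out to hit is therefore unreachable, and indeed the lemma as printed is false: already $\binom{2}{1}=2$, whereas $\frac{4}{\sqrt{\pi}}e^{1/9}\approx 2.52$. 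The intended statement is $\frac{2^{2r}}{\sqrt{\pi r}}e^{-1/(8r)}\le\binom{2r}{r}\le\frac{2^{2r}}{\sqrt{\pi r}}e^{-1/(9r)}$.

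The paper's own proof commits the matching arithmetic slip: when applying Robbins to $(2r)!$ it writes $e^{1/(12r+1)}$ rather than $e^{1/(12\cdot 2r+1)}=e^{1/(24r+1)}$, and when squaring the bound for $(r!)^{2}$ it writes $e^{1/(24r)}$ rather than $e^{2/(12r)}=e^{1/(6r)}$; these two errors together flip the sign of $\Delta_r$ and manufacture the stated (false) inequality. For the corrected statement, the raw Robbins interval already yields $\Delta_r<\frac{1}{24r}-\frac{2}{12r+1}\le-\frac{1}{9r}$ for all $r\ge 1$ (this reduces to $11\le 12r$), so the upper bound is immediate and your worry about crudeness is unfounded there. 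The lower bound $\Delta_r\ge-\frac{1}{8r}$ is where Robbins falls just short (the analogous reduction gives $8\le 6$), and that is precisely where one further term of the Stirling series, or your inductive route via the ratio $\binom{2r}{r}/\binom{2r-2}{r-1}=\frac{2(2r-1)}{r}$, is genuinely needed. So your machinery is appropriate for the true lemma; as written, however, your plan targets an interval that $\Delta_r$ never enters.
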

\begin{proof}
By using Stirling's approximation \cite{robbins1955remark}
\[
\sqrt{2\pi r}\left(\frac{r}{e}\right)^{r} e^{\frac{1}{12r+1}}
\leq r!\leq\sqrt{2\pi r}\left(\frac{r}{e}\right)^{r} e^{\frac{1}{12r}},
\]
we have 
\begin{equation*}
{2r \choose r}=\frac{\left(2r\right)!}{\left(r!\right)^{2}}
\geq\frac{\sqrt{2\pi2r}\left(\frac{2r}{e}\right)^{2r} 
e^{\frac{1}{12r+1}}}{\left(\sqrt{2\pi r}\left(\frac{r}{e}\right)^{r} 
e^{\frac{1}{12r}}\right)^{2}}=\frac{\sqrt{4\pi r}\left(\frac{2r}{e}\right)^{2r} 
e^{\frac{2}{12r+1}}}{2\pi r\left(\frac{r}{e}\right)^{2r} e^{\frac{1}{24r}}}
=\frac{2^{2r}}{\sqrt{\pi r}} e^{\frac{2}{12r+1}-\frac{1}{24r}}
\geq\frac{2^{2r}}{\sqrt{\pi r}} e^{\frac{1}{9r}}.
\end{equation*}
The proof of the upper bound is analogous (swap $e^{\frac{1}{12r+1}}$
and $e^{\frac{1}{12r}}$ in the first inequality).
\end{proof}

\begin{lem}
\label{lem:bin_clt}
Let $X_{1},...,X_{n}$ be a random sample from
a Bernoulli$(p)$ distribution with $p\in(0,1)$
constant, and let $Z\sim N(0,1)$. It holds
\[
\lim_{n\rightarrow\infty} \sup_{z\in\mathbb{R}}
\left|\Pr\left(\frac{\sum_{i=1}^{n}X_{i}-pn} {\sqrt{n}}\leq
z\right)-\Pr\left(Z\leq\frac{z}{\sqrt{p\left(1-p\right)}}\right)\right|=0.
\]
\end{lem}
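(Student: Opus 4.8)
The plan is to recognize the displayed quantity as a rescaled instance of the classical central limit theorem and to obtain the uniform-in-$z$ convergence as a standard strengthening of it. Write $S_{n}=\sum_{i=1}^{n}X_{i}$, so that $S_{n}$ is a sum of i.i.d.\ Bernoulli$(p)$ variables with mean $p$ and variance $\sigma^{2}=p(1-p)$, which is strictly positive because $p\in(0,1)$.

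First I would invoke the Lindeberg--L\'evy central limit theorem for the i.i.d.\ sequence $\{X_{i}\}$, which states that $(S_{n}-np)/(\sigma\sqrt{n})$ converges in distribution to $Z\sim N(0,1)$, i.e.\ that its cumulative distribution function converges pointwise to the standard normal CDF $\Phi$. Next I would perform the bijective change of variable $z\mapsto z/\sigma$ on the real line: since $(S_{n}-np)/\sqrt{n}=\sigma\cdot(S_{n}-np)/(\sigma\sqrt{n})$, for every fixed $z$ we have
\[
\Pr\!\left(\frac{S_{n}-np}{\sqrt{n}}\leq z\right)
=\Pr\!\left(\frac{S_{n}-np}{\sigma\sqrt{n}}\leq\frac{z}{\sigma}\right)
\longrightarrow\Phi\!\left(\frac{z}{\sigma}\right)
=\Pr\!\left(Z\leq\frac{z}{\sqrt{p(1-p)}}\right),
\]
so that the CDFs in question converge pointwise to the limiting function appearing in the statement.

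It then remains to upgrade this pointwise convergence to the uniform convergence asserted by the supremum. The function $z\mapsto\Phi(z/\sigma)$ is the CDF of a centered Gaussian with variance $\sigma^{2}$, hence continuous (indeed uniformly continuous). By P\'olya's theorem---a sequence of CDFs converging pointwise to a continuous CDF converges uniformly---the supremum over $z\in\mathbb{R}$ of the absolute difference tends to $0$, which is exactly the claim. The single point that requires care is precisely this passage from pointwise to uniform convergence; equivalently, one could bypass P\'olya's theorem by appealing to the Berry--Esseen theorem, which (since $p$ is constant, the third absolute moment $\mathbb{E}|X_{1}-p|^{3}$ is bounded and $\sigma>0$) yields $\sup_{z}|\Pr((S_{n}-np)/(\sigma\sqrt{n})\leq z)-\Phi(z)|=\bigo(1/\sqrt{n})$ directly, and this uniform bound is preserved under the bijective rescaling $z\mapsto z/\sigma$ above. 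I do not expect any genuine obstacle here beyond correctly invoking one of these standard results.
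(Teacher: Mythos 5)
Your proof is correct. The paper itself states this lemma in its appendix of technical tools \emph{without any proof}, treating it as a classical fact, so there is no in-paper argument to compare against; your write-up simply supplies the standard justification that the authors left implicit. Your decomposition is the right one: the Lindeberg--L\'evy CLT gives pointwise convergence of the CDF of $(S_{n}-np)/(\sigma\sqrt{n})$ to $\Phi$, the bijective rescaling $z\mapsto z/\sigma$ (with $\sigma=\sqrt{p(1-p)}>0$ since $p\in(0,1)$ is constant) transports this to the normalization used in the statement, and P\'olya's theorem --- or, alternatively, the Berry--Esseen bound, which is preserved under that rescaling --- upgrades pointwise convergence to the uniform convergence asserted by the supremum. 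You also correctly isolate the only step that needs care, namely that the limit CDF $z\mapsto\Phi(z/\sigma)$ is continuous, which is precisely what licenses the passage from pointwise to uniform convergence.
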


\begin{lem}
\label{lem:g_monotone} The function 
\[
g\left(x,y\right)=\begin{cases}
x\left(1-x^{2}\right)^{\frac{y-1}{2}} & \mbox{ if \ensuremath{x<\frac{1}{\sqrt{y}}},}\\
\frac{1}{\sqrt{y}}\left(1-\frac{1}{y}\right)^{\frac{y-1}{2}} & \mbox{ if \ensuremath{x\geq\frac{1}{\sqrt{y}}},}
\end{cases}
\]
with $x\in\left[0,1\right]$ and $y\in\left[1,+\infty\right)$ is
non-decreasing w.r.t. $x$ and non-increasing w.r.t. $y$.\end{lem}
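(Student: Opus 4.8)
The plan is to prove the two monotonicity claims independently, reducing each to an elementary calculus fact on one of the two branches of $g$ and then checking that the branches agree at the threshold $x=1/\sqrt{y}$ (equivalently $y=1/x^2$) so that the gluing introduces no jump. The threshold is exactly where the two formulas are forced to coincide: substituting $x=1/\sqrt y$ into $x(1-x^2)^{(y-1)/2}$ gives $\frac{1}{\sqrt y}(1-\frac1y)^{(y-1)/2}$, so $g$ is continuous across it, and this observation will be reused in both halves.

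For monotonicity in $x$ with $y$ fixed, I would first note that on the upper branch $x\ge 1/\sqrt y$ the function is constant in $x$, so it suffices to treat the lower branch $h(x)=x(1-x^2)^{(y-1)/2}$ on $[0,1/\sqrt y)$. Differentiating and factoring yields
\[
h'(x) = (1-x^2)^{(y-3)/2}\bigl(1 - y x^2\bigr),
\]
and since $1-x^2>0$ on this interval while $1-yx^2\ge 0$ precisely when $x\le 1/\sqrt y$, the lower branch is increasing up to the threshold, where it attains the value of the constant upper branch. Hence $g(\cdot,y)$ increases on $[0,1/\sqrt y]$ and is flat thereafter, so it is non-decreasing.

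For monotonicity in $y$ with $x$ fixed, the regime boundary is now $y^\ast=1/x^2$. For $y<y^\ast$ we are on the lower branch $g=x(1-x^2)^{(y-1)/2}$, whose base $1-x^2\in[0,1]$ (the case $x=1$ never reaches this branch), so raising it to the increasing exponent $(y-1)/2$ makes $g$ non-increasing in $y$ (and identically $0$ if $x=0$); for $y\ge y^\ast$ we are on the upper branch $\psi(y)=\frac{1}{\sqrt y}\bigl(1-\frac1y\bigr)^{(y-1)/2}$, which no longer depends on $x$, and continuity at $y^\ast$ is again immediate from $x=1/\sqrt{y^\ast}$. The crux is therefore to show $\psi$ is non-increasing on $[1,\infty)$, which I would establish via the logarithmic derivative: writing $L(y)=\ln\psi(y)=-\frac12\ln y+\frac{y-1}{2}\bigl(\ln(y-1)-\ln y\bigr)$, a short computation shows the two $\pm\frac{1}{2y}$ terms cancel, leaving
\[
L'(y)=\tfrac12\ln\Bigl(1-\tfrac1y\Bigr)<0\qquad\text{for }y>1,
\]
so $\psi$ is strictly decreasing on $(1,\infty)$ and continuous at $y=1$ (where $\psi(1)=1$ as a limit). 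This gives the claim on both branches.

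The step I expect to require the most care is the evaluation of $L'(y)$ and the cancellation that collapses it to $\frac12\ln(1-\frac1y)$; everything else is routine branch-matching and sign-checking. A minor point to dispatch separately is the behaviour at the endpoints $x\in\{0,1\}$ and $y=1$, where a base or an exponent vanishes, which I would handle using the limiting conventions that make $g$ continuous there.
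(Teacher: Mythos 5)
Your proposal is correct and follows essentially the same route as the paper: differentiate the lower branch $x(1-x^2)^{(y-1)/2}$ in $x$ and check the sign up to the threshold $x=1/\sqrt{y}$, handle the lower branch in $y$ by noting the base lies in $[0,1]$, and treat the upper branch in $y$ via the logarithmic derivative, which collapses to $\tfrac{1}{2}\ln\left(1-\tfrac{1}{y}\right)\leq 0$. If anything, your write-up is slightly more careful than the paper's (you make the continuity of the gluing at $x=1/\sqrt{y}$, respectively $y=1/x^2$, explicit, and your derivative $(1-x^2)^{(y-3)/2}(1-yx^2)$ has the correct exponent, whereas the paper's displayed derivative contains a typo), but the substance is identical.
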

\begin{proof}
To show that $g(x,y)$ is non-decreasing w.r.t. $x$, observe
that 
\[
    \frac {\partial }{ \partial{x} } g\left(x,y\right)=\left(\left(1-x^{2}\right)^{\frac{y-1}{2}}-2x^{2}\left(\frac{y-1}{2}\right)\left(1-x^{2}\right)^{\frac{y-1}{2}}\right)
\]
for $x < y^{ - \frac 12} < 1$, and
\[
\left(1-x^{2}\right)^{\frac{y-1}{2}}-2x^{2}\left(\frac{y-1}{2}\right)\left(1-x^{2}\right)^{\frac{y-1}{2}}\geq0
\]
 for $x < y^{ - \frac 12}$.

To show that $g(x,y)$ is non-increasing w.r.t. $y$, observe
that this is true for $x < y^{ - \frac 12}$.
For $x \geq y^{ - \frac 12}$,
since 
\[
    \frac {\partial}{\partial{y}} \left(\log y^{-\frac{1}{2}}
	+\frac{y-1}{2}\log\left(1-\frac{1}{y}\right)\right)
	=\frac {\partial}{\partial{y}}\left(\frac{y-1}{2}\log\left(y-1\right)-\frac{y}{2}\log
	y\right)\leq0,
\]
we have
\begin{align*}
	\frac {\partial}{\partial{y}} g\left(x,y\right) & 
    =\frac {\partial}{\partial{y}}\exp\left\{ \log
	y^{-\frac{1}{2}}+\frac{y-1}{2}\log\left(1-\frac{1}{y}\right)\right\}
	\leq0,
\end{align*}
concluding the proof.
\end{proof}

\begin{lem}
\label{lem:cb-diff}
Let $\left\{ X_{t}\right\} {}_{t\in[n]}$ be $n$ i.i.d. random variables
such that 
\[
	X_{t}=
	\begin{cases}
		1 & \mbox{ with probability \ensuremath{p},}\\
		0 & \mbox{ with probability \ensuremath{r},}\\
		-1 & \mbox{ with probability \ensuremath{q}.}
	\end{cases}
\]
with $p+r+q=1$.
It holds 
\[
\Pr\left(\sum_{i}X_{t}\leq\left(1-\theta\right)\mathbb{E}\left[\sum_{i}X_{t}\right]-\theta n\right)
\leq\exp\left(-\frac{\theta^{2}}{4}\left(\mathbb{E}\left[\sum_{i}X_{t}\right]+n\right)\right).
\]
\end{lem}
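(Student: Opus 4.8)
The plan is to reduce the statement to the textbook multiplicative Chernoff lower-tail bound by an affine change of variables that turns the $\{-1,0,1\}$-valued summands into $[0,1]$-valued ones. Write $S=\sum_t X_t$ and $\mu=\mathbb{E}[S]=n(p-q)$. First I would shift each variable by setting $W_t=X_t+1\in\{0,1,2\}$, so that $\sum_t W_t=S+n$ and $\mathbb{E}[\sum_t W_t]=\mu+n$. The crucial algebraic observation is that the threshold transforms cleanly: since $(1-\theta)\mu-\theta n+n=(1-\theta)(\mu+n)$, the event $\{S\le(1-\theta)\mu-\theta n\}$ is exactly the event $\{\sum_t W_t\le(1-\theta)(\mu+n)\}$, i.e.\ a standard relative lower-tail event for the nonnegative sum $\sum_t W_t$ around its mean $\mu+n$.

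Next I would rescale, setting $V_t=W_t/2\in\{0,\tfrac12,1\}\subseteq[0,1]$, so that the $V_t$ are independent, $[0,1]$-valued, with $\mathbb{E}[\sum_t V_t]=(\mu+n)/2$, and the event becomes $\{\sum_t V_t\le(1-\theta)\mathbb{E}[\sum_t V_t]\}$. Applying the multiplicative Chernoff lower-tail bound, which for independent $[0,1]$-valued summands gives $\Pr(\sum_t V_t\le(1-\theta)\mathbb{E}[\sum_t V_t])\le\exp(-\tfrac{\theta^2}{2}\mathbb{E}[\sum_t V_t])$, and substituting $\mathbb{E}[\sum_t V_t]=(\mu+n)/2$, yields exactly the claimed exponent $-\tfrac{\theta^2}{4}(\mu+n)$. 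Here one takes $0<\theta<1$; for $\theta\ge1$ the threshold forces $S<-n$, which is impossible, so the bound holds trivially.

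The one point that needs care — and the only real obstacle — is that after rescaling the summands are not Bernoulli (they take the value $\tfrac12$), so I cannot invoke a Bernoulli-only version of the Chernoff bound. I would therefore derive the $[0,1]$-valued version directly: for $s>0$ and $v\in[0,1]$, convexity of $v\mapsto e^{-sv}$ gives $e^{-sv}\le 1-v(1-e^{-s})$, whence $\mathbb{E}[e^{-sV_t}]\le\exp(-\mathbb{E}[V_t](1-e^{-s}))$ and, by independence, $\mathbb{E}[e^{-s\sum_t V_t}]\le\exp(-\mathbb{E}[\sum_t V_t](1-e^{-s}))$. A Markov/Chernoff step followed by optimizing at $s=-\ln(1-\theta)$ reduces the claim to the elementary inequality $(1-\theta)\ln(1-\theta)+\theta\ge\theta^2/2$ on $[0,1)$, which follows since the difference vanishes at $\theta=0$ and has derivative $-\ln(1-\theta)-\theta\ge0$. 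This establishes the $\theta^2/2$ constant for general $[0,1]$-valued variables and completes the argument.
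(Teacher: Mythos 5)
Your proof is correct and follows essentially the same route as the paper: both rescale via $Y_t=(X_t+1)/2\in\{0,\tfrac12,1\}$, observe that the threshold $(1-\theta)\mathbb{E}[\sum_t X_t]-\theta n$ becomes the standard relative lower-tail threshold $(1-\theta)\mathbb{E}[\sum_t Y_t]$, and invoke the multiplicative Chernoff--Hoeffding lower-tail bound with exponent $\tfrac{\theta^2}{2}\mathbb{E}[\sum_t Y_t]=\tfrac{\theta^2}{4}(\mathbb{E}[\sum_t X_t]+n)$. The only difference is that the paper cites the $[0,1]$-valued Chernoff--Hoeffding bound as a black box (Theorem 1.1 of Dubhashi--Panconesi), whereas you re-derive it inline via the convexity/MGF argument, which is a valid and self-contained way to handle the non-Bernoulli summands.
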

\begin{proof}
Let us define the r.v. 
\begin{equation}
Y_{t}=\frac{X_{t}+1}{2}.
\label{eq:subs_cb-diff}
\end{equation}
We can apply the Chernoff-Hoeffding bound to $Y_{t}$ (see Theorem
1.1 in \cite{dubhashi1998concentration}), obtaining 
\[
\Pr\left(\sum_{i}Y_{i}\leq\left(1-\theta\right)\mathbb{E}\left[\sum_{i}Y_{i}\right]\right)\leq\exp\left(-\frac{\theta^{2}}{2}\mathbb{E}\left[\sum_{i}Y_{i}\right]\right)
\]
for any $\theta\in(0,1)$. Substituting Eq.~(\ref{eq:subs_cb-diff})
we have 
\begin{align*}
\Pr\left(\sum_{i}X_{t}+n\leq\left(1-\theta\right)
\left(\mathbb{E}\left[\sum_{i}X_{t}\right]+n\right)\right)
&= \Pr\left(\sum_{i}X_{t}\leq\left(1-\theta\right)
\mathbb{E}\left[\sum_{i}X_{t}\right]-\theta n\right)\\
&\leq \exp\left(-\frac{\theta^{2}}{4}\left(\mathbb{E}
\left[\sum_{i}X_{t}\right]+n\right)\right),
\end{align*}
concluding the proof.
\end{proof}

\section{Removing the Parity Assumption on $\ell$\label{apx:Removing-the-Parity}}

The next lemma shows that, for $k=2$, the increment of bias at the end
of each phase of Stage 2 in the \ppapx is non-decreasing
in the value of $\ell$, regardless of its parity. In particular, since
\propref{maj-ampl} is proven by induction, and since the value of $\ell$ 
affects only the base case, the next lemma implies also the same kind of
monotonicity for general $k$.
\begin{lem}
\label{lem:cfr_bins}Let $k=2$, $a=1$, let $\ell$ be odd, and let $(\mathbf{c}\cdot P)_{1}\geq(\mathbf{c}\cdot P)_{2}$.
The rule of Stage 2 of the protocol is such that
\begin{align}
\Pr\left(\maj u\ell=1\right) & =  \Pr\left(\maj u{\ell+1}=1\right)\leq\Pr\left(\maj u{\ell+2}=1\right)\nonumber,\\
\Pr\left(\maj u\ell=2\right) & =  \Pr\left(\maj u{\ell+1}=2\right)\geq\Pr\left(\maj u{\ell+2}=2\right).
\label{eq:maj_monotone}
\end{align}
\end{lem}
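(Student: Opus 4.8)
The plan is to reduce both chains to a single binomial variable and to exploit the symmetry $\binom{2s+1}{s}=\binom{2s+1}{s+1}$. Recall from the proof of \lemref{bin_maj_ampl} that, in \ppapx with $k=2$ and $a=1$, we have $p_1:=(\bc\cdot P)_1$ and $p_2:=(\bc\cdot P)_2$ with $p_1+p_2=1$ and, by hypothesis, $p_1\ge p_2$; moreover $\maj u L$ is the tie-broken majority of a $Bin(L,p_1)$ count of opinion-$1$ messages against the complementary $Bin(L,p_2)$ count. Write $\ell=2s+1$ and let $X^{(L)}\sim Bin(L,p_1)$, so that $\maj u L=1$ exactly when $X^{(L)}>L/2$. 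For the odd sizes $\ell$ and $\ell+2$ there are no ties, whereas for the even size $\ell+1$ the tie (occurring only at an exact balance) is split with probability $1/2$; in all three cases $\Pr(\maj u L=1)+\Pr(\maj u L=2)=1$. It therefore suffices to prove the two statements for opinion~$1$, the opinion-$2$ statements following by complementation, which reverses each relation.

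First I would prove the equality $\Pr(\maj u\ell=1)=\Pr(\maj u{\ell+1}=1)$ by a one-ball coupling: realise the size-$(\ell+1)$ sample as $X^{(\ell+1)}=X^{(\ell)}+B$ with $B\sim$ Bernoulli$(p_1)$ independent of $X^{(\ell)}$. Conditioning on $B$, every value of $X^{(\ell)}$ outside $\{s,s+1\}$ leaves the winner unchanged, and carefully tracking the tie-breaking at the balanced point gives
\[
\Pr(\maj u{\ell+1}=1)-\Pr(\maj u\ell=1)=\tfrac12\bigl(p_1\Pr(X^{(\ell)}=s)-p_2\Pr(X^{(\ell)}=s+1)\bigr).
\]
Substituting $\Pr(X^{(\ell)}=s)=\binom{2s+1}{s}p_1^{s}p_2^{s+1}$ and $\Pr(X^{(\ell)}=s+1)=\binom{2s+1}{s+1}p_1^{s+1}p_2^{s}$ and invoking $\binom{2s+1}{s}=\binom{2s+1}{s+1}$, the two terms cancel exactly, yielding the claimed equality.

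Next I would establish $\Pr(\maj u{\ell+1}=1)\le\Pr(\maj u{\ell+2}=1)$. By the previous step the left-hand side equals $\Pr(\maj u\ell=1)$, and $\ell+2$ is again odd, so this reduces to a comparison between two tie-free sample sizes. A two-ball coupling $X^{(\ell+2)}=X^{(\ell)}+B_1+B_2$ with $B_1,B_2$ i.i.d.\ Bernoulli$(p_1)$, conditioned on $B_1+B_2\in\{0,1,2\}$, gives
\[
\Pr(\maj u{\ell+2}=1)=p_2^{2}\Pr(X^{(\ell)}\ge s+2)+2p_1p_2\Pr(X^{(\ell)}\ge s+1)+p_1^{2}\Pr(X^{(\ell)}\ge s).
\]
Subtracting $\Pr(\maj u\ell=1)=\Pr(X^{(\ell)}\ge s+1)$ and using $2p_1p_2-1=-(p_1^{2}+p_2^{2})$, the tails telescope to $p_1^{2}\Pr(X^{(\ell)}=s)-p_2^{2}\Pr(X^{(\ell)}=s+1)$, which by the same binomial symmetry equals $\binom{2s+1}{s}p_1^{s+1}p_2^{s+1}(p_1-p_2)\ge0$ since $p_1\ge p_2$. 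Chaining the equality with this inequality gives the opinion-$1$ line of the statement, and complementing through $\Pr(\maj u L=2)=1-\Pr(\maj u L=1)$ gives the opinion-$2$ line.

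I expect the only delicate point to be the bookkeeping of the tie-breaking in the even-size case: one must verify that the random $1/2$ split contributes exactly the boundary terms $\tfrac12 p_1\Pr(X^{(\ell)}=s)$ and $\tfrac12 p_2\Pr(X^{(\ell)}=s+1)$ appearing above, so that the cancellation through $\binom{2s+1}{s}=\binom{2s+1}{s+1}$ is exact rather than merely approximate. Everything beyond that bookkeeping is routine manipulation of binomial tail probabilities.
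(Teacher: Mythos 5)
Your proof is correct. The first half --- the one-ball coupling $X^{(\ell+1)}=X^{(\ell)}+B$ together with the symmetry $\binom{2s+1}{s}=\binom{2s+1}{s+1}$, which makes the two boundary terms cancel exactly --- is essentially the paper's own argument for the equality; the paper encodes the same symmetry as the ratio identity $\Pr(X^{(\ell)}=\lceil\ell/2\rceil)=\frac{\Pr(Y=0)}{\Pr(Y=1)}\Pr(X^{(\ell)}=\lfloor\ell/2\rfloor)$ and then checks that the two tie-breaking contributions sum to $\Pr(X^{(\ell)}=\lceil\ell/2\rceil)$. For the inequality you genuinely diverge. The paper adds a single ball to the \emph{even}-size sample, $X^{(\ell+2)}=X^{(\ell+1)}+Y'$, and observes that the only configuration whose outcome can change is the tie $X^{(\ell+1)}=\frac{\ell+1}{2}$, where the conditional probability that opinion~1 wins jumps from the split value $\frac12$ to $\Pr(Y'=1)=p_1\geq\frac12$; the inequality then follows in one line from the hypothesis $p_1\geq p_2$. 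You instead compare the two odd sizes directly via a two-ball coupling $X^{(\ell+2)}=X^{(\ell)}+B_1+B_2$ and telescope the tails, which costs a little extra algebra (the identity $2p_1p_2-1=-(p_1^{2}+p_2^{2})$ plus the binomial symmetry a second time) but buys an exact expression for the increment, namely $\binom{2s+1}{s}p_1^{s+1}p_2^{s+1}(p_1-p_2)\geq 0$, rather than a bare sign. Both routes are valid: yours is slightly more informative, since it quantifies the gain from two additional samples; the paper's is slightly shorter and isolates the single tie configuration as the source of monotonicity. Your complementation shortcut for the opinion-2 line is also sound, because for $k=2$ the tie-breaking rule guarantees $\Pr(\operatorname{maj}_{L}(u)=1)+\Pr(\operatorname{maj}_{L}(u)=2)=1$ for every sample size $L$, whereas the paper instead repeats its argument with the inequality reversed (using $p_2\leq\frac12$).
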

\begin{proof}
To simplify notation, let $p_{1}=(\mathbf{c}\cdot P)_{1}$
and $p_{2}=(\mathbf{c}\cdot P)_{2}$. By definition, we
have 
\begin{align*}
\Pr\left(\maj u\ell=1\right) & =\Pr\left(X_{1}^{\left(\ell \right)} 
\geq \left\lceil \frac{\ell }{2}\right\rceil \right),\\
\Pr\left(\maj u{\ell+1}=1\right) & 
=\Pr\left(X_{1}^{\left(\ell +1\right)}>\frac{\ell +1}{2}\right) + 
\frac{1}{2}\Pr\left(X_{1}^{\left(\ell +1\right)}=\frac{\ell +1}{2}\right),\\
\Pr\left(\maj u{\ell+2}=1\right) & =\Pr\left(X_{1}^{\left(\ell +2\right)} 
\geq \left\lceil \frac{\ell +2}{2}\right\rceil \right),
\end{align*}

where $X_{1}^{(\ell )}$, $X_{1}^{(\ell +1)}$ and $X_{1}^{(\ell +2)}$
are binomial r.v. with probability $p_{1}$ and number of trials
$\ell ,$ $\ell +1$, and $\ell +2$, respectively. We can view $X_{1}^{(\ell )}$,
$X_{1}^{(\ell +1)}$, and $X_{1}^{(\ell +2)}$ as the
sum of $\ell $, $\ell +1$ and $\ell +2$ $Bernoulli(p_{1})$ r.v., respectively.
In particular,
let $Y$ and $Y^{'}$ be independent r.v. with distribution $Bernoulli(p_{1})$.
We can couple $X_{1}^{(\ell )}$, $X_{1}^{(\ell +1)}$
and $X_{1}^{(\ell +2)}$ as follows:
\[
X_{1}^{(\ell +1)}=X_{1}^{(\ell )}+Y
\]
and
\[
X_{1}^{(\ell +2)}=X_{1}^{(\ell +1)}+Y^{\prime}
\]
 Since $\ell $ is odd, observe that if $X_{1}^{(\ell )}>\left\lceil \frac{\ell }{2}\right\rceil $,
then $\maj u\ell=1$ regardless of the value of $Y$, and similarly if
$X_{1}^{(\ell )}<\left\lceil \frac{\ell }{2}\right\rceil $ then
$\maj u\ell=2$. Thus we have
\begin{align}
\Pr\left(\maj u{\ell+1}=1\right) &= \sum_{i=1}^{\ell }\Pr\left(\maj
u{\ell+1}=1\middle|\,X_{1}^{\left(l\right)}=i\right)\Pr\left(X_{1}^{\left(\ell
\right)}=i\right) \nonumber \\
&= \sum_{i>\left\lceil \frac{\ell }{2}\right\rceil }^{\ell }
\Pr\left(X_{1}^{\left(\ell \right)}=i\right)+
\Pr\left(\maj u{\ell+1}=1\middle|\,X_{1}^{\left(\ell\right)}
=\left\lceil \frac{\ell}{2}\right\rceil
\right)\Pr\left(X_{1}^{\left(\ell\right)}
=\left\lceil \frac{\ell}{2}\right\rceil \right) \nonumber \\
&+\Pr\left(\maj u{\ell+1}=1\middle|\,X_{1}^{\left(\ell\right)}
=\left\lfloor \frac{\ell}{2}\right\rfloor
\right)\Pr\left(X_{1}^{\left(\ell\right)}
=\left\lfloor \frac{\ell}{2}\right\rfloor \right).
\label{eq:majY'}
\end{align}
As for the last two terms in the previous equation, we have that 
\begin{gather}
\Pr\left(\maj u{\ell+1}=1\middle|\,X_{1}^{\left(\ell\right)}=\left\lceil \frac{\ell}{2}\right\rceil \right)=\Pr\left(Y=1\right)+\Pr\left(Y=0\right)\frac{1}{2},
\label{eq:maj_with_up}
\end{gather}
and
\begin{gather}
\Pr\left(\maj u{\ell+1}=1\middle|\,X_{1}^{\left(\ell\right)}=\left\lfloor \frac{l}{2}\right\rfloor \right)=\Pr\left(Y=1\right)\frac{1}{2}.
\label{eq:maj_with_down}
\end{gather}
Moreover, by a direct calculation one can verify that 
\begin{equation}
\Pr\left(X_{1}^{\left(\ell\right)}=\left\lceil \frac{\ell}{2}\right\rceil \right)=\frac{\Pr\left(Y=0\right)}{\Pr\left(Y=1\right)}\Pr\left(X_{1}^{\left(\ell\right)}=\left\lfloor \frac{\ell}{2}\right\rfloor \right).
\label{eq:bin_when_maj}
\end{equation}
From Eq.~(\ref{eq:maj_with_up}), (\ref{eq:maj_with_down}) and (\ref{eq:bin_when_maj})
it follows that 
\begin{align}
&\Pr\left(\maj u{\ell+1}=1\middle|\,X_{1}^{\left(\ell\right)}=\left\lceil
\frac{\ell}{2}\right\rceil
\right)\Pr\left(X_{1}^{\left(\ell\right)}=\left\lceil
\frac{\ell}{2}\right\rceil \right) \nonumber\\
&+\Pr\left(\maj u{\ell+1}=1\middle|\,X_{1}^{\left(\ell\right)}=\left\lfloor
\frac{\ell}{2}\right\rfloor
\right)\Pr\left(X_{1}^{\left(\ell\right)}=\left\lfloor
\frac{\ell}{2}\right\rfloor \right) \nonumber\\
&=\left(\Pr\left(Y=1\right)+\Pr\left(Y=0\right)\cdot\frac{1}{2}\right)
\Pr\left(X_{1}^{\left(\ell\right)}=\left\lceil
\frac{\ell}{2}\right\rceil
\right)+\left(\Pr\left(Y=1\right)\cdot\frac{1}{2}\right)
\Pr\left(X_{1}^{\left(\ell\right)}=\left\lfloor
\frac{\ell}{2}\right\rfloor \right) \nonumber\\
&=\Pr\left(X_{1}^{\left(\ell\right)}=\left\lceil \frac{\ell}{2}\right\rceil \right)\left(\Pr\left(Y=1\right)+\frac{\Pr\left(Y=0\right)}{2}+\frac{\Pr\left(Y=1\right)}{2}\frac{\Pr\left(Y=0\right)}{\Pr\left(Y=1\right)}\right)\nonumber\\
&=\Pr\left(X_{1}^{\left(\ell\right)}=\left\lceil \frac{\ell}{2}\right\rceil \right).
\label{eq:key_of_maj_YvsY'}
\end{align}
By plugging Eq.~(\ref{eq:key_of_maj_YvsY'}) in Eq.~(\ref{eq:majY'}) we get
\[
\Pr\left(\maj u\ell=1\right)=\Pr\left(\maj u{\ell+1}=1\right).
\]
The proof that 
\[
\Pr\left(\maj u\ell=2\right)=\Pr\left(\maj u{\ell+1}=1\right)
\]
is analogous, proving the first part of Eq.~(\ref{eq:maj_monotone}).

As for the second part, observe that if $X_{1}^{(\ell+1)}>\frac{\ell+1}{2}$,
then $\maj u{\ell+2}=1$ regardless of the value of $Y^{\prime}$, and
similarly if $X_{1}^{(\ell+1)}<\frac{\ell+1}{2}$ then $\maj u{\ell+2}=2$.
Observe also that 
\[
\Pr\left(\maj u{\ell+2}=1\middle|\,X_{1}^{\left(\ell+1\right)}=\frac{l+1}{2}\right)=\Pr\left(Y=1\right)=p_{1}.
\]
Because of the previous observations and the hypothesis that $p_{1}\geq\frac{1}{2}$,
we have that
\begin{align}
&\Pr\left(\maj u{\ell+2}=1\right) \nonumber \\
&=\sum_{i=0}^{\ell}\Pr\left(\maj u{\ell+2}=1\middle|\,X_{1}^{\left(\ell+1\right)}=i\right)\Pr\left(X_{1}^{\left(\ell+1\right)}=i\right)\nonumber \\
&=\sum_{i>\frac{\ell+1}{2}}^{l}\Pr\left(X_{1}^{\left(\ell+1\right)}=i\right)+\Pr\left(\maj u{\ell+2}=1\middle|\,X_{1}^{\left(\ell+1\right)}=\frac{\ell+1}{2}\right)\Pr\left(X_{1}^{\left(\ell+1\right)}=\frac{\ell+1}{2}\right)\nonumber \\
&=\sum_{i>\frac{\ell+1}{2}}^{\ell}\Pr\left(X_{1}^{\left(\ell+1\right)}=i\right)+p_{1}\cdot\Pr\left(X_{1}^{\left(\ell+1\right)}=\frac{\ell+1}{2}\right)\nonumber \\
&\geq\sum_{i>\frac{\ell+1}{2}}^{\ell}\Pr\left(X_{1}^{\left(\ell+1\right)}=i\right)+\frac{1}{2}\Pr\left(X_{1}^{\left(\ell+1\right)}=\frac{\ell+1}{2}\right)=\Pr\left(\maj u{\ell+1}=1\right).
\label{eq:key_proof_cfr_bin}
\end{align}
The proof of 
\[
\Pr\left(\maj u{\ell+2}=2\right)\leq\Pr\left(\maj u{\ell+2}=2\right)
\]
is the same up to the inequality in (\ref{eq:key_proof_cfr_bin}),
whose direction is reversed because $p_{2}\leq\frac{1}{2}$.
\end{proof}

\section{\label{apx:eps-smaller}Rumor spreading with $\epsilon=\Theta(n^{-\frac{1}{4}-\eta})$}

In \cite{FHK15} it is shown that at the end of
Stage 1 the bias toward the correct opinion is at least $\epsilon^{T+2}/2$ and,
at the beginning of Stage 2, they assume a bias toward the correct opinion of
$\Omega(\sqrt{\log n/n})$. In this section, we show that, when 
$\epsilon=\Theta(n^{-\frac{1}{4}-\eta})$ for some $\eta \in (0, 1/4)$, 
the protocol considered by \cite{FHK15} and us cannot solve the 
rumor-spreading and the plurality consensus problem in time 
$\Theta(\log n/\epsilon^{2})$.

First, observe that when $\epsilon=\Theta(\sqrt{\log n/n})$ the
length of the first phase of Stage 1 is $\Theta\left(\log
n/\epsilon^{2}\right)=\Omega(n\log n)$, which implies that, w.h.p.,
each node gets at least one message from the source during the first phase.
Thus, thanks to our analysis of Stage 2 we have that when
$\epsilon=\Theta(\sqrt{\log n/n})$ the protocol effectively solves
the rumor-spreading problem, w.h.p., in time $\Theta(\log n/\epsilon^{2})$.

In general, for $\epsilon < n^{-1/2-\eta}$ for some constant $\eta>0$, if we
adopt the second stage right from the beginning (which means that the source
node sends $\epsilon^{-2}$ messages), we get that, w.h.p., all nodes receive at
least $\log n /( \epsilon^2 n )$ messages. Thus,
by a direct application of \lemref{cb-diff}, after the first phase we get
an $\sqrt{ \log n / n}$-biased \config, w.h.p., and Stage 2 correctly 
solves the problem according to \thmref{general_majority_consensus}.

However, when $\epsilon=\Theta(n^{-\frac{1}{4}-\eta})$ for some
$\eta>0$, from \claimref{bootstrap} and \lemref{correct-stage1} we have that,
after phase 0 in \config $\mathbf{c}$, at most $\bigo \left(\log
n/\epsilon^{2}\right) =\bigo (n^{\frac{1}{2}+2\eta}\log n)$ nodes
are opinionated, and $\mathbf{c}$ is $\frac{\epsilon}{2}$-biased. Each node
that gets opinionated in phase 1 receives a message pushed from some node of
$\mathbf{c}$, and, because of the noise, the value of this message is distributed
according to $\mathbf{c}^{(\tau_{0})}\cdot P$. It follows that
$\mathbf{c}$ is an $\epsilon^{2}/2$-biased \config with
$\epsilon^{2}=n^{-\frac{1}{2}-2\eta}$ which is much smaller than the
$\Omega(\sqrt{\log n/n})$ bound required for the second stage. 

We believe that no minor modification of the protocol proposed here can
correctly solve the noisy rumor-spreading problem when $\epsilon =\Theta
(n^{-\frac{1}{4}-\eta})$ in time $\bigo \left( \log n/\epsilon^2
\right)$.
\end{document}